\documentclass[11pt,letterpaper]{article}
\usepackage[margin=1in]{geometry}
\usepackage[T1]{fontenc}
\usepackage{lmodern}
\usepackage{amsmath,amssymb,amsthm,mathtools}
\usepackage{booktabs}
\usepackage[shortlabels]{enumitem}
\usepackage{microtype}
\usepackage{xcolor}
\usepackage{tikz}
\usetikzlibrary{patterns}
\usepackage[colorlinks=true,linkcolor=blue!50!black,citecolor=blue!50!black,urlcolor=blue!50!black]{hyperref}
\usepackage[nameinlink,capitalise]{cleveref}

\hypersetup{
  pdftitle={A QPTAS for Stochastic Scheduling of Bernoulli Jobs},
  pdfauthor={Junho Hwang}
}

\newtheorem{theorem}{Theorem}[section]
\newtheorem{lemma}[theorem]{Lemma}
\newtheorem{corollary}[theorem]{Corollary}

\theoremstyle{definition}

\theoremstyle{remark}

\newcommand{\OPT}{\mathrm{OPT}}
\newcommand{\E}{\mathbb{E}}
\DeclareMathOperator{\cost}{cost}

\title{A QPTAS for Stochastic Scheduling of Bernoulli Jobs}
\author{Junho Hwang}
\date{}

\begin{document}

\maketitle
\thispagestyle{empty}

\begin{abstract}
\normalsize
We study the classical problem of scheduling jobs with random processing times on $m$ identical machines to minimize the expected sum of completion times, for Bernoulli jobs: job $j$ takes time $p_j$ with probability $q_j$ and time $0$ otherwise, and its outcome is revealed when it starts.
The benchmark is an optimal adaptive policy.
We give a quasi-polynomial-time approximation scheme for every number of machines.
Previously, quasi-polynomial time was known to give an $O(\log N)$-approximation, and approximation schemes were known only for a constant number of distinct sizes.
The scheme rests on a simple observation: it suffices to round the times at which machines become free, rather than the times at which jobs start, to a grid whose width is proportional to the job size, and an optimal policy stretched by a factor close to one already respects such grids.
We also show that every policy that fixes the order of the jobs in advance loses a factor $\Omega(\log N)$, already on two machines, so a constant factor requires adapting the order to the observed outcomes.
Further results include a polynomial-time adaptive rule with ratio $\min\{m,1+\sum_jq_j\}$, a simpler approximation scheme for a constant number of sizes, and \#P-hardness of computing the optimal expected cost on two machines.

\end{abstract}

\newpage
\tableofcontents
\newpage

\section{Introduction}\label{sec:intro}

Scheduling jobs on identical parallel machines to minimize the sum of completion times is one of the most basic problems in scheduling.
With deterministic processing times it is solved by the shortest-processing-time rule~\cite{Smi56,BCS74}.
When processing times are random and only their distributions are known in advance, the scheduler is a \emph{policy}: it decides which job to start next based on what it has observed so far.
The benchmark is the best such policy, which may adapt its decisions to the observed processing times.
The problem has been studied for decades~\cite{Rot66,WP80,MRW84,WVW86}, and approximation algorithms are known for broad classes of distributions~\cite{MSU99,SU05,MUV06,SSU16,GMUX20,GMUX21}.
Their guarantees depend on an upper bound $\Delta$ on the squared coefficient of variation of the processing times, and they become weak when processing times have high variance.

A clean way to isolate the difficulty of high variance is to consider \emph{Bernoulli jobs}.
Job $j$ has processing time $p_j$ with probability~$q_j$ and $0$ otherwise, and the outcome becomes known when the job starts.
Such a job either finishes immediately or blocks its machine for a long time, and a good policy has to decide when to risk such a blocking.
Even for Bernoulli jobs, constant-factor approximations have been elusive.
For general distributions, Im, Moseley, and Pruhs~\cite{IMP15} gave a list policy with ratio $O(\log^2N+m\log N)$, and a constant factor with $(1+\varepsilon)m$ machines when $m=\Omega(\log N)$.
Eberle, Fischer, Matuschke, and Megow~\cite{EFMM19} showed that index policies lose a factor $\Omega(\Delta^{1/4})$, and gave an $O(m)$-approximation when the instance consists of identical deterministic jobs and identical Bernoulli jobs.
Gupta, Moseley, and Zhou~\cite{GMZ23} gave a list policy whose ratio is $\widetilde O(\sqrt m)$, where $\widetilde O$ hides factors polylogarithmic in~$N$.
Recently, Antoniadis, Hoeksma, Schewior, and Uetz~\cite{AHSU25} gave a polynomial-time approximation scheme (PTAS) when the number of distinct sizes $p_j$ is a constant.
They also gave a quasi-polynomial-time $O(\log N)$-approximation for arbitrary sizes, and left open whether constant factors are possible.
On the complexity side, Moseley, Pruhs, Uetz, and Zhou~\cite{MPUZ26} showed that computing the optimal expected cost, deciding whether it is at most a given value, and computing the expected cost of SEPT are \#P-hard for jobs with two-point distributions.

\subsection{Our results}

Throughout, $N$ is the number of jobs, $m<N$ the number of machines, $\mathcal L$ the input length, $P=\sum_jq_jp_j$ the expected total processing time, and $\OPT$ the expected sum of completion times of an optimal adaptive policy.
Our main result is an approximation scheme in quasi-polynomial time for every number of machines.

\begin{theorem}\label{thm:main}
For Bernoulli jobs on $m$ identical machines and every $\eta\in(0,1]$, a policy with expected sum of completion times at most $(1+\eta)\OPT$ can be computed in time $2^{O(\eta^{-2}\log^2(N/\eta))}\cdot\mathrm{poly}(\mathcal L)$.
Each decision of the policy is a table lookup.
\end{theorem}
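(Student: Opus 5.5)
The plan is to build a dynamic program over a compressed state space and show that some near-optimal policy lives inside it. As the abstract suggests, we round the times at which machines become free, not the times at which jobs start.

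\textbf{Step 1: preprocessing.} Round each $p_j$ up to a power of $(1+\eta)$. Discard jobs with $q_jp_j$ negligible relative to $\eta P/N^2$, scheduling them first. Group sizes into geometric classes. Only $O(\eta^{-1}\log(N/\eta))$ size classes are relevant at any time scale, because jobs much smaller than the current horizon contribute little, and jobs much larger can be postponed.

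\textbf{Step 2: structural lemma.} Take an optimal adaptive policy and stretch time by a factor $1+O(\eta)$: every event at time $t$ is moved to $(1+\eta)t$. The expected cost grows by at most a factor $1+O(\eta)$, since completion times scale linearly and zero-length outcomes stay at their (stretched) start times. After stretching, each machine that becomes busy with a job of size $p$ at time $t$ has slack $\eta t$. We then snap its free-time to a grid of width $\varepsilon p$, or of width proportional to the size of the next job it will run. The required condition is $\varepsilon p\le \eta t$, which holds whenever $p\lesssim t/\eta'$.

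Jobs larger than $t/\eta$ that start at time $t$ need separate handling. I would argue these are rare: at most $O(1/\eta)$ such starts per machine per doubling of time. Alternatively, a charging argument shows they can be delayed to a later grid point at cost $\eta$ times their own completion time. This lemma is the main obstacle. The difficulty is that rounding free times must not change which outcomes are observed before a decision, so adaptivity must be preserved. I would fix this by simulating the stretched policy with a delayed clock: each decision is taken at the original decision time, mapped forward, and nothing is ever moved earlier.

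\textbf{Step 3: the DP.} A state records three things. First, the multiset of rounded machine free-times, each expressed as (size class, offset in units of $\varepsilon p$, bounded by $O(1/\varepsilon^2)$). Second, the current time scale. Third, the remaining jobs. Jobs within a size class are interchangeable up to $q_j$, and within a class an exchange argument shows it is optimal to start them in decreasing order of $q_j$ (Smith-like ratio). Hence the remaining jobs are described by a pointer per class, giving $N^{O(\#\text{classes})}$ options.

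Machines with identical rounded free-times are symmetric, so the machine profile is a vector of counts over $O(\eta^{-2}\log(N/\eta))$ grid cells. Only machines becoming free within the current window matter, and there are at most $N$ of them, which gives $N^{O(\eta^{-2}\log(N/\eta))}=2^{O(\eta^{-2}\log^2(N/\eta))}$ states. Transitions branch on the Bernoulli outcome of the started job, and expected costs are computed by backward induction. Storing the argmin at each state makes each policy decision a table lookup.

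\textbf{Step 4: combine.} Step 2 shows the DP optimum is at most $(1+O(\eta))\OPT$. The DP policy executed with true (unrounded) times only finishes earlier than its rounded model, so its true cost is at most its DP value. Rescaling $\eta$ by a constant then gives the stated bound $(1+\eta)\OPT$.
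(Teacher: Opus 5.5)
There are genuine gaps. The overall shape (stretch an optimal policy, round machine free times to size-proportional grids, run a DP over multisets of release times and per-class pointers) matches the paper, but the structural step is not established.

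First, your slack computation is wrong. Suppose a positive job of size $p$ starts at $t$ and the next job on its machine starts at $t'\ge t+p$. After stretching by $\gamma=1+\eta$, the job occupies $[\gamma t,\gamma t+p)$ and the next start is at $\gamma t'\ge\gamma t+\gamma p$. So the guaranteed slack is $\eta p$, proportional to the job's own size, and not $\eta t$ (take $t'=t+p$). With the correct slack, you can reserve the machine until the first point of $(\eta p)\mathbb Z$ at or after $\gamma t+p$ for every job, with no exceptions. The ``jobs larger than $t/\eta$'' that you set aside are an artifact of the miscalculation; they include every job started at time~$0$. Neither the ``$O(1/\eta)$ per doubling'' bound nor the charging argument is proved.

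Second, the stretched policy idles and starts jobs at the arbitrary times $\gamma S^{*}_j$. A DP whose state is a multiset of rounded free times cannot represent it, and insisting that ``nothing is ever moved earlier'' is exactly what prevents this. The missing idea is compression (Lemma~\ref{lem:stretch}). Start the same jobs in the same order, each at the current minimum release time. Then show by induction that the sorted release vector stays coordinatewise below that of the stretched policy: replacing the minimum entries by $R_p(s_k)\le R_p(\tau_k)$ and re-sorting preserves domination, so start times only decrease. Adaptivity is no obstacle, because the $k$-th job of the order depends only on the outcomes of the first $k-1$ jobs. You also never bound the number of distinct absolute decision times. The DP needs this because the grids of different sizes are not aligned, so states cannot be normalized by the current time.

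Step~1 does not give the DP what it needs either. Discarding jobs of small mean does not bound the ratio of the sizes. The claim that ``jobs much larger can be postponed'' is false: a job with huge $p_j$ and tiny $q_j$ can have mean~$1$ and should be tested first. Postponing $N/2$ such jobs behind $N/2$ deterministic unit jobs on two machines loses a constant factor. Without a global bound of $O(\eta^{-1}\log(N/\eta))$ on the number of size classes, neither the pointer count $N^{O(n)}$ nor the number of decision times is quasi-polynomial. The claim that few classes are relevant ``at any time scale'' is unsupported, since an optimal policy may leave jobs of every class pending at any time.

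The paper instead works as follows (Theorem~\ref{thm:reduction}). It tests the jobs with $p_j\ge NP/\varepsilon$ first on one machine. It list-schedules those with $p_j<\varepsilon P/N^2$ at time~$0$, waits until $\varepsilon P/N$, and runs the DP on the remaining jobs, whose size ratio is below $N^3/\varepsilon^2$. The event that a large job is positive is paid for by $F_k\le W_{m-1}(S_k)$ together with $\sum_{\ell\in L}q_\ell\le\varepsilon/N$. The comparison with the optimum uses $\OPT\ge\OPT(M)+P(L\cup Q)$, so only an additive $2\varepsilon$ is lost.

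Finally, your within-class order is reversed. Some optimal policy starts equal-size jobs in non-decreasing order of $q_j$ (Lemma~\ref{lem:qorder}). Restricting the DP to decreasing $q_j$ can cost a factor close to $3$ already on one machine, for instance when deterministic unit jobs are forced before unit jobs with near-zero probability.
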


This improves the quasi-polynomial $O(\log N)$-approximation of~\cite{AHSU25} to a ratio of $1+\eta$.
The same dynamic program, without the size reduction, gives a short proof of the PTAS of~\cite{AHSU25} for a constant number of sizes, with a running time whose exponent is linear in $1/\eta$.

\begin{theorem}\label{thm:fewsizes}
If the jobs have $n$ distinct sizes, a $(1+\eta)$-approximation can be computed in time $(Nm)^{O(n/\eta)}\cdot\mathrm{poly}(\mathcal L)$ for fixed $n$ and~$\eta$.
\end{theorem}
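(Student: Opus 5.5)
We prove \cref{thm:fewsizes} with the dynamic program behind \cref{thm:main}, minus the size reduction. The key idea is to round, for each size class separately, the \emph{times at which machines become free} rather than the times at which jobs start.

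Let the distinct sizes be $s_1<\dots<s_n$. Within each class the jobs differ only in $q_j$. By an exchange argument, some optimal policy starts the jobs of a class in a fixed order of decreasing $q_j$. A probability-$q$ job of that size contributes its own completion plus a blocking term, and the usual pairwise interchange shows that starting the more likely zero job first (smaller $q$) is best. I expect the order to be increasing $q$ (that is, SEPT within a class), but I will confirm the direction with the interchange. Whichever direction it is, the set of started jobs of class $i$ is then a prefix, described by one counter $k_i\le N$. There are $N^n$ such counters.

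The state of the system at a decision epoch consists of these counters together with the multiset of remaining busy times of the $m$ machines. To make the latter finite, fix $\delta=\Theta(\eta)$. A machine that is busy with a class-$i$ job has its free time rounded up to a grid of width $\delta s_i$. Since such a job occupies the machine for length $s_i$, its free time lies in a window of $1/\delta$ grid points after the current time. The state therefore records, for each class $i$ and each offset $t\in\{1,\dots,1/\delta\}$, how many machines become free at that offset. That gives at most $(m+1)^{n/\delta}$ configurations.

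The delicate point is that the classes have different grid widths, so the current time must be tracked relative to each class. I will handle this by processing epochs in increasing time: at each step we advance to the next machine-free event, and because every gridded event is recorded as an offset, advancing shifts the offsets of all classes consistently. To keep the state space small, only the offsets matter and not the absolute time; the absolute time enters only additively in the cost, which works because costs are sums of completion times. The DP value is the minimum expected remaining cost. Its transitions are: choose which class's next job to start on a free machine, branch on the Bernoulli outcome, and update the counters and offsets. The state count is $(Nm)^{O(n/\delta)}$, and each transition is polynomial, which yields the claimed bound. Recording the argmin in a table makes each decision a lookup.

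Correctness rests on two inequalities. First, the rounded DP is implementable with cost at most its value: we actually idle a machine until its rounded free time, which only makes things later, and the real process is dominated by the DP's bookkeeping. Second, and this is the main obstacle, the optimal DP value is at most $(1+O(\delta))\OPT$. For this I take an optimal policy and stretch time by a factor $1+\delta$, as the abstract indicates. A job of size $s_i$ finishing at time $C$ now finishes at $(1+\delta)C\ge C+\delta s_i$ whenever $C\ge s_i$. That slack suffices to delay its machine-free time up to the next $\delta s_i$ grid point without shifting any later start. Zero-outcome jobs incur no rounding at all.

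The stretched policy is then feasible for the gridded state space and adaptive in the same information. Its cost is at most $(1+\delta)$ times that of the optimal policy, plus the rounding, which is absorbed by the slack. Expected completion time scales linearly, so the cost is at most $(1+\delta)\OPT$. Choosing $\delta=\eta/c$ then completes the proof.
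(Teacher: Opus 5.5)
\textbf{Gap: absolute time cannot be dropped from the state.} Your ingredients match the paper's: prefix counters per class, release times rounded to the grid $\delta s_i\mathbb{Z}$ of the job's class, and stretching by $1+\delta$. But discarding absolute time fails. The grids of different classes are not aligned, so an integer offset per class does not determine where a release lies relative to the current time~$t$. A class-$i$ release at offset $a$ sits at $\delta s_i(\lceil t/(\delta s_i)\rceil+a)$, and its distance to $t$ depends on the phase of $t$ modulo $\delta s_i$, separately for each class.

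For example, take $\delta=1/2$ and sizes $1$ and $3/2$, so the widths are $1/2$ and $3/4$. Offset~$2$ of the first class lies $1$ after $t$ both for $t=0$ and for $t=1$. Offset~$1$ of the second class lies $3/4$ after $t=0$ but $5/4$ after $t=1$. So even the order of these two events flips.

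Consequently the waiting time charged to the cost, the identity of the next event, and the offsets after advancing are not functions of your state. The rounding rule is invariant only under shifts by common multiples of all grid widths, so absolute time does not ``enter only additively.'' The paper keeps the absolute sorted release vector in the state and supplies the count you are missing. Lemma~\ref{lem:times} shows that at most $\binom{(K+1)N+n-1}{n}$ release times are reachable, which is polynomial for fixed $n$ and $K$. It does this by a block encoding of the compositions $R_{c_k}\circ\dots\circ R_{c_1}(0)$ that generate all release times. Your offset count is Lemma~\ref{lem:profiles}. Only the product of the two factors, together with the counters, gives $(Nm)^{O(n/\eta)}$, and it is the bound on reachable times that requires a real argument.

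\textbf{Second gap: comparing with $\OPT$ needs compression.} The stretched policy starts jobs at $(1+\delta)S_j$, which is generally not a release time. So it idles, and it is not among the policies your DP optimizes over, since the DP only starts jobs at machine-free events. You need the compression step of Lemma~\ref{lem:stretch}:
\begin{itemize}
\item start the same jobs in the same order, each at the current minimum release time;
\item prove by induction that the sorted release vector stays coordinatewise below that of the stretched policy, taken as a maximum with its current start time, using that $R_p$ is non-decreasing.
\end{itemize}
This yields $S_j\le(1+\delta)S^*_j$ on every realization and preserves the class orders.

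\textbf{Within-class order.} The order is non-decreasing $q_j$. It cannot be justified by a plain pairwise interchange applied to a given policy, because such an interchange can increase the cost; see the example after Lemma~\ref{lem:qorder}. Only the existential statement holds, and it needs the randomized exchange in Lemma~\ref{lem:qorder}.
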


In polynomial time, we obtain a ratio that does not depend on~$N$.

\begin{theorem}\label{thm:mapprox}
A simple adaptive priority rule runs in polynomial time and has expected cost at most $\min\{m,1+\sum_jq_j\}\cdot\OPT$.
\end{theorem}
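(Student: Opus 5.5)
The rule I will analyze is adaptive SEPT: order the jobs by nondecreasing expected size $q_jp_j$, and whenever a machine is idle, start the first unstarted job on it. A job whose outcome is $0$ finishes at once, so the machine immediately takes the next job. In effect each machine keeps drawing jobs from the common list until it draws a long one. The rule is adaptive only in that which machine receives which job depends on the observed outcomes. It clearly runs in polynomial time. I will prove the two bounds $m\cdot\OPT$ and $(1+\sum_jq_j)\cdot\OPT$ separately.

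\emph{Ratio $m$.} First, the cost of the rule is at most the cost $\mathrm{SEPT}_1$ of the same list run on a single machine. This holds pathwise: with list scheduling, job $k$ starts on some machine no later than the total realized processing time of the jobs before it in the list, which is exactly its start time on one machine. Second, I will show $\mathrm{SEPT}_1\le m\cdot\OPT$, via two steps.
\begin{itemize}
\item Take an optimal $m$-machine policy and simulate it on one unit-speed machine by processor sharing. The $m$ virtual machines share the real one, each running at rate $1/m$. Every completion time is stretched by exactly a factor $m$, and the simulation is nonanticipative. This yields a \emph{preemptive} single-machine policy of cost $m\cdot\OPT$.
\item Show that on one machine preemption and adaptivity do not help for Bernoulli jobs, so SEPT is optimal even against preemptive policies. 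Once a job has started, its remaining time is known. An interchange argument then lets me first finish every started job without interruption (shortest remaining first). After that, the standard pairwise exchange on unstarted jobs, ordering by $q_jp_j$, applies.
\end{itemize}
I expect this second step to be the main obstacle. Care is needed because a preemptive policy may start a job only to learn its outcome, which is free for a zero-outcome job. My plan is to argue that starting a job is equivalent to revealing it, and that revealing it earlier never hurts SEPT's pairwise comparison. If this fails, I would instead prove $\mathrm{SEPT}_1\le m\cdot\OPT$ directly. For that I would write $\mathrm{SEPT}_1=\sum_j q_jp_j+\sum_{i<j}\min(q_ip_i,q_jp_j)$ and lower-bound $\OPT$ by a pairwise count of which of two jobs blocks the other, charging each pair to one of at most $m$ machines.

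\emph{Ratio $1+\sum_jq_j$.} I use the trivial lower bound $\OPT\ge P=\sum_jq_jp_j$, which holds because a long job $j$ completes no earlier than $p_j$. Under the rule, the start time of job $j$ is at most the total length of the long jobs started before it, so $E[C_j]\le q_jp_j+\sum_{i<j}q_ip_i$. Summing this naively overcounts. I will therefore charge more carefully: the long outcome of job $i$ delays job $j$ only if $j$ is still waiting when $i$ starts. I would bound the expected total delay caused by $i$ by $q_ip_i$ times the expected number of \emph{long} jobs it delays, plus the zero-outcome jobs, which are absorbed because in $\OPT$ every job also waits behind some long job. The target inequality is $\mathrm{cost}\le P+\sum_j q_j\cdot\OPT$, obtained by comparing each long job's delay contribution to $\OPT$. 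This part is the one I am least sure of, and the exact charging scheme may need adjustment.
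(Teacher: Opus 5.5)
\textbf{There is a genuine gap: the rule you chose cannot satisfy either bound.} Sorting once by $q_jp_j$ and list-scheduling fixes the job order before any outcome is seen, so your rule is a list policy in the sense of Section~\ref{sec:barriers}. Theorem~\ref{thm:list} therefore applies to it. On two machines, with $M=2^b$ jobs of each type $i<b$, $p_i=M^2/4^i$ and $q_i=2^i/M^2$, every list policy costs at least $Mb^2/128$, while $\OPT\le 6Mb$.

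On this instance $\sum_jq_j=(M-1)/M<1$, so $\min\{m,1+\sum_jq_j\}<2$, whereas SEPT has ratio at least $b/768$. So no charging scheme can rescue your $1+\sum_jq_j$ part. The inequality $\mathrm{SEPT}_1\le m\cdot\OPT$ behind your first part is false too, since here $\mathrm{SEPT}_1\ge\mathrm{SEPT}_2\ge Mb^2/128$.

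The step that breaks is your second bullet, where you already expected trouble. With preemption, a single-machine policy can start every job at time $0$ for an instant. This completes the zero jobs and identifies the positive ones for free, and the policy then runs only the positive jobs in SPT order. For $N$ jobs with $p_j=N$ and $q_j=1/N$, this costs less than $3N/2$, while $\mathrm{SEPT}_1=N(N+1)/2$. Your pathwise comparison of $m$-machine and one-machine list scheduling is fine, and so is the processor-sharing simulation. But the preemptive single-machine optimum is no proxy for non-preemptive SEPT.

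The missing idea is that the order itself must react to observed outcomes. The paper's rule $G$ tracks the minimum $\theta$ of the $m-1$ largest realized processing times. It starts a job minimizing $(q_j\min\{p_j,\theta\},q_j,\rho(j))$. Initially $\theta=0$, so it tests the jobs of smallest $q_j$ first, and the priorities change as positive outcomes raise $\theta$.

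The analysis uses the surrogate $H(\Pi)=\E[\sum_kW_{m-1}(S_k)]$.
\begin{enumerate}[label=(\roman*)]
\item Lemma~\ref{lem:freetime} gives $\sum_j\E[S^\Pi_j]\le H(\Pi)\le m\sum_j\E[S^\Pi_j]$ for non-idling $\Pi$.
\item An exchange argument shows that $G$ minimizes $H$ exactly (Lemma~\ref{lem:greedyopt}). Hence $\cost(G)-P\le H(G)\le H(\Pi^*)\le m(\OPT-P)$.
\item For the second bound, the paper shows $H(\Pi)\le(1+\Lambda)\sum_j\E[S^\Pi_j]$ for every non-idling $\Pi$. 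The key fact is that the $(m-1)$-st largest realized value is at most the second earliest machine free time $\ell_2(k)$. This gives the per-step inequality $W_{m-1}(S_{k+1})-W_{m-1}(S_k)\le F_{k+1}-F_k+\chi_kF_k$.
\end{enumerate}
So the comparison is with $\OPT-P$ through $H(\Pi^*)$, not with the lower bound $\OPT\ge P$ alone.
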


Earlier polynomial-time guarantees for Bernoulli jobs with arbitrary sizes grow with~$N$~\cite{IMP15,GMZ23}.
Theorem~\ref{thm:mapprox} gives a constant factor for every fixed number of machines, and the factor $2$ for two machines.
The second bound is useful when few jobs are expected to be positive.

Exact computation is hard already for two machines, while approximate evaluation of a given policy is easy.

\begin{theorem}\label{thm:hardintro}
For Bernoulli jobs on two machines, computing $\OPT$, deciding whether $\OPT\le x$, and computing the expected cost of the shortest-expected-processing-time rule (SEPT) are \#P-hard.
On the other hand, the expected cost of any given non-idling policy can be approximated within a factor $1\pm\rho$ with probability $1-\delta$ from $O(N\rho^{-2}\log(2/\delta))$ simulations.
\end{theorem}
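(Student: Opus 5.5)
The statement has two parts, and I will handle them separately: a reduction from a known \#P-hard problem for the hardness part, and a Hoeffding/Chernoff argument for the simulation part.

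\textbf{Hardness.} I will reduce from a \#P-hard counting problem with subset-sum structure, such as counting subsets of given integers $a_1,\dots,a_k$ whose sum is at most a threshold $T$. This is \#P-hard, and by standard interpolation so is computing the weighted sum $\sum_{S}\prod_{i\in S}q_i\prod_{i\notin S}(1-q_i)\,[\sum_{i\in S}a_i\le T]$ for rational $q_i$.

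\emph{Gadget.} Use two machines and three kinds of jobs.
\begin{itemize}
\item Bernoulli jobs with sizes $p_i=a_i$ and small probabilities $q_i$.
\item One deterministic long job of size $T'$, chosen slightly larger than $T$ and encoded as a Bernoulli job with $q=1$.
\item A probe job, also deterministic, chosen so that SEPT processes the small Bernoulli jobs on one machine while the long job occupies the other.
\end{itemize}
SEPT orders jobs by $q_jp_j$, so the scale parameters can force this order. The machine on which the probe starts then depends on whether the realized sum $\sum_{i\in S}a_i$ exceeds $T$. Hence the expected cost equals an affine function of the desired probability plus explicitly computable terms. Computing those terms is easy, because the remaining contributions are sums of expectations of individual completion times, which can be computed in polynomial time via linearity.

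\emph{From SEPT to $\OPT$.} I will tune the gadget so that SEPT, or at least a policy whose cost has the same counting form, is optimal. One way is to make every Bernoulli probability tiny and use an exchange argument showing that any deviation costs a strictly positive, bounded-below amount. Deciding whether $\OPT\le x$ then follows by binary search over $x$: the optimal cost is a rational number with polynomially bounded denominator, so polynomially many decision queries recover it.

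\emph{Main obstacle.} The hard step is proving that the optimal adaptive policy coincides with SEPT on the gadget. My first attempt is to make the deterministic jobs so large, and the Bernoulli jobs so light in expectation, that any policy putting a deterministic job before an unprocessed Bernoulli job loses more than the total possible gain. If this fails, I will instead cite the two-point-distribution hardness of~\cite{MPUZ26} and adapt their reduction, noting that a two-point job $\{a,b\}$ can be simulated by a deterministic job of size $a$ followed by a Bernoulli job of size $b-a$, provided precedence can be enforced by cost weights.

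\textbf{Simulation.} For a non-idling policy $\pi$, let $C$ be its realized sum of completion times, and generate independent samples by simulating the coin flips. I will bound the variance via the second moment. For a non-idling policy each completion time is at most the total realized processing $X=\sum_j p_jB_j$, so $C\le N X$. Moreover $\E[C]\ge \E[X]$, since each job's completion time is at least its own size. Then
\[
\E[X^2]\le \E[X]^2+\sum_j q_jp_j^2\le \E[X]^2+\max_j p_j\,\E[X].
\]
This is not automatically $O(\E[X]^2)$ when all $q_j$ are small, so I will instead use a sharper lower bound, $\E[C]\ge \sum_j q_jp_j\cdot(\text{something})$. A cleaner route is to write $C=\sum_j C_j$ and use $C_j\le X$, which gives $\E[C^2]\le N^2\E[X^2]$. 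If the bare second moment is insufficient, the fallback is a median-of-means estimator with $O(\log(1/\delta))$ groups, each of size $O(\mathrm{Var}/(\rho^2\E[C]^2))$.

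The target is $\mathrm{Var}(C)/\E[C]^2=O(N)$, which I expect to get from the bound $\E[C]\ge \frac{1}{m}\cdot\sum_j q_j p_j\cdot(\text{rank terms})$ combined with $C\le N X$. Chebyshev plus median-of-means then gives the stated $O(N\rho^{-2}\log(2/\delta))$ sample count.
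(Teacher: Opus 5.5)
\textbf{Hardness.} Your gadget does not work as described. SEPT starts jobs in non-decreasing order of $q_jp_j$. Your deterministic job of size $T'>T$ has mean $T'$, which exceeds every item mean $q_ia_i$ (items with $a_i>T$ can be discarded), so SEPT starts it only after all item jobs. The positive items are then spread over both machines by list scheduling, and no event $\{\sum_{i\in S}a_i>T\}$ is isolated. A deterministic probe that SEPT starts early must be short, so it cannot block a machine either. The missing gadget is a blocker: a job of large size $L$ and tiny probability $q$. Its mean $qL$ is the smallest, so SEPT starts it first, and when it is positive the items run one after another on the other machine.

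Second, your ``explicitly computable terms'' are not computable. On two machines every $\E[C_j]$ depends on subset-sum events, and linearity does not help. The paper cancels these terms by comparing two instances that differ only in the blocker size, $L_+-L_-=\tfrac14$. Under a coupling the two schedules differ only when $b$ and $H$ are positive and $W>B_0$, which gives $\OPT(I_{L_+})-\OPT(I_{L_-})=\tfrac q4\bigl(1+K_{\mathrm{NO}}/2^{\bar k}\bigr)$.

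Third, you leave SEPT-optimality open, and this is the core of the argument. The paper first shows that some optimal policy starts $b$ first: a positive $b$ costs at most $q\cdot8\bar k^2B=1/8$ in expectation. It then applies the Weber--Varaiya--Walrand theorem, extended to Bernoulli jobs by smoothing, to the remaining jobs from the availability vectors $(0,0)$ and $(0,L)$. These jobs are stochastically ordered because every item has probability $1/2$, $h=2S$ exceeds every $a_i$, and $d$ is deterministic with $B+1>h$. Your fallback also fails. The two-point instances of Moseley, Pruhs, Uetz and Zhou use $m=N/2$ machines, and the unweighted objective has no cost weights that could force a deterministic part and a Bernoulli part to run consecutively on one machine.

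\textbf{Simulation.} Plain Monte Carlo cannot give a sample bound that depends only on $N$, so your target $\E[C^2]/\E[C]^2=O(N)$ is false. Since $C=0$ whenever all jobs are zero, $\Pr[C>0]\le\Lambda=\sum_jq_j$. Cauchy--Schwarz gives $\E[C]^2\le\E[C^2]\Pr[C>0]$, hence $\E[C^2]/\E[C]^2\ge1/\Lambda$, which is unbounded when all $q_j$ are tiny. Concretely, with at most $1/(2\Lambda)$ samples, all samples are $0$ with probability at least $1/2$, and then the empirical mean is $0$. No lower bound on $\E[C]$ and no median-of-means step repairs this.

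The missing idea is importance sampling. Choose a job $i$ with probability $\mu_i/P$, set $X_i=p_i$, and draw all other outcomes from their own distributions. The density of this law with respect to the true one is $V/P$, where $V=\sum_jX_j$ is your $X$, so $Z=PC/V$ is unbiased. Your own bounds $C\le NV$ and $\E[C]\ge P$ then give $Z\in[0,NP]$ with mean at least $P$. The multiplicative Chernoff bound applied to $Z/(NP)$ yields the stated $\lceil3N\rho^{-2}\ln(2/\delta)\rceil$ simulations.
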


The hardness part adapts the construction of~\cite{MPUZ26}, which uses jobs with two-point distributions on $m=N/2$ machines, to Bernoulli jobs on two machines.

Finally, we show that list policies cannot reach a constant factor.

\begin{theorem}\label{thm:barriersintro}
There are instances on two machines in which every policy that fixes the job order in advance costs $\Omega(\log N)\cdot\OPT$.
\end{theorem}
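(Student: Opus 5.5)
The plan is to build a two-machine instance with $L=\Theta(\log N)$ geometric scales of Bernoulli jobs, designed so that an adaptive policy pays $O(1)$ per scale in normalized terms, while any fixed order must pay $\Omega(1)$ \emph{extra} at every scale. Summing over scales then gives the $\Omega(\log N)$ gap. The mechanism to exploit is this: on two machines, once a risky job turns out positive and blocks one machine, the right continuation is to let the free machine drain the cheap jobs. If the risky job turns out to be $0$, one should instead keep taking risks. A list policy commits to one interleaving of risky and cheap jobs, and that interleaving cannot be right in both branches.

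\textbf{Construction.} Fix a base $B\ge 2$ and scales $i=1,\dots,L$. Scale $i$ consists of a group of $n_i$ Bernoulli jobs of size $p_i=B^i$ and probability $q_i$, with $n_iq_i=\Theta(1)$, so each group has about one positive job in expectation. The group sizes $n_i$ are chosen decreasing geometrically, so that the number of jobs waiting behind scale $i$ balances the delay $p_i$. The target is that, in the optimal adaptive schedule, every scale contributes the same amount $\Theta(C)$ to the cost, which makes $\OPT=\Theta(LC)$ plus an unavoidable baseline. I would fix the parameters by writing down this balance condition and solving for $n_i,q_i$, keeping $N=\sum_i n_i$ polynomial in $B^L$ so that $L=\Theta(\log N)$. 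The exact choice of $B$, $n_i$ and $q_i$ is what I would tune.

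\textbf{Upper bound on $\OPT$.} I would exhibit an explicit adaptive policy and bound its cost scale by scale. Both machines work on the group currently being processed. As soon as one machine is blocked by a positive job of size $p_i$, the other machine runs the remaining zero-outcome-likely jobs, namely the smaller remaining groups and the rest of group $i$. The first positive job found usually has no second positive job behind it, so the blocking costs only one machine for time $p_i$ while the other continues. A direct calculation then gives $O(C)$ expected cost per scale, so $\OPT=O(LC)$ plus the baseline.

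\textbf{Lower bound for list policies; the main obstacle.} Here one must handle every permutation, including arbitrary interleavings of the groups. I would use a per-scale charging argument. For each scale $i$, look at the position in the list of the first $\Theta(n_i)$ jobs of group $i$ relative to the jobs of the other groups.
\begin{itemize}
\item If many cheaper jobs are listed after them, then with constant probability two positive jobs of scale $i$ occupy both machines at once. The list cannot react to this, and it delays all those later jobs by $\Omega(p_i)$.
\item If instead the group-$i$ jobs are postponed behind the other jobs, then they themselves incur completion times inflated by the larger scales.
\end{itemize}
The first branch is the one a list cannot avoid, because it keeps feeding group-$i$ jobs to the second machine. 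In either case, scale $i$ contributes $\Omega(C)$ beyond what $\OPT$ pays. The key difficulty is making these charges disjoint across scales, so that the same delay is not counted twice. I would first try to attribute to scale $i$ only the delay of jobs from the two adjacent scales $i\pm1$, using the geometric separation of the $p_i$. This gives a total list cost of $\Omega(L)\cdot C$ beyond the baseline, hence $\Omega(\log N)\cdot\OPT$, provided the baseline is $O(LC)$.
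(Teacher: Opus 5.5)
There is a genuine gap, and it lies in the construction itself, not only in the charging step you left open. With $p_i=B^i$ and $n_iq_i=\Theta(1)$ for every scale, no choice of $B\ge2$ and $n_i$ separates list policies from $\OPT$ by more than a constant factor.

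Let $G_i$ be the jobs of scale~$i$. Apply the virtual-outcome simulation from the proof of Lemma~\ref{lem:basic}(b) to each $G_i$ separately; this gives $\OPT\ge\sum_i\OPT(G_i)$. The jobs of $G_i$ are identical, and every prefix of at least $n_i/2$ of them contains two positive jobs with constant probability. So Lemma~\ref{lem:freetime} gives $\OPT(G_i)=\Omega(n_ip_i)$. For groups of bounded size, use $n_ip_i=O(P(G_i))$ and $P\le\OPT$ instead. Hence $\OPT=\Omega(\sum_in_ip_i)$.

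Now take the list that runs the groups in increasing order of size. Suppose the $(k+1)$-st job belongs to $G_t$. Lemma~\ref{lem:freetime} and $W_1(A\cup B)\le\sum_{j\in A}X_j+W_1(B)$ give $\E[F_k]\le\sum_{t'\le t}n_{t'}q_{t'}p_{t'}=O(\sum_{t'\le t}B^{t'})=O(p_t)$. So this list costs $P+O(\sum_tn_tp_t)=O(\OPT)$.

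For this list, neither branch of your dichotomy produces an excess charge. No cheaper group is listed after a risky one. Waiting behind the smaller groups costs a job of $G_t$ only $O(p_t)$, which $\OPT$ already pays inside $G_t$. Your upper-bound paragraph also assumes that the first positive job usually has no second positive job behind it. That contradicts $n_iq_i=\Theta(1)$, under which the rest of the group contains another positive job with constant probability.

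The missing idea is to make positive jobs rare overall, while balancing across types the delay caused by two positive jobs of the same type. The paper takes $M=2^b$ jobs of each type $i<b$ with $p_i=M^2/4^i$ and $q_i=2^i/M^2$. Then $Mq_i\le1/2$, $\sum_iMq_i<1$, and $p_iq_i^2=M^{-2}$ for every type.

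For the upper bound, an adaptive policy tests jobs in decreasing order of size until the first positive one. It then runs all remaining jobs on the other machine in increasing order of size. It costs at most $6Mb$, because the first positive job has type $i$ with probability at most $2^i/M$, and the single-machine continuation then costs $O(M^22^{-i})$.

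For the lower bound, no per-scale charging or disjointness argument is needed. At the start time $F_k$ of the $(k+1)$-st listed job, at most one earlier job is still running. So $2F_k\ge\sum_ip_i(B_i-1)_+$ with $B_i\sim\mathrm{Bin}(k_i,q_i)$. Combine $\E[(B_i-1)_+]\ge\frac14k_i(k_i-1)q_i^2$ with $p_iq_i^2=M^{-2}$ and convexity in the $k_i$. This gives $\E[F_k]\ge(k^2/b-k)/(8M^2)$, which is at least $b/64$ for each of the last $N/2$ positions, for every interleaving at once. Every list therefore costs at least $Mb^2/128$.
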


So the guarantees of~\cite{IMP15,GMZ23} for list policies must depend on~$N$, and the bound $O(\log^2N)$ of~\cite{IMP15} for two machines is tight up to a factor $O(\log N)$.

\subsection{Techniques}

\paragraph{Rounding release times instead of start times.}
A dynamic program over states of an optimal policy must record when each machine becomes free and which jobs remain.
Antoniadis, Hoeksma, Schewior, and Uetz~\cite{AHSU25} keep the number of states polynomial by transforming an optimal policy into a \emph{stratified} policy that starts the jobs of each type only at the points of a type-specific grid.
Their transformation loses a factor $1+O(n\varepsilon)$ for $n$ types~\cite[Theorem~4.23]{AHSU25}.
The number of states of their dynamic program grows as $m^{n\varepsilon^{-2z}}$ for some $1\le z\le n$, so the approach needs $n$ to be a constant.

We round only the times at which machines become free.
Fix an integer $K$ and let $\gamma=1+1/K$.
A positive job of size $p$ started at time~$s$ completes at $s+p$, and under our \emph{reuse rule} its machine becomes available again at the first point of the grid $(p/K)\mathbb Z$ that is at least $s+p$:
\[
  R_p(s)=\frac pK\Bigl(\Bigl\lceil\frac{sK}{p}\Bigr\rceil+K\Bigr)\in[s+p,\,s+\gamma p).
\]
The key observation is that an optimal policy stretched by the factor~$\gamma$ already obeys this rule.
If the stretched policy starts the job at $\gamma s$, then the next job on the same machine starts at least at $\gamma(s+p)>R_p(\gamma s)$.
Compressing the stretched policy back, by starting the same jobs in the same order as early as the rule allows, can only move start times earlier.
So the loss is the factor $\gamma$ on the expected sum of start times, for any number of sizes (Lemma~\ref{lem:stretch}).
Start times remain unconstrained, which is why no delays accumulate.
Figure~\ref{fig:stretch} shows the three policies in one realization.

\begin{figure}[t]
\centering
\begin{tikzpicture}[x=1.2cm,y=0.5cm,
    job/.style={draw,fill=blue!10},
    resv/.style={draw,pattern=north east lines,pattern color=black!55}]
  \foreach \t in {0,...,10} {
    \draw[black!15] (\t,0.9) -- (\t,12.9);
    \node[below] at (\t,0.9) {\small$\t$};
  }
  \foreach \y/\name in {9.6/{\Pi^*}, 5.4/{\Pi_\gamma}, 1.2/{\Pi}} {
    \node at (-1.25,\y+1.4) {$\name$};
    \node[anchor=east] at (-0.08,\y+2.2) {\small$M_1$};
    \node[anchor=east] at (-0.08,\y+0.6) {\small$M_2$};
  }
  \foreach \s/\e/\y/\lab in {
      0/1/11.2/a, 1/6/11.2/c, 6/7/11.2/e, 0/5/9.6/b, 5/7/9.6/d,
      0/1/7.0/a, 1.5/6.5/7.0/c, 9/10/7.0/e, 0/5/5.4/b, 7.5/9.5/5.4/d,
      0/1/2.8/a, 1/6/2.8/c, 0/5/1.2/b, 5/7/1.2/d, 7/8/1.2/e} {
    \draw[job] (\s,\y) rectangle (\e,\y+1.2);
    \node at ({(\s+\e)/2},\y+0.6) {$\lab$};
  }
  \foreach \s/\e/\y in {6.5/7.5/7.0, 9.5/10/5.4, 6/7.5/2.8} {
    \draw[resv] (\s,\y) rectangle (\e,\y+1.2);
  }
\end{tikzpicture}
\caption{Stretching and compressing (Lemma~\ref{lem:stretch}) for $K=2$, so that $\gamma=3/2$, in a realization in which all five jobs are positive.
Jobs $a$ and~$e$ have size~$1$, jobs $b$ and~$c$ have size~$5$, and job~$d$ has size~$2$.
Top: a non-idling policy~$\Pi^*$ on two machines.
Middle: the stretched policy~$\Pi_\gamma$, which multiplies every start time by~$\gamma$ and keeps the machines; a job of size~$p$ started at time~$s$ releases its machine at $R_p(s)$, before the next start on that machine.
Bottom: the compressed policy~$\Pi$, which starts the jobs in the same order, each at the current minimum release time.
Hatched intervals are reserved by the reuse rule after a job has completed.
In~$\Pi$, job~$c$ starts at time~$1$, which is not a multiple of~$5/2$, so its machine stays reserved until $R_5(1)=15/2$.
Job~$e$ therefore starts at time~$7$ instead of~$6$, on the other machine, and still before its stretched start time~$9$.
The sums of the start times are $12$, $18$ and~$13$.}
\label{fig:stretch}
\end{figure}

The remaining work is to count the states of the dynamic program, which consist of the sorted vector of release times and the number of remaining jobs of each size class.
For the latter, it is enough to know how many jobs of each class remain: by an exchange argument of Gupta, Moseley, and Zhou~\cite{GMZ23}, some optimal policy starts the jobs of each size class in non-decreasing order of~$q_j$ (Lemma~\ref{lem:qorder}).
Two counting facts control the release times.
At a decision time~$t$, every release time is one of the $K+1$ grid points of its class in $[t,R_p(t)]$, so the vector of release times is a multiset of $m$ values drawn from $n(K+1)$ candidates.
The decision times themselves are compositions of at most $N$ maps $R_p$, and a block encoding of such compositions shows that there are at most $\binom{(K+1)N+n-1}{n}$ of them (Lemma~\ref{lem:times}).
For a constant number of sizes this gives Theorem~\ref{thm:fewsizes}.

For arbitrary sizes we first reduce to the case that all sizes lie within a polynomial ratio.
We build on the reduction of~\cite{GMZ23,AHSU25}, which tests very large jobs first and runs very small jobs at the start, but which loses a constant factor.
Our version keeps the approximation ratio up to an additive $2\varepsilon$ (Theorem~\ref{thm:reduction}).
It compares with the optimum through the inequality $\OPT\ge\OPT(U)+\sum_{j\notin U}q_jp_j$ for every set $U$ of jobs.
It bounds the cost of the rare event that a very large job is positive by Lemma~\ref{lem:freetime}, which bounds start times in terms of processed volume.
After rounding the sizes to powers of $1+\varepsilon$, the dynamic program has $n=O(\varepsilon^{-1}\log(N/\varepsilon))$ size classes, which gives Theorem~\ref{thm:main}.
After the reduction the number of decision times is polynomial, so the quasi-polynomial running time comes from two sources: the vectors of remaining counts over these $n$ classes, and the multisets of release times.

\paragraph{A greedy rule that minimizes a surrogate of the cost.}
For a set $S$ of started jobs, let $W_{m-1}(S)$ be their total realized processing time minus the $m-1$ largest realized values.
For non-idling policies, the start time of the $(k+1)$-st job lies between $W_{m-1}(S_k)/m$ and $W_{m-1}(S_k)$, where $S_k$ is the set of the first $k$ jobs.
So the sum of $W_{m-1}(S_k)$ over~$k$ approximates the cost within a factor~$m$.
This sum can be minimized exactly by an adaptive greedy rule.
Adding a realized value $x$ increases $W_{m-1}$ by $\min\{x,\theta\}$, where $\theta$ is the $(m-1)$-st largest realized value so far.
The rule starts a job of minimum expected increment $q_j\min\{p_j,\theta\}$, breaking ties by~$q_j$, and an exchange argument proves its optimality (Lemma~\ref{lem:greedyopt}).
The bound $1+\sum_jq_j$ comes from a different comparison between the sum and the start times.
It uses the fact that the $(m-1)$-st largest realized value never exceeds the second earliest time at which a machine becomes free.

\paragraph{Hardness and the list lower bound.}
The hardness proof follows the construction of~\cite{MPUZ26}.
It adds a job with a tiny probability whose size differs by $1/4$ between two instances, together with jobs of probability $1/2$ that encode the items of a \#Knapsack instance.
On these instances SEPT is optimal, by the theorem of Weber, Varaiya, and Walrand~\cite{WVW86}, which we extend to Bernoulli jobs by a smoothing argument (Lemma~\ref{lem:wvw}).
The difference of the two optimal costs then reveals the number of subsets of large weight.

The list lower bound uses types with $p_iq_i^2$ constant.
In every prefix of a list, pairs of positive jobs of the same type force a delay that is the same for each type, so every list accumulates delay $\Omega(\log N)$ per job.
A policy that tests jobs until the first positive one and then reorders the rest avoids this.

\paragraph{Organization.}
Section~\ref{sec:prelim} defines the model and collects basic facts.
Section~\ref{sec:reduction} gives the size reduction, and Section~\ref{sec:dp} gives the dynamic program and proves Theorems~\ref{thm:main} and~\ref{thm:fewsizes}.
Section~\ref{sec:greedy} proves Theorem~\ref{thm:mapprox}, Section~\ref{sec:hardness} proves Theorem~\ref{thm:hardintro}, and Section~\ref{sec:barriers} proves Theorem~\ref{thm:barriersintro}.
Section~\ref{sec:open} discusses open problems, including relaxations over start-time distributions.

\section{Model and preliminaries}\label{sec:prelim}

\paragraph{Jobs and policies.}
There are $N$ jobs and $m$ identical machines, all available at time~$0$.
Job~$j$ has an independent random processing time $X_j$ with
$\Pr[X_j=p_j]=q_j$ and $\Pr[X_j=0]=1-q_j$, where $p_j>0$ and $q_j\in(0,1]$ are rationals.
We call $p_j$ the \emph{size} of~$j$, and we call the outcome of~$j$ \emph{positive} if $X_j=p_j$ and \emph{zero} otherwise.
We write $\mu_j=\E[X_j]=q_jp_j$, $P=\sum_j\mu_j$ and $\Lambda=\sum_jq_j$.
Jobs of equal size form a \emph{size class}, and $n$ denotes the number of size classes.

The outcome of a job becomes known when the job starts: a positive job then occupies its machine for $p_j$ time units, and a zero job completes at once.
A \emph{policy} starts the jobs one at a time and without preemption.
Each decision chooses a remaining job, a machine, and a start time that is not earlier than the previous start time and at which the machine is not occupied.
The decision may depend on the outcomes of the jobs started before and on independent random coins, but not on the outcome of the chosen job.
In particular, several jobs may start at the same time, one after another, and each of these decisions sees the outcomes of the earlier ones.
Policies may leave machines idle on purpose.
For a policy~$\Pi$, let $S^\Pi_j$ and $C^\Pi_j=S^\Pi_j+X_j$ be the start and completion times of job~$j$.
The cost of~$\Pi$ is
\[
  \cost(\Pi)=\sum_j\E\bigl[C^\Pi_j\bigr]=P+\sum_j\E\bigl[S^\Pi_j\bigr],
\]
and $\OPT$ is the infimum of the cost over all policies.
A policy is an \emph{$\alpha$-approximation} if its cost is at most $\alpha\cdot\OPT$.
For a set $U$ of jobs, $\OPT(U)$ denotes the optimum of the instance consisting of the jobs in~$U$ alone on the same $m$ machines, and $P(U)=\sum_{j\in U}\mu_j$.

More generally, a \emph{state} consists of a set $U$ of remaining jobs, a current time $\tau$, and the times $\ell_1,\dots,\ell_m$ at which the machines become free.
Only the sorted vector of the times $\max\{\ell_i,\tau\}$ matters, and we call it the \emph{release vector} of the state.
The outcomes of the remaining jobs are independent of the past.
The cost of a policy from a state is the expected sum of the completion times of the jobs in~$U$.

\paragraph{Non-idling policies.}
A policy is \emph{non-idling} if it starts a job whenever a machine is free and jobs remain.
Equivalently, it starts every job at the minimum entry of the current release vector, on a machine that is free at that time.
Since the outcome of a job is known at its start, waiting reveals no information.
Hence an optimal policy need not idle~\cite[Lemma~2.1]{AHSU25}; we include the short argument.

\begin{lemma}\label{lem:nonidle}
From every state, some optimal policy is deterministic and non-idling.
In particular, $\OPT$ is a minimum over finitely many decision trees.
\end{lemma}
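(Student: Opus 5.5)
We would prove Lemma~\ref{lem:nonidle} in two stages: first show that idling can be removed without increasing any start time, and then argue by induction on the number of remaining jobs that an optimum over non-idling deterministic policies exists and is attained.

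\emph{Removing idle time.} Fix any policy~$\Pi$ from a state $(U,\tau,\ell)$. Define a policy $\Pi'$ that makes the same sequence of job choices as~$\Pi$, in the same order and with the same coin flips, but starts each chosen job at the minimum entry of the current release vector, on a machine free at that time. This is well defined because $\Pi$ never uses the outcome of the chosen job, so $\Pi'$ can simulate $\Pi$ on the same realization: it feeds $\Pi$ the outcomes it has observed and reads off the next job. We then show by induction over the decisions, in each fixed realization of outcomes and coins, two things. First, the $k$-th start time of $\Pi'$ is at most that of~$\Pi$. Second, the sorted release vector of $\Pi'$ after $k$ starts is componentwise at most that of~$\Pi$.

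The inductive step is the usual list-scheduling argument. Suppose $\Pi$ starts its $(k+1)$-st job at time $s$ on a machine free at~$s$. Then at least one entry of its sorted release vector is $\le s$, so the minimum entry of the $\Pi'$ vector is $\le s$ as well. Let $a\le b$ denote, respectively, the entry used by $\Pi'$ and the entry used by $\Pi$. Replacing $a$ by $a+X_j$ and $b$ by $\max\{b,s\}+X_j$ preserves the componentwise domination of sorted vectors. This holds because replacing the smallest entry by a larger value, while the other vector's chosen entry is replaced by something at least as large, keeps the order. This monotonicity step, together with the fact that start times must be nondecreasing, is the only delicate point, and I expect it to be the main obstacle; I would verify it with a short exchange lemma on sorted vectors. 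Taking expectations gives $\cost(\Pi')\le\cost(\Pi)$.

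\emph{Determinism and finiteness.} A non-idling policy is determined, at each step, by the choice of the next remaining job as a function of the observed history. Its start time and machine are forced up to symmetry, since only the release vector matters. Hence a deterministic non-idling policy is a finite decision tree of depth~$|U|$ with branching on the two outcomes. There are finitely many such trees. A randomized non-idling policy is a mixture over coin outcomes of deterministic trees, so its cost is an average of deterministic costs. Therefore some deterministic tree is at least as good. Combining this with the first stage, $\OPT$ from any state equals the minimum over these finitely many trees, and the minimum is attained. We could alternatively phrase this by backward induction on $|U|$, choosing at each state the job that minimizes the expected cost-to-go.
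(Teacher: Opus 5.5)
Your route differs from the paper's and can be made to work. The paper inducts on $|U|$: it proves that the value function $V(\ell,U)$ is monotone in the release vector, moves only the \emph{first} start of an arbitrary policy to $\ell_1$, and derandomizes by choosing the best first job. You instead compress the whole policy pathwise at once, which is the same compression the paper later uses in Lemma~\ref{lem:stretch} with $\gamma=1$. You then derandomize at the end by averaging over finitely many trees. This also gives the stronger statement that no start time increases, and your derandomization paragraph is fine.

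The step you flagged as delicate, however, rests on a false claim. Replacing the smallest entry $a_1$ of $a$ by $a_1+X_j$, and an \emph{arbitrary} entry $b_r\ge a_1$ of $b$ by a value at least $a_1+X_j$, does not preserve domination. Let both policies have machines free at $(0,10)$. Let $\Pi'$ start $j$ at time $0$, and let $\Pi$ start it at $s=10$ on the machine free at~$10$, with $X_j=5$. The sorted free times become $(5,10)$ for $\Pi'$ and $(0,15)$ for $\Pi$, and $5>0$. So the invariant as you set it up, with only the used entry of $\Pi$ updated, fails.

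The repair is exactly where the nondecreasing start times enter. Since $\Pi$ starts nothing before~$s$, compare with $b\vee s$, the sorted vector of the values $\max\{b_i,s\}$; this is the paper's release vector of $\Pi$ at time~$s$. In $b\vee s$ the machine used by $\Pi$ carries the minimum entry~$s$. So both policies replace a \emph{minimum} entry, by $a_1+X_j\le s+X_j$ respectively. That replacement does preserve domination, because the unsorted vectors stay coordinatewise dominated and sorting is monotone.

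The invariant to carry is therefore $a^{(k)}\le b^{(k)}\vee\tau_k$, where $\tau_k$ is the $k$-th start time of~$\Pi$. It gives $a^{(k)}_1\le\tau_{k+1}$ and $a^{(k+1)}\le b^{(k+1)}\vee\tau_{k+1}$. You pass to the next start using $b\vee\tau\le b\vee\tau'$ for $\tau\le\tau'$. This is precisely the step in the paper's proof, and with it the rest of your argument goes through.
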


\begin{proof}
Let $V(\ell,U)$ be the infimum of the cost from a state with release vector $\ell$ and remaining jobs~$U$, with $V(\ell,\emptyset)=0$.
If $\ell\le\ell'$ coordinatewise, then $V(\ell,U)\le V(\ell',U)$: a policy for $\ell'$ can be run from $\ell$ with the same start times, using the machine of rank~$i$ in $\ell$ for the machine of rank~$i$ in~$\ell'$.
We show by induction on $|U|$ that $V(\ell,U)$ is attained by a deterministic non-idling policy.
Consider any policy and fix the outcome of its coins.
Its first decision starts a job $x$ at some time $s\ge\ell_1$ on a machine that is free at time~$s$.
Afterwards its release vector is obtained from the sorted vector of the values $\max\{\ell_i,s\}$, whose minimum entry is~$s$, by replacing that minimum entry by $s+X_x$.
Starting $x$ at time $\ell_1$ instead, on a machine with release time $\ell_1$, gives $x$ a completion time that is not larger, and the release vector is then obtained from $\ell$ by replacing its minimum entry $\ell_1$ by $\ell_1+X_x$.
We have $\ell_i\le\max\{\ell_i,s\}$ for every $i$ and $\ell_1+X_x\le s+X_x$.
Replacing the minimum entries of two coordinatewise dominated sorted vectors by values $x'\le y'$ and sorting again preserves the domination.
So the release vector after the shifted start is dominated by the one after the original start.
By monotonicity and induction, continuing with a deterministic non-idling optimal policy is no worse than the original continuation.
Averaging over the coins and choosing the best first job proves the claim for $U$.
There are finitely many deterministic non-idling decision trees, which gives the second statement.
\end{proof}

\paragraph{Basic bounds.}
The following two inequalities are used throughout.

\begin{lemma}\label{lem:basic}
\begin{enumerate}[label=(\alph*)]
\item $P\le\OPT$.
\item $\OPT\ge\OPT(U)+P(J\setminus U)$ for every set $U$ of jobs, where $J$ is the set of all jobs.
\end{enumerate}
\end{lemma}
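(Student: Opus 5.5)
Part (a) is immediate from the cost identity $\cost(\Pi)=P+\sum_j\E[S^\Pi_j]$ recorded in the model section. Start times are nonnegative, so every policy has cost at least $P$. Taking the infimum over policies gives $P\le\OPT$.

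For (b), I would take an optimal policy $\Pi^*$, which exists and can be taken deterministic and non-idling by Lemma~\ref{lem:nonidle}. From it I would build a policy $\Pi_U$ for the instance $U$ alone whose cost is at most $\OPT-P(J\setminus U)$. The construction is a simulation. $\Pi_U$ runs $\Pi^*$ on the same machines and samples the outcomes $X_j$ of the jobs $j\notin U$ with independent internal coins, which the model explicitly allows. Whenever $\Pi^*$ would start such a phantom job $j$ at time $s$, $\Pi_U$ does not start anything. Instead it treats that machine as reserved until $s+X_j$, which simply amounts to idling there. Jobs of $U$ are started exactly when and where $\Pi^*$ starts them.

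The joint distribution of the outcomes and of all decisions under $\Pi_U$ is then identical to that under $\Pi^*$. Hence every job $j\in U$ has the same distribution of completion time in both, and $\cost(\Pi_U)=\sum_{j\in U}\E[C^{\Pi^*}_j]$. Since $C^{\Pi^*}_j\ge X_j$, for $j\notin U$ we have $\E[C^{\Pi^*}_j]\ge\mu_j$. Summing,
\[
\OPT=\sum_{j\in U}\E[C^{\Pi^*}_j]+\sum_{j\notin U}\E[C^{\Pi^*}_j]\ge\cost(\Pi_U)+P(J\setminus U)\ge\OPT(U)+P(J\setminus U).
\]

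The only point needing care is that $\Pi_U$ is a legal policy. It must start jobs in nondecreasing time order and only on free machines, and each decision may depend only on outcomes of previously started jobs and on coins. All three properties follow because $\Pi_U$ mimics $\Pi^*$'s schedule step by step. Its decisions depend on the phantom outcomes only through its coins, and it never uses the outcome of a chosen $U$-job before starting it, since $\Pi^*$ does not. Randomized and idling policies are permitted in the definition of $\OPT(U)$, so no further argument is needed.
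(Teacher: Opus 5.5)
Your proof is correct and follows the paper's argument: part (a) is the observation $C_j=S_j+X_j\ge X_j$, and part (b) uses the same simulation, in which each job outside $U$ becomes an idle period whose length is sampled independently, so the jobs of $U$ keep their joint distribution of start times and outcomes. Citing Lemma~\ref{lem:nonidle} to guarantee that an optimal policy exists is a harmless extra detail that the paper leaves implicit.
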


\begin{proof}
Part~(a) holds because $C_j\ge X_j$ for every job.
For part~(b), let $\Pi$ be an optimal policy for~$J$.
Define a policy $\Pi_U$ for the jobs in~$U$ as follows.
It simulates~$\Pi$; whenever $\Pi$ starts a job $j\notin U$ on some machine, $\Pi_U$ draws an independent virtual outcome distributed as $X_j$ and keeps that machine idle for the virtual processing time.
This is a valid policy for~$U$, and the joint distribution of the start times and outcomes of the jobs in~$U$ is the same under~$\Pi_U$ as under~$\Pi$.
Hence $\OPT(U)\le\sum_{j\in U}\E[C^\Pi_j]$.
The jobs outside~$U$ contribute $\sum_{j\notin U}\E[C^\Pi_j]\ge P(J\setminus U)$, and adding the two bounds gives~(b).
\end{proof}

\paragraph{Free times.}
Fix a policy and a realization of the outcomes.
For $0\le k<N$, let $S_k$ be the set of the first $k$ jobs started and let $F_k$ be the start time of the $(k+1)$-st job.
For a set $S$ of started jobs and an integer $r\ge0$, let $W_r(S)$ be the total realized processing time of~$S$ minus the sum of the $r$ largest realized processing times in~$S$, with $W_r(S)=0$ if $|S|\le r$.
Adding a job to~$S$ never decreases $W_r(S)$: the sum gains the new value if it is not among the $r$ largest, and otherwise it gains the value that leaves the $r$ largest.
Also, $W_r(S)$ is non-increasing in~$r$.
Gupta, Moseley, and Zhou~\cite{GMZ23} measure the progress of a policy by its free times; the following lemma relates them to $W_{m-1}$.

\begin{lemma}\label{lem:freetime}
Let all machines be free at time~$0$, and consider a realization in which every job starts at the earliest time at which some machine is free.
This holds on every realization of a non-idling policy.
Then for every $0\le k<N$,
\[
  F_k\;\le\;W_{m-1}(S_k)\;\le\;m\,F_k.
\]
\end{lemma}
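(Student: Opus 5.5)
We fix a realization and argue deterministically about the first $k$ started jobs $S_k$ and the time $F_k$. By hypothesis the $(k+1)$-st job starts at the earliest time some machine is free after the first $k$ starts. Write $\ell_1\le\dots\le\ell_m$ for the sorted release times of the machines just before that start, so that $F_k=\ell_1$. Every machine has been busy, without gaps, from time $0$ until its release time. Indeed, each job starts at the current minimum release time, which is exactly when its machine becomes free, so no machine is ever idle before its last release. Hence the total realized processing time of $S_k$ is $\sum_i\ell_i$, and each $\ell_i$ is the sum of the realized times of the jobs processed on machine~$i$.

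For the lower bound $W_{m-1}(S_k)\ge F_k$, I would pick one job on each machine~$i\ge2$. If machine $i$ holds any positive job among $S_k$, take the last one; it is the job occupying the machine up to $\ell_i$. The $m-1$ largest realized values in $S_k$ sum to at least the realized values of any $m-1$ jobs, and in particular to at least the values of these chosen jobs (zero jobs contribute nothing). So $W_{m-1}(S_k)$ is at least the total volume minus one job per machine $i\ge2$. That remainder includes all of machine~1's volume, which equals $\ell_1=F_k$, so $W_{m-1}(S_k)\ge F_k$. One subtlety needs checking: whenever machine~1 carries volume, $|S_k|>m-1$ or the claim is trivial. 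If $|S_k|\le m-1$, then either some machine has no positive job yet and $F_k=0$, or the counting still works. I expect this case to be routine.

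For the upper bound $W_{m-1}(S_k)\le mF_k$, I would bound each of the $m-1$ largest values from below. Each machine $i\ge2$ has release time $\ell_i$. Its last job started at some time $s_i\le F_k$, because starts are non-decreasing and the $(k+1)$-st start is at $F_k$. That last job therefore has realized value $\ell_i-s_i\ge\ell_i-F_k$. The last jobs on different machines are distinct, so the $m-1$ largest realized values sum to at least $\sum_{i\ge2}(\ell_i-F_k)$. This gives
\[
W_{m-1}(S_k)\le\sum_i\ell_i-\sum_{i\ge2}(\ell_i-F_k)=\ell_1+(m-1)F_k=mF_k .
\]
If some machine has $\ell_i=F_k$ with no job on it, the corresponding term is $0$ and the bound still holds.

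The main obstacle is getting the "no gaps, volume equals $\sum_i\ell_i$" claim right. This is needed for both directions, together with the monotone-start observation $s_i\le F_k$. It follows by induction on the number of starts from the earliest-free-machine rule, and the case of several jobs starting at the same time requires care. Once that holds, both inequalities are short exchange-free counting arguments.
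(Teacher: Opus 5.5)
Your upper bound $W_{m-1}(S_k)\le mF_k$ is correct and is essentially the paper's argument. The last jobs on the machines other than the one attaining $F_k$ are distinct and each started by time $F_k$. So their realized values total at least $\sum_{i\ge2}(\ell_i-F_k)$, and the $m-1$ largest values dominate this total.

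\textbf{The lower bound has the inequality reversed.} Write $Y$ for the sum of the $m-1$ largest realized values in $S_k$ and $T$ for your chosen jobs, one per machine $i\ge2$. From $Y\ge\sum_{j\in T}X_j$ you get $W_{m-1}(S_k)=\sum_{j\in S_k}X_j-Y\le\sum_{j\in S_k}X_j-\sum_{j\in T}X_j$. This is an upper bound on $W_{m-1}(S_k)$, not a lower bound, and it is exactly the inequality your upper-bound argument uses. The intermediate claim ``$W_{m-1}(S_k)$ is at least the total volume minus one job per machine $i\ge2$'' is false in general.

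Take $m=2$. Start a job of size $10$ and a job of size $3$ at time $0$ on different machines. At time $3$, start a job of size $8$ after the size-$3$ job, and let a fourth job come next. Then $F_3=10$, the last job on the other machine has size $8$, and $W_1(S_3)=21-10=11<21-8=13$. The largest jobs can sit on the machine attaining $\ell_1$. Removing them then does not leave that machine's volume intact. Your side discussion of the case $|S_k|\le m-1$ does not touch this problem.

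\textbf{The missing idea is a pigeonhole step.} The $m-1$ jobs with the largest realized values lie on at most $m-1$ machines, so some machine $i^*$ carries none of them. Its whole load is therefore counted in $W_{m-1}(S_k)$, and realized values are nonnegative. Hence $W_{m-1}(S_k)\ge\ell_{i^*}\ge\ell_1=F_k$. The machine $i^*$ need not be the one attaining $F_k$. This is how the paper proves the first inequality, and it also covers the case $|S_k|\le m-1$ automatically.
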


\begin{proof}
On such a realization every machine is busy without gaps from time~$0$ until it becomes free.
So after $k$ starts, the time at which machine~$i$ becomes free equals its load $L_i$, the total realized processing time of the jobs started on it, and $F_k=f:=\min_iL_i$.

The $m-1$ jobs with the largest realized processing times in $S_k$ lie on at most $m-1$ machines.
Some machine holds none of them, so its load is at most $W_{m-1}(S_k)$, and $f$ is at most that load.

For the second inequality, consider a machine with $L_i>f$ and let $J_i$ be the last job started on it.
The job $J_i$ is positive.
It started at the earliest free time of that moment, which is at most~$f$, since the earliest free time never decreases.
Hence $L_i-f\le X_{J_i}$.
At most $m-1$ machines have $L_i>f$, and their last jobs are distinct.
So the total realized processing time of $S_k$ is $\sum_iL_i\le mf+\sum_{i:L_i>f}X_{J_i}$, which is at most $mf$ plus the sum of the $m-1$ largest realized processing times in $S_k$.
This gives $W_{m-1}(S_k)\le mf$.
\end{proof}

\paragraph{Ordering within a size class.}
Fix an order of the jobs of each size class by non-decreasing $q_j$, breaking ties by index.
A policy \emph{respects the class orders} if, whenever it starts a job of some class, it starts the first remaining job of that class in this order.
Gupta, Moseley, and Zhou~\cite[Lemma~2.2]{GMZ23} state that an optimal policy starts the jobs of each size class in increasing order of~$q_j$.
Their proof in~\cite[Appendix~B]{GMZ23} is an exchange that moves a job with smaller $q_j$ before an equal-size job with larger $q_j$, using an independent coin, without increasing the cost.
We need the following existential form, with the tie-breaking used by our dynamic program, and we include the computation for the exchange.

\begin{lemma}\label{lem:qorder}
From every state, some optimal policy is deterministic, non-idling, and respects the class orders.
\end{lemma}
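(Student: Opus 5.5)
The argument is an induction on $|U|$, parallel to the proof of Lemma~\ref{lem:nonidle}, with the value function $V(\ell,U)$ and an exchange step added. By Lemma~\ref{lem:nonidle}, from every state some optimal policy is deterministic and non-idling. Among all such optimal policies from a given state we pick one, $\Pi$, and show that it can be modified to start the first remaining job of some class in the class order whenever it starts a job of that class, without increasing the cost. By induction on $|U|$, after the first decision we may assume that the continuation from every resulting state is deterministic, non-idling and respects the class orders. So only the first decision needs fixing.

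\textbf{Exchange step.} Suppose the first decision of $\Pi$ starts job $b$ of class $p$ at time $\ell_1$, while the first job of class $p$ in the order is $a\neq b$, so $q_a\le q_b$. Since the continuation respects the class orders, $a$ is the first job of class $p$ started by the continuation, in both outcomes of $b$. Define $\Pi'$ to start $a$ first and then follow the continuation of $\Pi$ with the roles of $a$ and $b$ swapped.

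The cost is compared through a coupling. Write $X_a = \mathbf 1[U_a\le q_a]\,p$ and $X_b=\mathbf 1[U_b\le q_b]\,p$ with independent uniforms. Since $a$ and $b$ have the same size, only the pattern of positive and zero outcomes matters. The only realizations on which $\Pi$ and $\Pi'$ differ are the two in which exactly one of $a,b$ is positive:
\begin{itemize}
\item[(i)] the job started first is positive and the later one is zero;
\item[(ii)] the reverse.
\end{itemize}
Let $D$ be the cost of (ii) minus the cost of (i), computed under the continuation of $\Pi$. Realization (i) occurs with probability $q_b(1-q_a)$ under $\Pi$ and with probability $q_a(1-q_b)$ under $\Pi'$. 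Realization (ii) occurs with the complementary probabilities. Since
\[
q_b(1-q_a)-q_a(1-q_b)=q_b-q_a\ge 0,
\]
$\Pi$ puts more mass on (i). Hence
\[
\cost(\Pi')-\cost(\Pi)=(q_b-q_a)\cdot D .
\]
If $D\ge0$ we are done. If $D<0$, the exchange goes the wrong way. In that case we instead modify $\Pi$ on the branch where the first job is positive so that it behaves as on the other branch, and this contradicts optimality unless $D=0$. This is the GMZ argument with an independent coin, which lets one interpolate between the two branches.

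\textbf{Main obstacle.} I expect the hard step to be this sign argument: making precise that the continuation after ``$a$ positive, $b$ zero'' and after ``$a$ zero, $b$ positive'' can be interchanged. The same continuation tree must remain valid on both branches, because the remaining job sets are $U\setminus\{a\}$ and $U\setminus\{b\}$, which are isomorphic via the relabeling. I would handle this by viewing the policy as acting on the unordered set of class-$p$ jobs together with the revealed counts of positive and zero outcomes, and by using the independent coin to re-randomize which of $a,b$ carries the positive outcome. The resulting policy may be randomized; derandomizing it by Lemma~\ref{lem:nonidle} restores a deterministic non-idling policy, and the induction hypothesis then restores respect for the class orders. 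Ties $q_a=q_b$ give a difference of zero, which matches the tie-breaking by index.
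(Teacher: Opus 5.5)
Your identity $\cost(\Pi')-\cost(\Pi)=(q_b-q_a)D$ is correct, but you read its sign backwards, so the one case that needs an argument is left open. The plain swap $\Pi'$ is no worse than $\Pi$ exactly when $D\le0$. The case $D<0$ is trivial, because then $\Pi'$ itself beats the optimal $\Pi$ and no coin is needed. Optimality of $\Pi$ already gives $D\ge0$. What must be shown is that $D>0$ cannot occur, i.e.\ that the pattern ``first slot zero, node of $a$ positive'' is no more expensive than ``first slot positive, node of $a$ zero''.

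Your remedy does not show this. Letting the positive branch ``behave as the zero branch'' is not a valid policy in general: after a positive first job a machine is blocked until $t_0+p$, and $T_0$ may use it earlier. Nor can reweighting your two patterns give a contradiction. Choosing at random which of $a,b$ goes first puts mass $\lambda q_b(1-q_a)+(1-\lambda)q_a(1-q_b)\le q_b(1-q_a)$ on pattern (i), and such a policy costs $\cost(\Pi)+(1-\lambda)(q_b-q_a)D$. So $\Pi$ already maximizes that mass. The plain swap really can increase the cost. The example after the lemma ($m=1$, $q_a=1/10$, $q_b=9/10$, a non-optimal policy) is exactly your $\Pi'$, with $D=98$. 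Optimality therefore has to enter through a policy that moves a start time earlier, not one that only redistributes branch weights.

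The missing step is the paper's coin construction. Its role is the opposite of interpolating between branches: it keeps the probability of entering each continuation unchanged, so $D$ never appears. Let $T_0,T_1$ be the continuations of $\Pi$ after $b$ is zero or positive. Let $\sigma_0,\sigma_1\ge t_0$ be the random times at which they start $a$, and let $\vartheta=(q_b-q_a)/((1-q_a)q_b)$. The new policy $\Pi''$ works as follows.
\begin{enumerate}[label=(\arabic*)]
\item It starts $a$ at $t_0$ on the machine $\Pi$ uses for $b$.
\item If $a$ is positive, it runs $T_1$ with $b$ at the node of $a$.
\item If $a$ is zero, then with probability $1-\vartheta$ it runs $T_0$ with $b$ at the node of $a$.
\item With probability $\vartheta$ it instead starts $b$ at $t_0$ too, and runs $T_0$ or $T_1$ according to $b$'s outcome, treating the node of $a$ as a zero job.
\end{enumerate}
The copy of $T_1$ is entered with probability $q_b$ and the copy of $T_0$ with probability $1-q_b$. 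In each copy the job at the node of $a$ is positive with probability exactly $q_a$, independently of everything else. So all jobs other than $a,b$ keep their expected completion times, and
\[
  \E_{\Pi}[C_a+C_b]-\E_{\Pi''}[C_a+C_b]=\frac{(1-q_b)(q_b-q_a)}{q_b}\bigl(\E[\sigma_0]-t_0\bigr)+(q_b-q_a)\bigl(\E[\sigma_1]-t_0\bigr)\ge0 .
\]
The gain comes from the $\vartheta$-branch, which pulls the second job forward to $t_0$. With this in place of your sign argument, the rest of your outline goes through: induction on $|U|$, replacing the continuation after $a$ by the induction hypothesis, and ties. It also no longer needs your assumption that the continuation already respects the class orders.
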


\begin{proof}
We use induction on the number of remaining jobs; the case of at most one job is trivial.
By Lemma~\ref{lem:nonidle}, take a deterministic non-idling optimal policy $T$, let $x$ be its first job, and let $t_0$ be its start time, the minimum entry of the release vector.
Let $a$ be the first remaining job of the class of $x$, so that $q_a\le q_x$, and let $p$ be their common size.
It suffices to find an optimal non-idling policy that starts $a$ first.
Indeed, after the outcome of $a$ is revealed we can then replace the continuation by a deterministic non-idling optimal policy for the resulting state that respects the class orders, which exists by induction.
If $a=x$ there is nothing to do, and if $q_a=q_x$, exchanging the names of $a$ and $x$ gives such a policy.

So let $q_a<q_x$.
The policy $T$ starts $a$ on every path.
Let $T_0$ and $T_1$ be its continuations after $x$ is zero and positive, and let $\sigma_0$ and $\sigma_1$ be the times at which they start~$a$.
These random times do not depend on the outcome of $a$, and $\sigma_0,\sigma_1\ge t_0$ since the minimum release time never decreases.
Let $\vartheta=(q_x-q_a)/((1-q_a)q_x)\in(0,1]$.
The policy $T'$ starts $a$ at time~$t_0$ on the machine used by~$T$ for~$x$.
If $a$ is positive, it continues as $T_1$, with $x$ started in place of~$a$.
If $a$ is zero, then with probability $1-\vartheta$ it continues as $T_0$, with $x$ started in place of~$a$.
With probability $\vartheta$ it instead starts $x$ at time $t_0$ and continues as $T_0$ or $T_1$, according to the outcome of $x$, treating the node of $a$ as a zero job.

The policy $T'$ is non-idling, and it continues as a copy of $T_1$ with probability $q_a+(1-q_a)\vartheta q_x=q_x$ and as a copy of $T_0$ with probability $1-q_x$.
Each copy is entered with the same release vector as in~$T$, because $a$ and $x$ have the same size.
In a copy of $T_1$, the job run at the node of $a$ is positive with probability $q_aq_x/q_x=q_a$.
In a copy of $T_0$, it is positive with probability $(1-q_a)(1-\vartheta)q_x/(1-q_x)=q_a$.
In both cases this outcome is independent of all other jobs.
So every job other than $a$ and $x$ has the same expected completion time in $T$ and~$T'$.
For $a$ and $x$, a direct computation from the branch probabilities above gives
\[
  \E_T[C_a+C_x]-\E_{T'}[C_a+C_x]\;=\;\frac{(1-q_x)(q_x-q_a)}{q_x}\bigl(\E[\sigma_0]-t_0\bigr)+(q_x-q_a)\bigl(\E[\sigma_1]-t_0\bigr)\;\ge\;0,
\]
where the terms with the factor $p$ cancel.
So $T'$ is an optimal non-idling policy that starts $a$ first.
\end{proof}

The stronger statement that reordering the jobs of each size class by~$q_j$ never increases the cost of a given policy, as stated in~\cite[Lemma~2.2]{AHSU25}, does not hold in general.
For instance, let $m=1$ and take jobs $a$ and $b$ of size~$1$ with $q_a=1/10$ and $q_b=9/10$, and a deterministic job~$d$ of size~$100$.
The policy that starts~$b$ first, then $d$ and $a$ if $b$ is zero, and $a$ and $d$ if $b$ is positive, costs $112.89$.
Its reordered version starts $a$ first and continues with the same tree, and costs $191.29$.
We use only the existential form.

\section{An approximation-preserving size reduction}\label{sec:reduction}

This section reduces the problem to instances whose sizes lie within a polynomial ratio, at a loss of $2\varepsilon$ in the approximation ratio.
We build on the reduction of Gupta, Moseley, and Zhou~\cite[Lemma~3.3 and Appendix~C]{GMZ23}, which tests very large jobs first and runs very small jobs at the start, at the cost of a constant factor.
Antoniadis, Hoeksma, Schewior, and Uetz~\cite[Lemma~A.1]{AHSU25} adapt it to turn an $\alpha$-approximation for instances whose sizes are powers of $c\ge2$ into a $(c^2+c\alpha)$-approximation.
The version below keeps the ratio $\alpha$ up to an additive $2\varepsilon$.
It uses Lemma~\ref{lem:basic}(b) to compare with the optimum, and Lemma~\ref{lem:freetime} to bound the cost of the rare event that a very large job is positive.

If $m\ge N$, starting every job at time~$0$ is optimal.
If $m=1$, the policy that starts the jobs in non-decreasing order of~$\mu_j$ (SEPT) is optimal~\cite{Rot66}; Section~\ref{sec:greedy} recalls the argument for adaptive policies.
We therefore assume $2\le m<N$.
Fix $\varepsilon\in(0,1]$ and set
\[
  T=\frac{NP}{\varepsilon},\qquad \sigma=\frac{\varepsilon P}{N^2},\qquad D=N\sigma=\frac{\varepsilon P}{N}.
\]
Call a job \emph{large} if $p_j\ge T$, \emph{small} if $p_j<\sigma$, and \emph{medium} otherwise, and let $L$, $Q$ and $M$ be the sets of large, small and medium jobs.
The sizes of the medium jobs lie in $[\sigma,T)$, so their ratio is less than $N^3/\varepsilon^2$.

Given a policy $A$ for the instance consisting of the medium jobs alone, the policy $\Pi_A$ for the whole instance works as follows.
\begin{enumerate}
\item At time~$0$, start the large jobs one after another on one machine, as long as their outcomes are zero.
\item If a large job is positive, schedule all remaining jobs from that moment on by list scheduling in any fixed order: whenever a machine is free, start the next job of the list on it.
\item Otherwise all large jobs are zero and have completed at time~$0$.
  List-schedule the small jobs from time~$0$, wait until time~$D$, and then run~$A$ on the medium jobs with every start time shifted by~$D$.
\end{enumerate}

\begin{theorem}\label{thm:reduction}
If $\cost(A)\le\alpha\cdot\OPT(M)$ for some $\alpha\ge1$, then $\cost(\Pi_A)\le(\alpha+2\varepsilon)\cdot\OPT$.
\end{theorem}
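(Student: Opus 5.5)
The plan is to split the cost of $\Pi_A$ according to whether some large job is positive, and to compare with $\OPT$ through Lemma~\ref{lem:basic}(b) in the form $\OPT\ge\OPT(M)+P(L)+P(Q)$, with $U=M$. Two elementary estimates drive everything. First, for a large job, $q_j\le \mu_j/T=\varepsilon\mu_j/(NP)$, so $\sum_{j\in L}q_j\le \varepsilon P(L)/(NP)\le\varepsilon/N$. Second, the small jobs have total size less than $|Q|\sigma\le D$. The target inequality is $\cost(\Pi_A)\le \alpha\OPT(M)+P(Q)+P(L)+2\varepsilon P$. Since $\alpha\ge1$ and $P\le\OPT$ by Lemma~\ref{lem:basic}(a), this gives the claim.

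\textbf{Event that all large jobs are zero.} Call this event $E$. Every large job completes at time~$0$. The outcomes of small and medium jobs are independent of $E$, so I can bound their expected completion times unconditionally, and multiplying by $\Pr[E]\le1$ only helps. Under list scheduling from time~$0$, a small job starts no later than the total size of the small jobs started before it, which is less than $|Q|\sigma$. So the small jobs contribute at most $P(Q)+|Q|^2\sigma$. The medium jobs contribute $\cost(A)+|M|D\le\alpha\OPT(M)+|M|N\sigma$, because shifting every start time by $D$ is legal once the small jobs are done. Note that every machine is free by time $D$, since each small job completes before $|Q|\sigma\le D$. The additive terms sum to $\sigma(|Q|^2+|M|N)\le\sigma N(|Q|+|M|)\le N^2\sigma=\varepsilon P$.

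\textbf{Event that some large job is positive.} Let $j$ be the first positive large job. All earlier large jobs are zero and started at time~$0$, and $j$ itself completes at $p_j$. From time~$0$, the remaining jobs are list-scheduled on $m$ machines with one machine blocked by $j$. This is a non-idling realization with all machines free at time~$0$, if we count $j$ as one of the started jobs. Lemma~\ref{lem:freetime} then bounds every later start time by $W_{m-1}(S_k)$. Because $m\ge2$, $W_{m-1}$ drops at least the largest realized value, so it is at most the total realized size $Y_j$ of the jobs other than $j$. Hence the completion times on this branch sum to at most $p_j+N\,Y_j+Y_j$.

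The outcomes of the other jobs are independent of $X_j$. Conditioning on the earlier large jobs being zero only decreases their sizes, so $\E[Y_j\mid\cdot]\le P$. The branch therefore contributes at most
\[
\sum_{j\in L}q_j\bigl(p_j+(N+1)P\bigr)\le P(L)+\tfrac{N+1}{N}\,\varepsilon P(L),
\]
using the first estimate above. The remaining bookkeeping is to merge the two error terms $\varepsilon P$ and $\tfrac{N+1}{N}\varepsilon P(L)$ into $2\varepsilon P$. This works because $P(L)\le P$, and I would tighten the $N+1$ by counting only the $N-1$ other jobs if needed.

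\textbf{Main obstacle.} I expect the delicate step to be the large-positive branch: applying Lemma~\ref{lem:freetime} correctly when one machine is blocked from time~$0$, and checking that the independence and conditioning claims hold so that $\E[Y_j]\le P$. I would handle the blocked machine by treating $j$ as the first job started in the list. The constants must also be tracked carefully so that the total additive loss is exactly $2\varepsilon\OPT$ and not $c\varepsilon\OPT$ for some larger $c$.
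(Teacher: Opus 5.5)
Your proposal is correct and follows the paper's proof: the same split on whether some large job is positive, Lemma~\ref{lem:freetime} with $W_{m-1}\le W_1$ to discard the positive large job, the estimate $\sum_{\ell\in L}q_\ell\le P(L)/T$, and Lemma~\ref{lem:basic}(b) with $U=M$. The differences are cosmetic: you condition on the first positive large job where the paper takes a union bound over all positive ones, and you bound completion times per branch rather than start times. Note that the $N-1$ tightening is required, not optional, since with $|Q|+|M|\le N$ and only $P(L)\le P$ your error terms sum to $(2+1/N)\varepsilon P$; counting only the at most $N-1$ jobs started after $j$ gives $p_j+N\,Y_j$, hence at most $P(L)+\varepsilon P(L)$ on that branch and exactly the additive $2\varepsilon P$.
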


\begin{proof}
Let $E$ be the event that some large job is positive.

\emph{The event $E$.}
On~$E$, all decisions before the first positive large job occur at time~$0$, and the rest is list scheduling.
So on every realization in~$E$, every job starts at the earliest time at which some machine is free, and Lemma~\ref{lem:freetime} bounds every start time by $W_{m-1}(S_k)\le W_1(S_k)\le W_1(J)$, using $m\ge2$ and the monotonicity of~$W_r$.
If $\ell$ is a positive large job, then $W_1(J)$ is at most the total realized processing time minus~$X_\ell$.
Summing over all large jobs therefore gives, on every realization,
\[
  \mathbf 1_E\,W_1(J)\;\le\;\sum_{\ell\in L}\mathbf 1[X_\ell>0]\sum_{i\ne\ell}X_i .
\]
By independence, $\E[\mathbf 1[X_\ell>0]X_i]=q_\ell\mu_i$ for $i\ne\ell$.
Moreover $\sum_{\ell\in L}q_\ell\le\sum_{\ell\in L}\mu_\ell/T\le P/T$.
Hence $\E[\mathbf 1_EW_1(J)]\le P^2/T$, and
\[
  \E\Bigl[\mathbf 1_E\sum_jS_j\Bigr]\;\le\;N\cdot\frac{P^2}{T}\;=\;\varepsilon P .
\]

\emph{The complement of~$E$.}
On the complement, every large job starts at time~$0$.
The small jobs have total realized processing time at most $|Q|\sigma\le D$, so every machine has completed its small jobs by time~$D$, and all machines are free when $A$ starts.
Every medium job starts at time $D$ plus its start time under~$A$.
The event~$E$ depends only on the large jobs, so conditioning on its complement does not change the distribution of the other outcomes.
Therefore
\[
  \E\Bigl[\mathbf 1_{\overline E}\sum_{j\notin L}S_j\Bigr]
  \;\le\;(|Q|+|M|)\,D+\sum_{j\in M}\E\bigl[S^A_j\bigr]
  \;\le\;\varepsilon P+\cost(A)-P(M).
\]

\emph{Combining.}
Adding the two bounds,
\[
  \cost(\Pi_A)=P+\E\Bigl[\sum_jS_j\Bigr]
  \;\le\;P(L)+P(Q)+\cost(A)+2\varepsilon P
  \;\le\;P(L\cup Q)+\alpha\,\OPT(M)+2\varepsilon P .
\]
By Lemma~\ref{lem:basic}(b) with $U=M$ we have $\OPT(M)+P(L\cup Q)\le\OPT$, and by Lemma~\ref{lem:basic}(a) we have $P\le\OPT$.
Since $\alpha\ge1$, the right-hand side is at most $\alpha\,\OPT+2\varepsilon\,\OPT$.
\end{proof}

\section{A dynamic program over rounded release times}\label{sec:dp}

In this section the instance is arbitrary, with $n$ size classes and all machines free at time~$0$.
We fix an integer $K\ge1$ and set $\gamma=1+1/K$.
The dynamic program optimizes over policies whose machines are released only at points of size-dependent grids.
The main point is that stretching an optimal policy by the factor~$\gamma$ already makes it respect these grids, so the loss is a factor $\gamma$ on $\OPT-P$ regardless of the number of size classes.

\subsection{The reuse rule}

For a size $p$, let $g_p=p/K$ and define
\[
  R_p(s)=g_p\Bigl(\Bigl\lceil \frac{s}{g_p}\Bigr\rceil+K\Bigr)\qquad(s\ge0).
\]

\begin{lemma}\label{lem:reuse}
For every size $p$ and every $s\ge0$:
(a) $s+p\le R_p(s)<s+\gamma p$;
(b) $R_p$ is non-decreasing;
(c) $R_p(s)$ is an integer multiple of $g_p$.
\end{lemma}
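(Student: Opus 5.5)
All three parts follow directly from the definition of $R_p$ together with elementary properties of the ceiling function. The plan is to write $R_p(s)=g_p\lceil s/g_p\rceil+Kg_p$ and use $Kg_p=p$. For each $s\ge0$ we have $s/g_p\le\lceil s/g_p\rceil<s/g_p+1$.

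For part~(a), multiply these inequalities by $g_p>0$ to get $s\le g_p\lceil s/g_p\rceil<s+g_p$. Adding $Kg_p=p$ gives $s+p\le R_p(s)<s+p+g_p$. Since $p+g_p=(1+1/K)p=\gamma p$, this is exactly the claimed bound $s+p\le R_p(s)<s+\gamma p$.

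For part~(b), the map $s\mapsto s/g_p$ is increasing and the ceiling function is non-decreasing, so $\lceil s/g_p\rceil$ is non-decreasing in $s$. Multiplying by the positive constant $g_p$ and adding the constant $p$ keeps the function non-decreasing.

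For part~(c), $\lceil s/g_p\rceil+K$ is an integer because $K$ is an integer, so $R_p(s)$ is an integer multiple of $g_p$. No step poses a real obstacle. The only thing to check is the strict upper bound in~(a), which relies on the strict inequality $\lceil y\rceil<y+1$.
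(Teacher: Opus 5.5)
Your proof is correct and takes the same approach as the paper. For (a), both use $g_p\lceil s/g_p\rceil\in[s,s+g_p)$ together with $Kg_p=p$ and $p+g_p=\gamma p$; for (b) and (c), the paper simply calls them immediate, and you spell out the monotonicity and integrality of the ceiling.
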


\begin{proof}
We have $g_p\lceil s/g_p\rceil\in[s,s+g_p)$ and $Kg_p=p$, so $R_p(s)\in[s+p,s+p+g_p)=[s+p,s+\gamma p)$.
Parts (b) and (c) are immediate.
\end{proof}

A \emph{reuse-rule policy} is a policy in which a positive job of size $p$ started at time $s$ keeps its machine unavailable until time $R_p(s)$.
The job itself completes at time $s+p$, and the machine stays idle during $[s+p,R_p(s))$.
A zero job leaves its machine available at once.
The time from which a machine may be used again is its \emph{release time}.
A reuse-rule policy is \emph{grid-non-idling} if it always starts the next job at the current minimum release time, on a machine attaining it.

\subsection{Stretching and compressing an optimal policy}

\begin{lemma}\label{lem:stretch}
There is a grid-non-idling reuse-rule policy $\Pi$ that respects the class orders and satisfies $S^\Pi_j\le\gamma\,S^{\Pi^*}_j$ for every job $j$ and every realization, where $\Pi^*$ is a deterministic optimal policy as in Lemma~\ref{lem:qorder} and both policies see the same outcomes.
In particular, $\cost(\Pi)\le P+\gamma\,(\OPT-P)$.
\end{lemma}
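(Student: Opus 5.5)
Take $\Pi^*$ deterministic, non-idling and respecting the class orders, as given by Lemma~\ref{lem:qorder}. Define $\Pi$ by replaying the decision tree of $\Pi^*$: $\Pi$ starts the jobs in exactly the order that $\Pi^*$ would start them on the same outcomes. Each job is started at the current minimum release time under the reuse rule, on a machine attaining that minimum. Since the order of jobs as a function of the observed outcomes is that of $\Pi^*$, the policy $\Pi$ is a valid adaptive policy, it respects the class orders, and it is grid-non-idling by construction. The only machine assignments that matter are those of $\Pi$ itself, so we need not copy the machines of $\Pi^*$.

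Fix a realization, and let $s^*_1\le s^*_2\le\dots$ be the start times of the jobs in the order of $\Pi^*$, with machines $\mu_1,\mu_2,\dots$. Let $s_1,s_2,\dots$ be the start times of the same jobs under $\Pi$. As an intermediate object, take the stretched schedule $\Pi_\gamma$, which starts the $k$-th job at $\gamma s^*_k$ on machine $\mu_k$. We first check that $\Pi_\gamma$ is feasible under the reuse rule. If job $k$ is positive of size $p$ and the next job on $\mu_k$ is $k'$, then $s^*_{k'}\ge s^*_k+p$. By Lemma~\ref{lem:reuse}(a),
\[
R_p(\gamma s^*_k)<\gamma s^*_k+\gamma p\le\gamma s^*_{k'}.
\]
A zero job imposes nothing. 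Hence in $\Pi_\gamma$ every machine has been released by the next start on it.

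The core claim is $s_k\le\gamma s^*_k$ for all $k$, proved by induction on $k$ using domination of release vectors. Let $\ell^{(k)}$ be the sorted release vector of $\Pi$ just before the $k$-th start. Let $r^{(k)}$ be the sorted vector, over machines, of the time at which each machine becomes available in $\Pi_\gamma$ after its jobs among the first $k-1$ have been processed, that is, $R_p(\gamma s^*_i)$ for its last positive job $i$, or $0$. The inductive hypothesis is $\ell^{(k)}\le r^{(k)}$ coordinatewise together with $s_i\le\gamma s^*_i$ for $i<k$.

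Feasibility of $\Pi_\gamma$ says that machine $\mu_k$ has availability at most $\gamma s^*_k$, so the minimum of $r^{(k)}$ is at most $\gamma s^*_k$. Therefore $s_k=\min\ell^{(k)}\le\min r^{(k)}\le\gamma s^*_k$.

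For the update, $\Pi$ replaces the minimum entry of $\ell^{(k)}$ by $R_p(s_k)$, or by $s_k$ if the job is zero. $\Pi_\gamma$ replaces the entry of machine $\mu_k$, which is not necessarily the minimum, by $R_p(\gamma s^*_k)$, or by $\gamma s^*_k$ if the job is zero. This coincides with the value it replaces only when that value is at most $\gamma s^*_k$, which is the case for a zero job. By monotonicity of $R_p$ (Lemma~\ref{lem:reuse}(b)), the new value for $\Pi$ is at most the new value for $\Pi_\gamma$.

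The step that needs care is the following sorting fact. Replace the minimum of a dominated sorted vector and an arbitrary entry of the dominating sorted vector by new values $x'\le y'$. The domination is preserved after re-sorting, provided the replaced entry of the dominating vector is at least its minimum, which is trivial. I would prove this by noting that removing the minimum from the smaller vector and an arbitrary entry from the larger one preserves domination of the remaining $m-1$ entries, and that inserting $x'\le y'$ into dominated sorted vectors preserves domination. This extends the argument already used in Lemma~\ref{lem:nonidle}.

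Finally, summing over jobs and taking expectations gives $\cost(\Pi)=P+\sum_j\E[S^\Pi_j]\le P+\gamma\sum_j\E[S^{\Pi^*}_j]=P+\gamma(\OPT-P)$.
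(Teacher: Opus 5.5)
Your construction of $\Pi$, the feasibility check for $\Pi_\gamma$, and the final summation match the paper. The core induction, however, has a genuine gap. The invariant $\ell^{(k)}\le r^{(k)}$ on \emph{raw} release vectors is not preserved, and the sorting fact you use to preserve it is false. Removing the minimum of $a$ and a non-minimal entry of $b$ does not keep the domination: for $a=b=(3,4)$, replacing $a_1$ by $4$ and $b_2$ by $7$ gives $(4,4)\not\le(3,7)$.

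This configuration really occurs, and your induction uses only that $\Pi^*$ is non-idling. Take $K=1$, so that $\gamma=2$ and $R_p(s)=p(\lceil s/p\rceil+1)$. Let $\Pi^*$ start positive jobs of sizes $3$ and $1$ at time $0$ on machines $A$ and $B$, and then a positive job of size $2$ at time $1$ on $B$. Both machines are now free at time $3$; let $\Pi^*$ start a positive job of size $1$ at time $3$ on $B$, with further jobs remaining. Before this fourth start, $\ell^{(4)}=r^{(4)}=(3,4)$, where in $\Pi_\gamma$ the entry $4=R_2(2)$ belongs to $B$. Afterwards $\ell^{(5)}=(4,4)$, while $r^{(5)}=(3,R_1(6))=(3,7)$.

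The cause is that $\Pi_\gamma$ idles. The machine $\mu_k$ it uses need not have the smallest raw release time, so the untouched smaller entry of $r^{(k)}$ remains as a lower bound that $\Pi$ cannot match. Your zero-job update, ``by $\gamma s^*_k$'', is also inconsistent with your own definition of $r^{(k)}$, under which a zero job leaves the entry unchanged.

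The fix is the paper's invariant. Compare $\ell^{(k)}$ with $r^{(k)}\vee\gamma s^*_k$, the sorted vector of the values $\max\{r_i,\gamma s^*_k\}$.
\begin{enumerate}[label=(\roman*)]
\item Since $\mu_k$ is released by $\gamma s^*_k$, its entry in this vector equals $\gamma s^*_k$, which is the minimum. Hence $s_k=\min\ell^{(k)}\le\gamma s^*_k$.
\item Both policies now replace a \emph{minimum} entry: by $R_p(s_k)\le R_p(\gamma s^*_k)$ if the job is positive, and not at all if it is zero. The correct replacement fact from Lemma~\ref{lem:nonidle} then preserves domination and yields $\ell^{(k+1)}\le r^{(k+1)}\vee\gamma s^*_k$.
\item Finally, $\gamma s^*_{k+1}\ge\gamma s^*_k$ gives $r^{(k+1)}\vee\gamma s^*_{k+1}\ge r^{(k+1)}\vee\gamma s^*_k$, which closes the induction.
\end{enumerate}
In the example, $\gamma s^*_5=6$, and indeed $(4,4)\le(3,7)\vee 6=(6,7)$.
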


\begin{proof}
Figure~\ref{fig:stretch} illustrates the proof in one realization.

\emph{Stretching.}
Let $\Pi_\gamma$ start every job on the same machine as~$\Pi^*$ at $\gamma$ times its start time in~$\Pi^*$.
At time $\gamma t$, the policy $\Pi_\gamma$ has observed exactly the outcomes that $\Pi^*$ has observed at time~$t$, so it can make the same decisions.
Suppose a job $i$ of size $p$ is positive, starts at time $s$ on some machine in~$\Pi^*$, and the next job on that machine starts at time $s'\ge s+p$.
In $\Pi_\gamma$ the machine is released at $R_p(\gamma s)<\gamma s+\gamma p\le\gamma s'$ by Lemma~\ref{lem:reuse}(a).
After a zero job, the next start on the machine is at $\gamma s'\ge\gamma s$.
So $\Pi_\gamma$ is a reuse-rule policy, possibly with idle time, with $S^{\Pi_\gamma}_j=\gamma S^{\Pi^*}_j$.

\emph{Compressing.}
The policy $\Pi^*$ makes its decisions in non-decreasing order of start time, so $\Pi_\gamma$ starts the jobs in the same order.
The policy~$\Pi$ starts the jobs in this order, each at its own current minimum release time and on a machine attaining it.
This is possible online: the $k$-th job of the order is determined by the outcomes of the first $k-1$ jobs, which $\Pi$ has already observed.
The order within each size class is that of~$\Pi^*$, so $\Pi$ respects the class orders.

For vectors $b\in\mathbb R^m$ and $\tau\ge0$, let $b\vee\tau$ be the vector $(\max\{b_1,\tau\},\dots,\max\{b_m,\tau\})$ sorted increasingly.
Let $a$ and $b$ be the sorted release vectors of $\Pi$ and $\Pi_\gamma$ just before the $k$-th start, and let $\tau_k$ be the start time of the $k$-th job in~$\Pi_\gamma$.
We show by induction that $a\le b\vee\tau_k$ coordinatewise.
Initially $a=b=0$.
The policy $\Pi_\gamma$ starts the $k$-th job at~$\tau_k$ on a machine released by~$\tau_k$, so the corresponding entry of $b\vee\tau_k$ equals $\tau_k$, the minimum of $b\vee\tau_k$.
Hence $\Pi$ starts the job at time $s_k=a_1\le\tau_k$.
If the job is zero, neither release vector changes.
If it is positive with size $p$, the new release vector of $\Pi$ is obtained from $a$ by replacing its minimum entry by $R_p(s_k)$ and sorting.
Likewise, since $R_p(\tau_k)\ge\tau_k$, the vector $b'\vee\tau_k$ for the new release vector $b'$ of $\Pi_\gamma$ is obtained from $b\vee\tau_k$ by replacing the entry $\tau_k$ of the used machine by $R_p(\tau_k)$ and sorting.
Since $R_p(s_k)\le R_p(\tau_k)$ by Lemma~\ref{lem:reuse}(b), the new vectors satisfy the same coordinatewise domination: replacing the minimum entries of two dominated sorted vectors by values $x\le y$ and sorting again preserves domination.
Finally $\tau_{k+1}\ge\tau_k$, so $b'\vee\tau_{k+1}\ge b'\vee\tau_k$ coordinatewise.
This completes the induction, and $S^\Pi_j=s_k\le\tau_k=\gamma S^{\Pi^*}_j$ for the $k$-th job~$j$.

Summing over the jobs, $\cost(\Pi)=P+\sum_j\E[S^\Pi_j]\le P+\gamma\sum_j\E[S^{\Pi^*}_j]=P+\gamma(\OPT-P)$.
\end{proof}

\subsection{The dynamic program}

Number the size classes $1,\dots,n$, let $p_c$ be the size of class~$c$, and let $N_c$ be the number of jobs in it.
We optimize over grid-non-idling reuse-rule policies that respect the class orders of Section~\ref{sec:prelim}, which sort each class by $q_j$ and break ties by index.
A state is a pair $(\ell,\nu)$.
Here $\ell=(\ell_1\le\dots\le\ell_m)$ is the sorted vector of release times, in absolute time.
The vector $\nu=(\nu_1,\dots,\nu_n)$ counts the remaining jobs of each class; the remaining jobs of class~$c$ are the last $\nu_c$ jobs of its order.
For $\nu_c>0$, let $j_c(\nu)$ be the first remaining job of class~$c$, and let $\ell^{(c)}$ be the vector obtained from~$\ell$ by replacing $\ell_1$ by $R_{p_c}(\ell_1)$ and sorting.
Let $e_c$ denote the $c$-th unit vector.
Define $V(\ell,0)=0$ and, for $\nu\ne0$,
\[
  V(\ell,\nu)=\min_{c:\,\nu_c>0}\Bigl[\ell_1+q_jp_c+(1-q_j)\,V(\ell,\nu-e_c)+q_j\,V\bigl(\ell^{(c)},\nu-e_c\bigr)\Bigr],
  \qquad j=j_c(\nu).
\]

\begin{lemma}\label{lem:bellman}
$V\bigl((0,\dots,0),(N_1,\dots,N_n)\bigr)$ is the minimum cost of a grid-non-idling reuse-rule policy that respects the class orders.
A policy attaining it chooses, in each state, a class $c$ attaining the minimum and starts~$j_c(\nu)$.
\end{lemma}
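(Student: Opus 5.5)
We prove the statement by induction on $|\nu|=\sum_c\nu_c$, showing that for every state $(\ell,\nu)$ the value $V(\ell,\nu)$ is the minimum, over grid-non-idling reuse-rule policies that respect the class orders, of the expected sum of completion times of the remaining jobs when the machines have release vector $\ell$. The base case $\nu=0$ is immediate, since there is nothing left to schedule and the cost is $0=V(\ell,0)$. The statement of the lemma is then the special case of the initial state $\ell=(0,\dots,0)$ and $\nu=(N_1,\dots,N_n)$.

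For the induction step we first fix what such a policy can do at state $(\ell,\nu)$. Grid-non-idling forces the next start time to be $\ell_1$, on a machine whose release time equals $\ell_1$. Respecting the class orders means the job started is $j=j_c(\nu)$ for some class $c$ with $\nu_c>0$. The only freedom is therefore the choice of $c$. The choice may be randomized, but a randomized choice is a mixture of deterministic ones and cannot beat the best of them.

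Next we track the state after the start. The job $j$ completes at $\ell_1+X_j$, and its expected completion time is $\ell_1+q_jp_c$.
\begin{itemize}
\item If $j$ is zero, the machine stays available at $\ell_1$, so the release vector is still $\ell$.
\item If $j$ is positive, the reuse rule releases the machine at $R_{p_c}(\ell_1)$, so the new release vector is $\ell^{(c)}$. Which machine attaining $\ell_1$ was used is irrelevant, because only the sorted vector matters.
\end{itemize}
In both cases the remaining count vector becomes $\nu-e_c$, and the remaining jobs are again the last jobs of each class order. The outcomes of the remaining jobs are independent of $X_j$ and of the past. So the continuation is a grid-non-idling, order-respecting reuse-rule policy from the new state, and conversely any such continuation can be attached after starting $j$.

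By the induction hypothesis the best continuation costs $V(\ell,\nu-e_c)$ if $j$ is zero and $V(\ell^{(c)},\nu-e_c)$ if $j$ is positive. These optima are attained by deterministic tables, since everything is finite. Conditioning on $X_j$ and minimizing over $c$ gives exactly the recursion defining $V(\ell,\nu)$. It also shows that choosing a minimizing $c$ in every state attains the value, which is the second claim of the lemma.

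The one point needing care is that the state $(\ell,\nu)$ really is a sufficient statistic for the future. This holds for three reasons:
\begin{itemize}
\item the absolute times appear in $\ell$;
\item the reuse rule depends only on the start time and the size;
\item the identity of the remaining jobs is determined by $\nu$ and the class orders.
\end{itemize}
Given this, the result is a standard finite-horizon Bellman argument. I expect no step to be a real obstacle, only bookkeeping. Finiteness holds because every path has length at most $N$, so $V$ is well defined on the reachable states.
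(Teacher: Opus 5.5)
Your proposal is correct and takes essentially the same route as the paper, whose proof is this finite-horizon Bellman argument: the next start is at $\ell_1$ with expected completion time $\ell_1+q_jp_c$, the successor states are $(\ell,\nu-e_c)$ and $(\ell^{(c)},\nu-e_c)$, independence of the remaining outcomes makes the state sufficient, and the recursion is acyclic because $\sum_c\nu_c$ decreases. Your explicit induction on $|\nu|$, together with your handling of randomized choices and history-dependent continuations, spells out details the paper leaves implicit.
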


\begin{proof}
In such a policy, the next job starts at time $\ell_1$, so the expected completion time of the job started in state $(\ell,\nu)$ is $\ell_1+q_jp_c$.
After the start, the state is $(\ell,\nu-e_c)$ if the job is zero and $(\ell^{(c)},\nu-e_c)$ if it is positive.
The outcomes of the remaining jobs are independent of the past, the costs add up, and the machines are identical.
So the minimum expected cost of the remaining jobs depends only on the state and satisfies the recursion.
The recursion is acyclic because $\sum_c\nu_c$ decreases.
\end{proof}

\subsection{Counting the states}

\begin{lemma}\label{lem:times}
All release times that occur in the dynamic program lie in a set of size at most $\binom{(K+1)N+n-1}{n}$.
\end{lemma}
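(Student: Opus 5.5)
The plan is to show that every release time occurring in the dynamic program is the value at $0$ of a composition of at most $N$ maps $R_{p_c}$. We then encode each such value injectively by a vector $(a_1,\dots,a_n)$ of nonnegative integers with $\sum_c a_c\le (K+1)N-1$. The number of such vectors is exactly $\binom{(K+1)N-1+n}{n}$, which is the bound in the statement.

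\emph{Step 1: reduction to compositions.} Initially all release times equal $0$. The transition $\ell\mapsto\ell^{(c)}$ replaces $\ell_1$, which is itself an earlier release time or $0$, by $R_{p_c}(\ell_1)$. Every release time therefore arises by following a chain of positive jobs back to time $0$. Each application in the chain consumes a distinct job. So every release time equals $R_{p_{c_k}}\circ\dots\circ R_{p_{c_1}}(0)$ with $k\le N$, where class $c$ occurs at most $N_c$ times among $c_1,\dots,c_k$.

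\emph{Step 2: collapsing blocks.} Lemma~\ref{lem:reuse}(c) says that $R_p(s)$ is a multiple of $g_p$. Hence $R_p(R_p(s))=R_p(s)+p$, and more generally $R_p^{(r)}(s)=g_p\bigl(\lceil s/g_p\rceil+rK\bigr)$. A maximal block of $r$ consecutive applications of the same class is thus described by one ceiling onto the grid of $g_p$ followed by $rK$ grid steps, that is, by a single integer increment. Next we bound the ceiling step. When a block of class $c$ follows a value $s$, the rounding $g_c\lceil s/g_c\rceil-s$ contributes less than one grid unit $g_c$. We charge it as one extra unit to that application. Each application thus contributes at most $K+1$ units of $g_{c}$. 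The total number of units over the chain is at most $(K+1)k\le (K+1)N$, minus one for the initial rounding at $0$, which is free.

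\emph{Step 3: the encoding.} This step is the main obstacle. The aim is to show that the final value depends only on how many units each class contributes, $a_c$, and not on the interleaving of the blocks. Summing $a_c g_c$ is not enough, because the rounding terms depend on the order. The idea I would try first is to define $a_c$ through the ceilings themselves, as $a_c=\lceil t_{\text{before}}/g_c\rceil-\lfloor\cdot\rfloor$-type counts. Then I would prove by induction on $k$ that the set of reachable values is contained in the image of a map $\Phi:\{a\in\mathbb Z_{\ge0}^n:\sum a_c\le(K+1)N-1\}\to\mathbb R$. The candidate for $\Phi$ processes the classes in a canonical order, for instance by decreasing size, applying ``round up to the grid $g_c$, then add $a_c$ grid units''. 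The inductive step would need an exchange property: for any $s$, applying a block of class $q$ after one of class $p$ reaches a value that the canonical order also reaches, with the same total unit count. If this exchange fails in general, the fallback is to count the block sequences directly. I would then bound the number of distinct values by noting that only the last block of each class, together with the total unit counts, determines the final value. This leads again to choosing $n$ nonnegative integers with bounded sum, which gives the binomial coefficient.
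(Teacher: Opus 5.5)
\paragraph{There is a genuine gap in Step 3.} Your Step 1 and the run-collapsing identity $R_p^{(r)}(s)=g_p(\lceil s/g_p\rceil+rK)$ in Step 2 are correct. Your candidate decoder $\Phi$ (classes by decreasing size, round up to the grid, add $a_c$ units) is essentially the right map. But Step 3 is the whole content of the lemma: you must show that every reachable value equals $\Phi(a)$ with $\sum_c a_c\le(K+1)N-1$, and you leave this open.

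The exchange you propose cannot preserve per-class unit counts. For $K=1$ and sizes $4$ and $1$, $R_4\circ R_1(0)=8$, while $R_1\circ R_4(0)=5$. The canonical form with one unit per class gives $5$. The value $8$ is instead $\Phi$ with $a_4=2$ and $a_1=0$: the application of $R_1$ has vanished into the coefficient of the larger class.

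For the same reason your fallback is not right as stated. A class whose occurrences all precede an occurrence of a larger class leaves no trace in the final value, so neither the last run of each class nor per-class totals determine it. Your per-application charging mixes units of different widths $g_c$, so it does not bound the coordinates of any single encoding.

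As written, $\Phi$ also rounds for classes with $a_c=0$, and this must be avoided. With $K=1$ and sizes $3$ and $2$, every value of the non-skipping $\Phi$ is $2(\lceil 3a_3/2\rceil+a_2)$, which is even. Hence the reachable value $R_3(0)=3$ is not in its image.

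\paragraph{The missing idea is absorption rather than exchange.} Do not cut the sequence $c_1,\dots,c_k$ into runs of equal classes. Cut it at the \emph{last occurrence of the class with the largest grid width} $g$, and repeat on the remaining suffix, whose largest width is then strictly smaller.

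Let a block have length $r$ and map $u$ to $v$. Lemma~\ref{lem:reuse}(a) gives $Kg\le v-u<(K+1)rg$. The lower bound comes from the final map of the block, which has width $g$. The upper bound holds because every map in the block has width at most $g$. By Lemma~\ref{lem:reuse}(c), $v$ is a multiple of $g$. Writing $v=g(\lceil u/g\rceil+a)$ then forces $K\le a\le(K+1)r-1$.

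So each block contributes a single coefficient at the coordinate of its own class, and distinct blocks have distinct classes. Decoding is your $\Phi$, applied only to the nonzero coordinates in decreasing order of width. The coordinate sum is at most $(K+1)k-\kappa\le(K+1)N-1$, where $\kappa\ge1$ is the number of blocks. The smaller-width maps preceding a larger one are swallowed into that larger block's coefficient, so no exchange is needed.
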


\begin{proof}
Write $g_c=g_{p_c}$ and $R_c=R_{p_c}$.
A release time is either $0$ or equal to $R_c(t)$ for a decision time $t$ at which a positive job of class~$c$ starts.
The time $t$ is itself $0$ or an earlier release time.
Following this chain backwards shows that every release time has the form $v=R_{c_k}\circ\dots\circ R_{c_1}(0)$, where $k\le N$ because the maps correspond to distinct positive jobs.
We encode such a composition by a vector in $\mathbb Z_{\ge0}^n$.

Split the sequence $c_1,\dots,c_k$ into blocks.
The first block ends at the last occurrence of the class with the largest grid width in the sequence, and the remaining suffix is split in the same way.
The largest grid widths of successive blocks strictly decrease.
Consider a block of length~$r$ whose largest grid width is $g$, which maps the value $u$ to the value~$v$.
Each map $R_{c'}$ in the block has $g_{c'}\le g$ and increases its argument by an amount in $[Kg_{c'},(K+1)g_{c'})$, by Lemma~\ref{lem:reuse}(a).
Hence $Kg\le v-u<(K+1)rg$.
The last map of the block has grid width~$g$, so $v$ is a multiple of~$g$, and we can write $v=g(\lceil u/g\rceil+a)$ for an integer~$a$.
Since $g\lceil u/g\rceil\in[u,u+g)$, the two bounds on $v-u$ give $K\le a\le(K+1)r-1$.

The encoding records the value~$a$ of each block at the coordinate of the block's largest class, and $0$ at every other coordinate.
Processing the nonzero coordinates in decreasing order of grid width, starting from $u=0$, reconstructs~$v$.
With $\kappa\ge1$ blocks, the coordinates sum to at most $(K+1)k-\kappa\le(K+1)N-1$.
The number of vectors in $\mathbb Z^n_{\ge0}$ with coordinate sum at most $(K+1)N-1$ is $\binom{(K+1)N+n-1}{n}$.
\end{proof}

For example, let $K=1$, so that $R_p(s)=p(\lceil s/p\rceil+1)$ and the grid width of each size is the size itself, and let the sizes be $4$, $2$ and~$1$.
Applying the maps $R_1,R_4,R_2,R_4,R_1,R_1$ in this order to~$0$ produces the values $1,8,10,16,17,18$.
The first block $R_1,R_4,R_2,R_4$ ends at the last occurrence of the largest size and maps $0$ to $16=4(\lceil0/4\rceil+4)$, so its value of~$a$ is~$4$.
The second block $R_1,R_1$ maps $16$ to $18=1\cdot(\lceil16/1\rceil+2)$, so its value of~$a$ is~$2$.
With the coordinates ordered by decreasing size, the encoding is $(4,0,2)$, and its coordinate sum $6$ is at most $(K+1)k-\kappa=10$.
Decoding computes $16$ and then~$18$ by the same two formulas, without the intermediate values $1$, $8$, $10$ and~$17$.
The maps $R_4,R_4,R_4,R_4,R_1,R_1$ have the same encoding and the same value.

\begin{lemma}\label{lem:profiles}
For every decision time $t$, the number of sorted release vectors $\ell$ with $\ell_1=t$ that occur in the dynamic program is at most $\binom{m+d-1}{d-1}$, where $d=n(K+1)$.
\end{lemma}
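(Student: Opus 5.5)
The plan is to show that, at any decision time $t=\ell_1$, every entry of the release vector lies in a fixed candidate set of size at most $d=n(K+1)$ that depends only on $t$. Then $\ell$ is a multiset of $m$ elements from $d$ candidates, and the number of such multisets is $\binom{m+d-1}{m}=\binom{m+d-1}{d-1}$.

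First, identify the candidate set. I claim every entry $\ell_i$ with $\ell_i\ge t$ has one of two forms. Either $\ell_i=t$ (for instance an initial zero, or a machine freed by zero jobs), or $\ell_i=R_{p_c}(s)$ for some class $c$ and some earlier start time $s\le t$ of a positive job of class $c$. In a grid-non-idling policy the start times are non-decreasing, so every job started so far started at a time $s\le t$. Moreover, any entry that is not the result of a positive job equals an earlier minimum, hence is at most $t$. Since it is also at least $t=\ell_1$, it equals $t$.

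Second, localize each release time within its grid. For a positive job of class $c$ started at $s\le t$ whose machine is still reserved, we have $R_{p_c}(s)\ge t$. By Lemma~\ref{lem:reuse}(b), $R_{p_c}(s)\le R_{p_c}(t)$. By Lemma~\ref{lem:reuse}(c), $R_{p_c}(s)$ is a multiple of $g_{p_c}$. So it is a multiple of $g_{p_c}$ lying in $[t,R_{p_c}(t)]$. By the formula, $R_{p_c}(t)=g_{p_c}(\lceil t/g_{p_c}\rceil+K)$ and the smallest multiple of $g_{p_c}$ that is at least $t$ is $g_{p_c}\lceil t/g_{p_c}\rceil$. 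Hence there are exactly $K+1$ such grid points.

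Third, count. The value $t$ itself needs no extra slot if it coincides with a grid point. If it does not, I will instead charge it to class $c$'s set. The cleaner route is to note that the candidate set is the union over classes $c$ of $\{g_{p_c}(\lceil t/g_{p_c}\rceil+i):0\le i\le K\}$ together with $t$. I then need to make sure the total is at most $n(K+1)$.

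The main obstacle is exactly this extra value $t$ when it is not on any grid. The fix I would try first is to note that the entry $\ell_1=t$ is fixed, so only $\ell_2,\dots,\ell_m$ vary. These form a multiset of $m-1$ values from at most $d+1$ candidates. This gives $\binom{m-1+d}{d}=\binom{m+d-1}{d}$, which is not the same bound.

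Alternatively, I would argue that $t$ is itself a grid point of some class. Every decision time is $0$ or some $R_{p_c}(s)$, as in the proof of Lemma~\ref{lem:times}. So if $t>0$, then $t$ equals $R_{p_{c}}(s)$ for the class $c$ whose job released the minimum machine. Then $t$ is a multiple of $g_{p_c}$ lying in $[t,R_{p_c}(t)]$, so it is already among that class's $K+1$ points. For $t=0$, every $\lceil 0/g\rceil$ point is $0$, so $t=0$ is included as well.

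Thus all entries lie in a set of at most $n(K+1)$ values, and the multiset count gives the bound.
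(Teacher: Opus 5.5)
Your proof is correct and follows the paper's argument: each entry of $\ell$ is either $0$ (possible only when $t=0$, where it is the $a=0$ point of every grid) or $R_{p_c}(s)$ for the last positive job on its machine, hence a multiple of $g_{p_c}$ in $[t,R_{p_c}(t)]$, which gives at most $n(K+1)$ candidate values and the multiset bound. The worry about an off-grid entry equal to $t$ was unnecessary: in the dynamic program a zero job leaves the release vector unchanged, so every entry is $0$ or of the form $R_{p_c}(s)$ directly, and this is exactly what your final resolution uses.
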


\begin{proof}
Consider a machine whose last positive job belongs to class~$c$ and started at time $s\le t$.
Its release time satisfies $t\le\ell_i=R_c(s)\le R_c(t)=g_c(\lceil t/g_c\rceil+K)$, and $\ell_i$ is a multiple of~$g_c$.
Hence $\ell_i\in\{g_c(\lceil t/g_c\rceil+a):0\le a\le K\}$.
A machine without a positive job has release time~$0$, which happens only if $t=0$ and is then contained in this set with $a=0$.
So every entry of~$\ell$ takes one of at most $d$ values, and a sorted vector of $m$ such entries is a multiset of size~$m$ drawn from $d$ values.
\end{proof}

\begin{theorem}\label{thm:dp}
For every instance with $n$ size classes and every integer $K\ge1$, a policy $\Pi$ with
\[
  \cost(\Pi)\le P+\Bigl(1+\frac1K\Bigr)(\OPT-P)
\]
can be computed in time
\[
  \binom{(K+1)N+n-1}{n}\binom{m+n(K+1)-1}{n(K+1)-1}\prod_{c=1}^n(N_c+1)\cdot\mathrm{poly}(N,m,\mathcal L),
\]
where $\mathcal L$ is the input length.
Each decision of $\Pi$ is then a table lookup.
\end{theorem}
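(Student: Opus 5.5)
The plan has two parts: derive the cost bound from Lemmas~\ref{lem:stretch} and~\ref{lem:bellman}, and bound the running time with Lemmas~\ref{lem:times} and~\ref{lem:profiles}.

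For the cost bound, Lemma~\ref{lem:stretch} gives a grid-non-idling reuse-rule policy that respects the class orders and has cost at most $P+\gamma(\OPT-P)$, with $\gamma=1+1/K$. By Lemma~\ref{lem:bellman}, the value $V((0,\dots,0),(N_1,\dots,N_n))$ is the minimum cost over exactly this class of policies. So it is at most that bound, and the policy that picks a minimizing class in each state attains it. Each decision then reads the stored minimizing class for the current state $(\ell,\nu)$ and starts $j_c(\nu)$ at time $\ell_1$; this is the table lookup.

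For the running time, I would evaluate the recursion by memoization, using forward exploration from the initial state, so that only reachable states are ever created. A reachable state is a pair $(\ell,\nu)$. The count vector $\nu$ ranges over $\prod_c(N_c+1)$ values. For $\ell$, the minimum entry $\ell_1$ is a decision time, which is $0$ or a release time that occurs. By Lemma~\ref{lem:times} this gives at most $\binom{(K+1)N+n-1}{n}$ choices. Given $\ell_1=t$, Lemma~\ref{lem:profiles} bounds the number of sorted vectors $\ell$ by $\binom{m+n(K+1)-1}{n(K+1)-1}$. Multiplying the three factors bounds the number of states. Each state evaluates at most $n$ candidates. Each candidate needs one application of $R_{p_c}$, one sorted replacement in an $m$-vector, and lookups in a hash table keyed by $(\ell,\nu)$. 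These cost $\mathrm{poly}(N,m,\mathcal L)$ provided the numbers stay of polynomial bit length.

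The main obstacle is this bit-length issue, together with the need to enumerate or hash states without going through a large superset. I would use the encoding from the proof of Lemma~\ref{lem:times}. Each release time is obtained from the input rationals $p_c/K$ by at most $n$ block steps of the form $v=g(\lceil u/g\rceil+a)$, with $a\le(K+1)N$. For the bit length, I would argue that every release time is an integer combination of the grid widths $g_c$ with coefficients bounded by $(K+1)N$. Its denominator then divides the lcm of the denominators of the $g_c$, and its numerator has polynomial size. Since $K\le$ the exponent factor already present, the size is polynomial in $\mathcal L$, $N$ and $\log K$.

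Alternatively, I would represent states directly by this compact encoding: the encoding vector of $\ell_1$, plus for each machine its pair $(c,a)$ as in Lemma~\ref{lem:profiles}, plus $\nu$. This makes the state space an explicit product set of the claimed size. The recursion can then be evaluated over it in any order of decreasing $\sum_c\nu_c$. The only subtlety is canonicalizing equal times that have different encodings. I would handle this by computing the actual rational values and sorting them.
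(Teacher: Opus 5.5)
Your proposal is correct and follows the paper's proof. Lemmas~\ref{lem:stretch} and~\ref{lem:bellman} give the cost bound, Lemmas~\ref{lem:times} and~\ref{lem:profiles} together with $\prod_c(N_c+1)$ bound the number of reachable states, and all numbers have polynomial bit length; your memoized forward exploration just makes explicit that only reachable states are evaluated.

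One correction: release times need not be integer combinations of the $g_c$ with coefficients at most $(K+1)N$, because the term $g\lceil u/g\rceil$ can require a large coefficient. For example, with $K=1$, sizes $1$ and $3/10$ on one machine give the release time $3/2$, and every integer solution of $3/2=a+3b/10$ has $|b|\ge5>(K+1)N$. Your bit-length conclusion still holds, for a simpler reason. By Lemma~\ref{lem:reuse}(c) every release time is an integer multiple of a single $g_c$, hence of $1/(K\cdot\mathrm{lcm})$. It is also less than $\gamma N\max_c p_c$.
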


\begin{proof}
By Lemma~\ref{lem:stretch}, some policy in the class optimized by the dynamic program has cost at most $P+\gamma(\OPT-P)$.
By Lemma~\ref{lem:bellman}, the dynamic program finds a policy in this class of minimum cost.
By Lemmas~\ref{lem:times} and~\ref{lem:profiles}, the number of states is at most the product of the three factors.
Each transition takes $O(n+m)$ arithmetic operations.
All release times are multiples of $1/(K\cdot\mathrm{lcm})$, where $\mathrm{lcm}$ is the least common multiple of the size denominators.
All values of $V$ are rationals whose denominators divide the product of $K\cdot\mathrm{lcm}$ and the denominators of all~$q_j$.
So the numbers have polynomial bit length.
\end{proof}

\begin{corollary}\label{cor:fewsizes}
For every fixed number $n$ of size classes and every fixed $\eta\in(0,1]$, a $(1+\eta)$-approximation can be computed in time $(Nm)^{O(n/\eta)}\mathrm{poly}(\mathcal L)$.
\end{corollary}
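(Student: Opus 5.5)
The plan is to apply Theorem~\ref{thm:dp} with $K=\lceil 1/\eta\rceil$ and then bound each of the three factors in the running time by $(Nm)^{O(n/\eta)}$ when $n$ and $\eta$ are fixed.

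\emph{Approximation.} With $\gamma=1+1/K\le 1+\eta$, Theorem~\ref{thm:dp} gives
\[
  \cost(\Pi)\le P+\gamma(\OPT-P)\le \gamma\,\OPT\le(1+\eta)\OPT .
\]
No size reduction is needed, since $n$ is already fixed.

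\emph{Running time.} Note that $K\le 2/\eta$ for $\eta\in(0,1]$, and hence $K+1\le 3/\eta$. We bound the three factors in turn.
\begin{itemize}
\item \emph{Release times.} The first factor satisfies $\binom{(K+1)N+n-1}{n}\le\bigl((K+1)N+n\bigr)^n$. Because $n$ and $\eta$ are fixed, this is $N^{n}\cdot O(1)$, which is polynomial in $N$.
\item \emph{Release vectors.} Set $d=n(K+1)=O(n/\eta)$. The second factor is $\binom{m+d-1}{d-1}\le (m+d)^{d-1}$, which is $m^{O(n/\eta)}$ up to a constant depending only on $n$ and $\eta$. This is the source of the exponent $n/\eta$.
\item \emph{Counts.} The third factor satisfies $\prod_c(N_c+1)\le (N+1)^n$.
\end{itemize}

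Multiplying the three bounds together with the $\mathrm{poly}(N,m,\mathcal L)$ overhead gives $(Nm)^{O(n/\eta)}\mathrm{poly}(\mathcal L)$.

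The only step needing care is checking that the binomial $\binom{m+d-1}{d-1}$ stays polynomial in $m$ when $d$ is fixed. It does, via $\binom{a}{b}\le a^b$ with $b=d-1$ treated as a constant. One also has to ensure that constants depending on $n$ and $\eta$, such as $d^{d}$ or $(3/\eta)^n$, can be absorbed. When $Nm\ge 2$ they are bounded by $(Nm)^{O(n/\eta)}$ with an appropriate constant, or they can be treated as $O(1)$ since $n$ and $\eta$ are fixed. This corollary is exactly Theorem~\ref{thm:fewsizes}.
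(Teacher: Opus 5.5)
Your proposal is correct and follows essentially the same route as the paper. Both take $K=\lceil 1/\eta\rceil$ in Theorem~\ref{thm:dp}, use $P+\gamma(\OPT-P)\le(1+\eta)\OPT$, and bound the three factors by $((K+1)N+n)^n$, a power of $m+n(K+1)$ with exponent $O(n/\eta)$, and $(N+1)^n$, which multiply to $(Nm)^{O(n/\eta)}$ for fixed $n$ and $\eta$.
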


\begin{proof}
Take $K=\lceil1/\eta\rceil$ in Theorem~\ref{thm:dp}, so that $\cost(\Pi)\le P+(1+\eta)(\OPT-P)\le(1+\eta)\OPT$.
The three factors are at most $((K+1)N+n)^n$, $(m+n(K+1))^{n(K+1)}$ and $(N+1)^n$, and their product is $(Nm)^{O(n/\eta)}$ for fixed $n$ and~$\eta$.
This proves Theorem~\ref{thm:fewsizes}.
\end{proof}

For comparison, the dynamic program of Antoniadis, Hoeksma, Schewior, and Uetz has $O(N^{2n}\varepsilon^{-2n}m^{n\varepsilon^{-2z}})$ states, where $1\le z\le n$ depends on how the sizes group~\cite[Lemma~6.8]{AHSU25}.

\subsection{Proof of Theorem~\ref{thm:main}}\label{sec:qptas-proof}

Let $\eta\in(0,1]$.
Replacing $\eta$ by $1/\lceil1/\eta\rceil\ge\eta/2$, we may assume that $1/\eta$ is an integer; then all rounded sizes below have bit length polynomial in $\mathcal L$ and~$1/\eta$.
Let $\varepsilon=\eta/5$.
If $m=1$ or $m\ge N$, an optimal policy is known (Section~\ref{sec:reduction}), so assume $2\le m<N$.
Define the medium jobs as in Section~\ref{sec:reduction}.
Round the size of every medium job $j$ up to $p'_j=\sigma(1+\varepsilon)^{\lceil\log_{1+\varepsilon}(p_j/\sigma)\rceil}$, so that $p_j\le p'_j<(1+\varepsilon)p_j$, and keep $q_j$.
Call the result the rounded medium instance.
Its sizes lie in $[\sigma,(1+\varepsilon)T)$, so it has $n\le\log_{1+\varepsilon}(N^3/\varepsilon^2)+2=O(\varepsilon^{-1}\log(N/\varepsilon))$ size classes.

\emph{Rounding loses a factor $1+\varepsilon$.}
Take an optimal policy for the medium jobs and run it on the rounded medium instance with all start times multiplied by $1+\varepsilon$.
A positive job that occupies $[s,s+p_j)$ in the original policy occupies $[(1+\varepsilon)s,(1+\varepsilon)s+p'_j)$, which is contained in $[(1+\varepsilon)s,(1+\varepsilon)(s+p_j))$.
So the stretched policy is valid, and each of its completion times is at most $1+\varepsilon$ times the original one.
Hence the optimum of the rounded medium instance is at most $(1+\varepsilon)\OPT(M)$.

\emph{The dynamic program.}
Apply Theorem~\ref{thm:dp} to the rounded medium instance with $K=\lceil1/\varepsilon\rceil$.
This gives a policy $A$ whose cost on the rounded instance is at most $(1+1/K)$ times its optimum, which is at most $(1+\varepsilon)^2\OPT(M)$.
Run $A$ on the original medium jobs and keep each machine unavailable until the release time computed from the rounded size.
The start times are the same as on the rounded instance, and every completion time is at most the corresponding one there, since $p_j\le p'_j$.
So the cost of $A$ on the medium jobs is at most $(1+\varepsilon)^2\OPT(M)$.

\emph{The ratio.}
By Theorem~\ref{thm:reduction} with $\alpha=(1+\varepsilon)^2$, the policy $\Pi_A$ has cost at most $((1+\varepsilon)^2+2\varepsilon)\OPT\le(1+5\varepsilon)\OPT=(1+\eta)\OPT$, using $\varepsilon\le1$.

\emph{The running time.}
We have $K+1\le2/\varepsilon+1$ and $d=n(K+1)=O(\varepsilon^{-2}\log(N/\varepsilon))$.
After the size reduction, the number of possible release times is polynomial.
Every release time is a multiple of some $g_c=p'_c/K$ and at most $\gamma Np'_{\max}$, and $p'_{\max}/p'_c<(1+\varepsilon)N^3/\varepsilon^2$.
So there are at most $1+\sum_c(\gamma NKp'_{\max}/p'_c+1)=O(nKN^4/\varepsilon^2)$ release times, and this bound can replace the first factor in Theorem~\ref{thm:dp}.
The second factor is at most $(m+d)^d\le(N+d)^d=\exp(O(\varepsilon^{-2}\log^2(N/\varepsilon)))$, since $m<N$.
The third is at most $(N+1)^n=\exp(O(\varepsilon^{-1}\log N\log(N/\varepsilon)))$.
The rounding and the reduction take polynomial time.
So the total running time is $2^{O(\eta^{-2}\log^2(N/\eta))}\mathrm{poly}(\mathcal L)$.
\qed

\section{A polynomial-time \texorpdfstring{$m$}{m}-approximation}\label{sec:greedy}

This section analyzes a simple adaptive priority rule.
Fix distinct ranks $\rho(j)$ for the jobs.
For $m\ge2$, the policy $G$ maintains the multiset $A$ of the $m-1$ largest realized processing times among the jobs started so far, padded with zeros, and its minimum $\theta=\min A$.
Whenever a machine is free, $G$ starts a remaining job that minimizes, in lexicographic order,
\[
  \bigl(q_j\min\{p_j,\theta\},\;q_j,\;\rho(j)\bigr).
\]
After each start $G$ updates $A$.
The policy is non-idling: whenever a machine is free, for instance right after a zero outcome, it starts the next job.
For $m=1$, $G$ is SEPT.
Initially $\theta=0$, so $G$ first tries the jobs with the smallest~$q_j$.
Each decision scans the remaining jobs, so $G$ runs in $O(N^2)$ arithmetic operations.

\begin{theorem}\label{thm:greedy}
For every instance, $\cost(G)-P\le\min\{m,\,1+\Lambda\}\cdot(\OPT-P)$, where $\Lambda=\sum_jq_j$.
In particular, $\cost(G)\le\min\{m,1+\Lambda\}\cdot\OPT$.
\end{theorem}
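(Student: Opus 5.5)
The plan is to work with the surrogate $\Phi(\Pi)=\sum_{k=0}^{N-1}\E[W_{m-1}(S_k)]$, defined for any policy via the order in which it starts jobs. The proof has three steps:
\begin{enumerate}[(i)]
\item show that $G$ minimizes $\Phi$ over all policies;
\item use Lemma~\ref{lem:freetime} to compare $\Phi$ with $\OPT-P$ and with $\cost(G)-P$, which yields the factor $m$;
\item prove a second comparison for the factor $1+\Lambda$.
\end{enumerate}
The case $m=1$ is SEPT and is handled by the same argument with $W_0$, so I assume $m\ge2$.

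For step~(i), I view $\Phi$ as a sequential decision problem on orders alone; timing is irrelevant since $W_{m-1}(S_k)$ depends only on the set of started jobs. Given the current multiset $A$ with minimum $\theta$, adding a job with realized value $x$ increases $W_{m-1}$ by $\min\{x,\theta\}$, whose expectation is $q_j\min\{p_j,\theta\}$. The total $\Phi$ equals $\sum_k (N-k)\cdot(\text{increment at step }k)$, so cheap increments should come first. I would prove that the greedy choice is optimal by induction on the number of remaining jobs, using an exchange argument in the style of Lemma~\ref{lem:qorder}. Take an optimal adaptive order whose first job $x$ is not the greedy job $a$. Construct a modified policy that starts $a$ first and then mimics the original tree, treating the node of $a$ as the node of $x$.

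The subtlety is that $\theta$ evolves, so increments of later jobs depend on earlier outcomes. The key facts I would use are these:
\begin{itemize}
\item $\theta$ is non-decreasing.
\item $\min\{x,\theta\}$ is monotone in $\theta$.
\item A job with smaller $q_j\min\{p_j,\theta\}$ stays weakly better, together with the tie-break by $q_j$, as $\theta$ grows. This holds because $q_j\min\{p_j,\theta'\}$ increases at rate $q_j$ in $\theta'$ while $\theta'<p_j$.
\end{itemize}
Possibly an independent coin is needed, as in Lemma~\ref{lem:qorder}, to match the distribution of $A$ after swapping two jobs with different sizes. I expect this exchange to be the main obstacle. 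The effect of a positive large value on the future threshold couples the two jobs, and I would first try to reduce to the case where $a$ and $x$ are adjacent in the tree and compare the four outcome combinations explicitly.

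For step~(ii), fix a non-idling optimal policy $\Pi^*$ (Lemma~\ref{lem:nonidle}). Summing Lemma~\ref{lem:freetime} over $k$ gives
\[
\Phi(G)\le\Phi(\Pi^*)\le m\sum_k\E[F^{\Pi^*}_k]=m(\OPT-P).
\]
Applying Lemma~\ref{lem:freetime} to $G$, which is non-idling, gives $\cost(G)-P=\sum_k\E[F^G_k]\le\Phi(G)$. Together these yield the factor $m$.

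For step~(iii), I would show $F_k\ge W_{m-1}(S_k)-\theta_k$ for non-idling policies, where $\theta_k$ is the $(m-1)$-st largest value. The argument bounds the remaining load and uses that $\theta_k$ is at most the second earliest free time, as noted in the introduction. Conversely, I would bound $\sum_k\E[\theta_k]$ by $\Lambda(\OPT-P)$: $\theta_k$ is nonzero only after positive jobs, and each positive job with size at least $\theta_k$ forces a delay. Concretely, I would charge $\theta_k\le$ the second earliest free time $\le$ start times of later jobs, summed over the at most $\Lambda$ expected positive jobs. This gives $\Phi(G)\le\Phi(\Pi^*)\le(1+\Lambda)(\OPT-P)$, and the final statement follows from $P\le\OPT$.
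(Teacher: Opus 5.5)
Your overall structure (the greedy rule minimizes the surrogate, Lemma~\ref{lem:freetime} sandwiches it, and a separate comparison gives $1+\Lambda$) and your step~(ii) match the paper. However, both nontrivial steps have genuine gaps.

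\textbf{Step (i).} The exchange is left open, and the structural fact you rely on is false. The greedy ranking is not preserved as $\theta$ grows, even when $q_a\le q_b$. Take $m=2$, $A=\{1\}$, and remaining jobs $a,j,b$ with $(p,q)=(10,0.1)$, $(10,0.5)$, $(1,0.2)$. The keys are $0.1<0.2<0.5$, so greedy picks~$a$. Now suppose an order starts with $j$ and $j$ is positive. Then $\theta$ jumps to $10$, the key of $a$ becomes $1$, the key of $b$ stays $0.2$, and the greedy continuation runs $b$ before~$a$. So $a$ and $j$ are not adjacent, and a two-job swap does not end in the same state. An independent coin does not repair this. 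In Lemma~\ref{lem:qorder} the coin works because the exchanged jobs have equal size. Here a positive $a$ and a positive $j$ leave different multisets~$A$, which changes the increments of every later job.

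What is needed is the paper's case analysis:
\begin{itemize}
\item By induction, assume the order is greedy after its first job.
\item Handle the cases $p_a\le\theta$ and $p_j\le\theta<p_a$ by a plain swap.
\item For $p_a,p_j>\theta$, show $q_a=\min_bq_b$. Let $u=\min\{p_a,p_j,\theta_2\}$ and $\mathcal B=\{b:\mu_b<q_au\}$, and check that $p_b<u$ for $b\in\mathcal B$. Then these jobs never change $A$, and they are exactly the jobs greedy runs between a positive $j$ and $a$, or right after a positive~$a$.
\item Swap $a$ and $j$ while carrying the block $\mathcal B$ along. The four outcome combinations give an expected change of $(q_a-q_j)(\theta+\sum_{b\in\mathcal B}\mu_b)\le0$.
\end{itemize}

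\textbf{Step (iii).} Both inequalities you propose are false.

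First, $F_k\ge W_{m-1}(S_k)-\theta_k$ fails for $m=2$ and deterministic unit jobs. There $F_k=\lfloor k/2\rfloor$ while $W_1(S_k)-\theta_k=k-2$, so the inequality breaks at $k=5$. Pointwise, the two quantities can differ by the full factor $m$ of Lemma~\ref{lem:freetime}.

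Second, $\sum_k\E[\theta_k]\le\Lambda(\OPT-P)$ fails. Take $m=2$, one job of size $p>3$ and probability $q<1$, and two deterministic unit jobs. Then $\OPT-P=q$, and every optimal policy starts the random job among the first two. Hence $\sum_k\E[\theta_k]\ge qp>(2+q)q=\Lambda(\OPT-P)$. The threshold can be huge while $W_{m-1}$ stays small, because $W_{m-1}$ discards the largest values. So $\theta_k$ cannot be paid for separately, and the second earliest free time is not bounded by later start times.

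The missing idea is to compare increments. Let $\ell_1(k)=F_k\le\ell_2(k)$ be the two earliest free times and $\chi_k$ the outcome indicator of the job $J_k$ started next. If $J_k$ is positive with size $p$, then $W_{m-1}$ grows by $\min\{p,\theta_k\}$ and $F$ grows by $\min\{p,\ell_2(k)-\ell_1(k)\}$. Since $\theta_k\le\ell_2(k)$, this gives $W_{m-1}(S_{k+1})-W_{m-1}(S_k)\le F_{k+1}-F_k+\chi_kF_k$. Sum with weights $N-1-k$. Using $\E[\chi_k\mid\text{history}]=q_{J_k}$ and $F_l\ge F_k$ for $l>k$, the extra term is at most $\Lambda\sum_l\E[F_l]$. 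This yields $\Phi(\Pi)\le(1+\Lambda)\sum_j\E[S^\Pi_j]$ for every non-idling~$\Pi$, in particular for~$\Pi^*$.
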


For $m=1$, consider a non-idling policy that starts the jobs $J_1,\dots,J_N$ in this order, where $J_k$ may depend on earlier outcomes.
The processing time of $J_k$ adds to its own completion time and to those of all later jobs, and $J_k$ is chosen before its outcome is revealed.
So the cost is $\sum_k(N-k+1)\E[\mu_{J_k}]$, which SEPT minimizes~\cite{Rot66}; by Lemma~\ref{lem:nonidle} this proves the theorem for $m=1$.
From now on assume $m\ge2$ and write $r=m-1$.
The analysis compares policies through the quantity
\[
  H(\Pi)=\E\Bigl[\sum_{k=0}^{N-1}W_r(S_k)\Bigr],
\]
which depends only on the adaptive order in which $\Pi$ starts the jobs.
For every policy $\sum_jS_j=\sum_kF_k$, and for a non-idling policy Lemma~\ref{lem:freetime} applies on every realization.
So for every non-idling policy $\Pi$,
\begin{equation}\label{eq:sandwich}
  \sum_j\E\bigl[S^\Pi_j\bigr]\;\le\;H(\Pi)\;\le\;m\sum_j\E\bigl[S^\Pi_j\bigr].
\end{equation}

\subsection{The greedy rule minimizes \texorpdfstring{$H$}{H}}

Consider the increments $D_k=W_r(S_{k+1})-W_r(S_k)$.
If the multiset of the $r$ largest realized processing times so far has minimum $\theta$, then a new realized value $x$ changes $W_r$ by $\min\{x,\theta\}$.
So the expected increment of job~$j$ is $d_j(\theta)=q_j\min\{p_j,\theta\}$.
Since $W_r(S_0)=0$, we have $\sum_{k=0}^{N-1}W_r(S_k)=\sum_{k=0}^{N-1}(N-1-k)D_k$.
For the remaining decisions, the relevant state is the set $R$ of remaining jobs together with the multiset~$A$.
If $|R|=s$ and the next $s$ increments are $D'_1,\dots,D'_s$, the future part of the objective is $\sum_{i=1}^s(s-i)D'_i$.

\begin{lemma}\label{lem:greedyopt}
For every multiset $A$ and every set $R$ of remaining jobs, the greedy rule minimizes $\E[\sum_{i=1}^s(s-i)D'_i]$ over all adaptive orders.
\end{lemma}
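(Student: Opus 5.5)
The proof will go by induction on $s=|R|$, with an interchange argument at the root. For $s\le1$ the objective is $0$ and there is nothing to prove. For $s\ge2$, I would first record two monotonicity facts.

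\emph{First fact.} Along any realization the threshold $\theta=\min A$ is non-decreasing, since $A$ only ever loses its minimum to a larger value.

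\emph{Second fact.} For each job, $d_j(\theta)=q_j\min\{p_j,\theta\}$ is non-decreasing and $q_j$-Lipschitz in $\theta$. In particular, two jobs with the same $q_j$ keep their relative order as $\theta$ grows, and among jobs with $p_j\ge\theta$ the ranking is by $q_j$.

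These facts explain the tie-break by $q_j$. A job with larger $q_j$ is more likely to push $\theta$ up, which raises the future increments. A job with smaller $q_j$ has a smaller slope in $\theta$, so it gains least by being postponed.

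Fix $(A,R)$. By Lemma~\ref{lem:nonidle}-style finiteness, consider an optimal adaptive order $T$ for the future objective $\sum_i(s-i)D'_i$. By induction on $s$, after its first job $x$ we may replace each continuation by the greedy order from the resulting state. Let $a$ be the greedy choice at $(A,R)$, and assume $x\ne a$. The goal is to build an order $T'$ that starts $a$ first and is no worse. Following the proof of Lemma~\ref{lem:qorder}, I would couple the two trees rather than compare adjacent swaps only.

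In $T$, job $a$ is started at some random position $\pi\ge2$, and before it the jobs started are $x$ followed by greedy choices. The order $T'$ starts $a$, then runs the same sequence as $T$ with $a$ deleted, treating the slot of $a$ as skipped. Since the weights $s-i$ decrease by one per position, the cost difference $\E_T-\E_{T'}$ splits into three parts:
\begin{itemize}
\item the gain from moving $a$'s increment forward, $\E[\pi-1]$ copies of its expected increment;
\item minus the loss from shifting each intermediate job one position later;
\item plus the change in the intermediate increments caused by $a$'s outcome possibly raising $\theta$ earlier.
\end{itemize}
To control the third part, I would use the formula $\min\{X,\theta\}$. If $a$ is positive with $p_a>\theta$, then $\theta$ can rise to at most $\min\{p_a,\theta_2\}$, where $\theta_2$ is the second smallest entry of $A$. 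This costs each later job $j$ at most $q_j$ times that rise. It is compared against $d_a(\theta)\le d_j(\theta)$ and $q_a\le q_j$ when $d_a(\theta)=d_j(\theta)$.

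The main obstacle is exactly this interaction. After $x$ is revealed, $\theta$ may increase, and then $a$ need not remain the greedy minimum. For example, if $p_a>\theta$ then $d_a$ grows with $\theta$, while a job with $p_j\le\theta$ has a fixed $d_j$. So a naive adjacent interchange "$x$ then $a$" versus "$a$ then $x$" does not close the induction.

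My first attempt at handling this is a case split on whether $p_a\le\theta$.
\begin{itemize}
\item If $p_a\le\theta$, then $a$'s increment $q_ap_a$ is frozen forever, it is the smallest possible increment among all remaining jobs at all future thresholds, and a plain pairwise exchange works.
\item If $p_a>\theta$, then $d_a(\theta)=q_a\theta$ and $a$ has the smallest slope $q_a$ among jobs that tie or beat it. I would show that every job $j$ started before $a$ in $T$ satisfies $d_j(\theta')-d_a(\theta')\ge d_j(\theta)-d_a(\theta)\ge0$ for $\theta'\ge\theta$, or else $j$ has $p_j\le\theta$. In the latter case a direct two-job computation, in the spirit of the displayed identity in Lemma~\ref{lem:qorder}, shows that the exchange gains at least $(q_j-q_a)$ times a non-negative expected delay.
\end{itemize}
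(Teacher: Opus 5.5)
\textbf{There is a genuine gap: the case $p_a>\theta$ and $p_x>\theta$.} This is exactly the obstacle you name, and neither of your two devices closes it. Put $u=\min\{p_a,p_x,\theta_2\}$, and let $\mathcal B$ be the set of remaining jobs $b\ne a,x$ with $\mu_b<q_au$. In this case $q_a=\min_cq_c$, so these jobs have $p_b<u$.

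\emph{Your dichotomy is false.} The dichotomy in your last bullet fails for every $b\in\mathcal B$ with $p_b>\theta$. Such a job has $d_b(\theta)\ge d_a(\theta)$. But once $x$ is positive the threshold is at least $u$, so $d_b=\mu_b<q_au=d_a$, and the greedy continuation of $T$ starts $b$ before $a$. The gap to $a$ shrinks and changes sign.

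\emph{Your coupling can be worse than $T$.} Your global coupling starts $a$ and then replays $T$ with the slot of $a$ skipped. This can be strictly more expensive than $T$, so no accounting of your three terms can show it is no worse. Take $m=2$, $A=\{0\}$, and jobs $a,x,b$ with $(p,q)$ equal to $(100,\tfrac1{10})$, $(100,\tfrac12)$ and $(1,\tfrac15)$.
\begin{itemize}
\item Greedy starts $a$.
\item The order $T$ that starts $x$ and then continues greedily runs $b,a$ after a positive $x$ and $a,b$ after a zero $x$. Its objective is $\tfrac12\cdot\tfrac15=\tfrac1{10}$.
\item Your $T'$ is the list $a,x,b$, with objective $\tfrac1{10}\cdot\tfrac12\cdot100=5$.
\end{itemize}
In general, when $a$ and $x$ are both positive, $T$ produces the increments $\theta,X_{\mathcal B},u$, while your $T'$ produces $\theta,u,X_{\mathcal B}$. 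That is worse by $|\mathcal B|\,u-\sum_{b\in\mathcal B}X_b>0$.

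\emph{Sign error.} The signs in your three-part decomposition are reversed. The weights decrease with the position, so moving $a$ to the front raises its weight. The gain comes from $a$'s smaller increment and from shifting the intermediate jobs back.

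\textbf{What is missing} is the paper's treatment of this case: compute the greedy continuation of $T$ explicitly and let $T'$ mirror it.
\begin{itemize}
\item Once $a$ or $x$ has been positive, the jobs of $\mathcal B$ do not change $A$, so there each contributes exactly $X_b$.
\item Using the tie-breaks by $q$ and by rank, one checks the greedy continuation. After a positive $x$ it runs $\mathcal B$ in key order and then $a$. After a zero $x$ it runs $a$ and then, if $a$ is positive, $\mathcal B$.
\item Let $T'$ start $a$. If $a$ is positive, run $\mathcal B$ and then $x$. If $a$ is zero, run $x$ and then, if $x$ is positive, $\mathcal B$.
\item Afterwards $T'$ continues as $T$ from the common state. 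This state is the same because the same jobs have started and the final multiset $A$ does not depend on the order of insertion.
\end{itemize}
The outcomes \emph{both zero} and \emph{both positive} give identical increment sequences. In $T'$ the block $\theta,X_{\mathcal B}$ sits one position earlier than in $T$ when only $a$ is positive, and one position later when only $x$ is positive. So the expected change is $(q_a-q_x)\bigl(\theta+\sum_{b\in\mathcal B}\mu_b\bigr)\le0$.

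Your case $p_a\le\theta$ and the subcase $p_x\le\theta$ for the first job $x$ are fine as adjacent swaps. They are the paper's first two cases, and in the second the swap gains $d_x(\theta)-d_a(\theta)\ge0$.
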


\begin{proof}
We use induction on $s=|R|$; the case $s\le1$ is trivial.
The objective equals $(s-1)D'_1$ plus the future part of the objective from the state after the first decision, which has $s-1$ remaining jobs.
So by induction, every order may continue greedily after its first decision without becoming worse.
Let $a$ be the job chosen by the greedy rule, let $\theta=\min A$, and write $\mu_b=q_bp_b$.
Take an optimal order that is greedy after its first decision, and suppose it starts with a job $j\ne a$.
We construct an order that starts with~$a$ and is no worse.

\emph{Case 1: $p_a\le\theta$.}
Then $d_a(\theta')=\mu_a$ for every $\theta'\ge\theta$, while $d_b(\theta')$ is non-decreasing in $\theta'$ for every other job~$b$.
The threshold never decreases, so after $j$ the job $a$ is still the greedy choice.
The optimal order thus starts with $j,a$.
Swap them to $a,j$.
In both orders, $a$ contributes the increment $X_a$, because $p_a\le\theta$ is at most the threshold at any later time, and $j$ contributes $\min\{X_j,\theta\}$, because $a$ does not change~$A$.
Both orders end in the same state, and the swap changes the expected objective by $\mu_a-d_j(\theta)\le0$.

\emph{Case 2: $p_a>\theta$ and $p_j\le\theta$.}
Now $j$ does not change~$A$, so after $j$ the job~$a$ is again the greedy choice.
Swapping $j,a$ to $a,j$ leads to the same state and changes the expected objective by $q_a\theta-\mu_j=d_a(\theta)-d_j(\theta)\le0$.

\emph{Case 3: $p_a>\theta$ and $p_j>\theta$.}
First, $q_a=\min_bq_b$ and $q_a\theta\le\mu_b$ for every remaining job~$b$.
If $\theta>0$, a job $b$ with $q_b<q_a$ would have $d_b(\theta)\le q_b\theta<q_a\theta=d_a(\theta)$, contradicting the choice of~$a$; if $\theta=0$, the second key selects a job of minimum~$q$.
Moreover $q_a\theta=d_a(\theta)\le d_b(\theta)\le\mu_b$.

Let $\theta_2$ be the second smallest element of~$A$, with $\theta_2=\infty$ if $r=1$.
After a positive outcome of~$a$, the new threshold is $\theta_a=\min\{p_a,\theta_2\}$; after a positive outcome of $j$ it is $\theta_j=\min\{p_j,\theta_2\}$.
Let $u=\min\{p_a,p_j,\theta_2\}$ and
\[
  \mathcal B=\{b\in R\setminus\{a,j\}:\mu_b<q_au\}.
\]
Every $b\in\mathcal B$ has $q_b\ge q_a$ and therefore $p_b=\mu_b/q_b<u$.
So after $a$ or $j$ is positive, the jobs of~$\mathcal B$ do not change~$A$, and each of them contributes the increment~$X_b$.

We first determine the optimal order, which is greedy after~$j$.
If $j$ is positive, the threshold is $\theta_j\ge u$.
Every $b\in\mathcal B$ then has key $d_b(\theta_j)=\mu_b<q_au$, and $a$ has key $q_a\min\{p_a,\theta_j\}=q_au$.
Every other job $c$ has key $\min\{\mu_c,q_c\theta_j\}\ge q_au$.
If a job $c\notin\mathcal B$ has key exactly $q_au$ and $q_c=q_a$, then $\mu_c\ge q_au\ge q_a\theta$, so $c$ and $a$ had equal first and second keys at the start, and the rank favours~$a$.
So the order processes $\mathcal B$ in the order of the keys $(\mu_b,q_b,\rho(b))$, during which $A$ does not change, and then~$a$.
If $j$ is zero, the order starts~$a$ next.
If $a$ is then positive, the same argument with $\theta_a\ge u$ shows that it processes~$\mathcal B$ next, in the same order.

The new order starts with~$a$.
If $a$ is positive, it processes $\mathcal B$ in the same order and then starts~$j$.
If $a$ is zero, it starts~$j$, and if $j$ is positive it then processes~$\mathcal B$ in the same order.
In each case it then continues as the optimal order does from the same state.
Writing $X_{\mathcal B}$ for the sequence of outcomes of the jobs in $\mathcal B$ in this order, the increments are as follows.
\begin{center}
\begin{tabular}{lll}
\toprule
outcomes of $a$ and $j$ & optimal order & new order\\
\midrule
both zero & $0,0$ & $0,0$\\
$a$ positive, $j$ zero & $0,\theta,X_{\mathcal B}$ & $\theta,X_{\mathcal B},0$\\
$a$ zero, $j$ positive & $\theta,X_{\mathcal B},0$ & $0,\theta,X_{\mathcal B}$\\
both positive & $\theta,X_{\mathcal B},u$ & $\theta,X_{\mathcal B},u$\\
\bottomrule
\end{tabular}
\end{center}
In every row the two segments have the same length and end in the same state, because the final multiset $A$ does not depend on the order of insertion.
The weights of consecutive positions differ by exactly one.
So the new order is worse by $\theta+\sum_{b\in\mathcal B}X_b$ in the second row and better by the same amount in the third row.
The outcomes of the jobs in $\mathcal B$ are independent of those of $a$ and~$j$.
So the expected change is
\[
  \bigl(q_a(1-q_j)-(1-q_a)q_j\bigr)\Bigl(\theta+\sum_{b\in\mathcal B}\mu_b\Bigr)=(q_a-q_j)\Bigl(\theta+\sum_{b\in\mathcal B}\mu_b\Bigr)\le0. \qedhere
\]
\end{proof}

\subsection{Proof of Theorem~\ref{thm:greedy}}

The bound with factor $m$ follows from Lemma~\ref{lem:greedyopt} and~\eqref{eq:sandwich}.
Let $\Pi^*$ be a non-idling optimal policy (Lemma~\ref{lem:nonidle}).
Then $\cost(G)-P=\sum_j\E[S^G_j]\le H(G)\le H(\Pi^*)\le m(\OPT-P)$.

For the factor $1+\Lambda$ it suffices, by the same chain, to show that $H(\Pi)\le(1+\Lambda)\sum_j\E[S^\Pi_j]$ for every non-idling policy~$\Pi$.
Fix such a policy and a realization.
Before the $(k+1)$-st start, let $\ell_1(k)\le\ell_2(k)$ be the two smallest times at which machines become free.
Then $F_k=\ell_1(k)$.
Let $J_k$ be the job started at that position, $\chi_k\in\{0,1\}$ its outcome indicator, and $\theta_k$ the $(m-1)$-st largest realized processing time among the first $k$ started jobs, or $0$ if there are fewer than $m-1$ positive ones.

We first show that $\theta_k\le\ell_2(k)$.
Each of the first $k$ jobs started at a time at most $F_{k-1}\le\ell_1(k)\le\ell_2(k)$.
So a started job with realized processing time $x>\ell_2(k)$ is still running at time $\ell_2(k)$, since it started at a nonnegative time.
It therefore occupies a machine that becomes free only after time $\ell_2(k)$.
Jobs running at the same time occupy distinct machines, and at most $m-2$ machines become free after time $\ell_2(k)$.
So at most $m-2$ realized processing times exceed $\ell_2(k)$, which gives the claim.

If $J_k$ is zero, neither $W_r$ nor the earliest free time changes.
If it is positive with size $p$, then $W_r(S_{k+1})-W_r(S_k)=\min\{p,\theta_k\}$ and $F_{k+1}-F_k=\min\{p,\ell_2(k)-\ell_1(k)\}$.
Since $\theta_k\le\ell_2(k)=\ell_1(k)+(\ell_2(k)-\ell_1(k))$ and $\ell_1(k)\ge0$, in both cases
\[
  W_r(S_{k+1})-W_r(S_k)\;\le\;F_{k+1}-F_k+\chi_kF_k .
\]
Here $F_N$ denotes the earliest free time after the last start; it receives weight zero below.
Both $W_r(S_0)$ and $F_0$ are zero.
Multiplying the inequality by $N-1-k$ and summing over $k$ gives
\[
  \sum_{k=0}^{N-1}W_r(S_k)\;\le\;\sum_{k=0}^{N-1}F_k+\sum_{k=0}^{N-1}(N-1-k)\chi_kF_k .
\]
The choice of $J_k$ and the value $F_k$ are determined before the outcome of $J_k$ is revealed, so $\E[\chi_k\mid\text{history}]=q_{J_k}$.
On every realization $F_l\ge F_k$ for $l>k$, hence
\[
  \sum_kq_{J_k}(N-1-k)F_k\;\le\;\sum_kq_{J_k}\sum_{l>k}F_l\;=\;\sum_lF_l\sum_{k<l}q_{J_k}\;\le\;\Lambda\sum_lF_l .
\]
Taking expectations, $H(\Pi)\le(1+\Lambda)\,\E[\sum_kF_k]=(1+\Lambda)\sum_j\E[S^\Pi_j]$.
Finally, $\cost(G)-P\le\min\{m,1+\Lambda\}(\OPT-P)\le\min\{m,1+\Lambda\}\OPT-P$, which gives the second claim.
\qed

\paragraph{The factor $m$ is tight for $G$.}
Take $m-1$ jobs of size $K$ with $q=1-K^{-3}$ and $K$ deterministic jobs of size~$1$.
The rule $G$ starts the long jobs first, since $\theta=0$ and they have the smaller~$q$.
With probability close to one they occupy $m-1$ machines until time~$K$, while the unit jobs run on the remaining machine.
So $\cost(G)\ge(1-o(1))K^2/2$, whereas running the unit jobs first on all machines costs $K^2/(2m)+O(mK)$.

\section{Exact evaluation is hard, approximate evaluation is easy}\label{sec:hardness}

Moseley, Pruhs, Uetz, and Zhou~\cite{MPUZ26} show that computing the optimal expected cost, deciding whether it is at most a given value, and computing the expected cost of SEPT (and of its weighted version) are \#P-hard for processing times with two-point distributions.
Their instances have one machine for each of the $n$ items of the knapsack instance and $2n$ jobs, so $m=N/2$.
They combine jobs that encode the items of a knapsack instance with blocker jobs that are positive with small probability, show that an optimal policy starts the blockers first, and then use the theorem of Weber, Varaiya, and Walrand~\cite{WVW86}.
We adapt this construction to Bernoulli jobs on two machines.
We also show that the expected cost of a given non-idling policy can nevertheless be approximated efficiently.

\subsection{SEPT for stochastically ordered Bernoulli jobs}

Recall that $X\le_{\mathrm{st}}Y$ means $\Pr[X>x]\le\Pr[Y>x]$ for all~$x$.
Weber, Varaiya, and Walrand~\cite{WVW86} prove that if the processing times are stochastically ordered, then SEPT minimizes the expected sum of completion times on identical machines that become available at arbitrary given times.
They assume, for convenience, that the processing times have densities, and their induction considers strategies that start a job whenever a machine becomes available.
The next lemma states the version we need, for Bernoulli jobs, in the model of Section~\ref{sec:prelim} with machine~$i$ unavailable before a given time~$t_i$.
We take from~\cite{WVW86} only an exchange inequality for list policies and redo the induction, using Lemma~\ref{lem:nonidle}.

\begin{lemma}\label{lem:wvw}
Let $X_1\le_{\mathrm{st}}\dots\le_{\mathrm{st}}X_N$ be Bernoulli jobs, and let the machines become available at given times $t_1,\dots,t_m\ge0$.
The list policy that starts the jobs in the order $1,\dots,N$, each at the earliest time at which some machine is available, minimizes $\sum_j\E[C_j]$ among all policies.
\end{lemma}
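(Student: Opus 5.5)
The plan is an induction on the number of jobs $N$. At every step we work with an arbitrary sorted vector of machine availability times $t$, and the state is the pair (remaining jobs, release vector).

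\textbf{Reduction to non-idling policies.} By Lemma~\ref{lem:nonidle}, adapted to initial release times $t_1,\dots,t_m$ (its proof only uses the release vector, so it applies verbatim), some optimal policy is deterministic and non-idling. Such a policy starts some job $x$ at time $t_{(1)}=\min_i t_i$. By the induction hypothesis, applied to the state after $x$'s outcome is revealed (remaining jobs still stochastically ordered, new release vector), we may assume it continues with the list policy on the remaining jobs in index order. So it suffices to compare the policies ``$x$ first, then the rest in index order'' over all choices of $x$. Each of these is itself a list policy; its order is $x,1,\dots,\widehat{x},\dots,N$. It remains to show that the order $1,\dots,N$ is at least as good.

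\textbf{Interchange step.} If $x\ne1$, compare the list $L=(x,1,\dots)$ with the list $L'$ obtained by swapping $x$ with its successor $x-1$ in the sorted remainder. Repeating this moves $x$ back to position $x$, so it suffices to show that swapping two adjacent jobs $i<k$ (so $X_i\le_{\mathrm{st}}X_k$) into the order $i,k$ does not increase the cost of a list policy with arbitrary release vector and arbitrary fixed tail. This is the exchange inequality for list policies of~\cite{WVW86}. I would take it from there, or reprove it for Bernoulli jobs as follows. Condition on everything before the pair, which gives a release vector $\ell$. The pair $(i,k)$ versus $(k,i)$ then leads to a four-point distribution over the release vector handed to the tail, together with the completion times of $i$ and $k$.

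\textbf{Main obstacle.} The obstacle is that the tail cost is not linear in the release vector. So we must show that the distribution of the tail's starting state under order $(i,k)$ is dominated, in a sense preserved by the tail, by that under $(k,i)$. Write $q_i\le q_k$ and $p_i$, $p_k$ with $q_ip_i$ ordered via stochastic order. For Bernoulli variables, $X_i\le_{\mathrm{st}}X_k$ means $q_i\le q_k$ and, when both are positive, $p_i\le p_k$; if $q_i=q_k$, then $p_i\le p_k$ too.

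I would first try a coupling in the spirit of the proof of Lemma~\ref{lem:qorder}. Use an independent coin so that the ``positive slot'' outcomes match. Both jobs starting at the same minimum entries gives symmetric cases, and the only asymmetric case is exactly one positive job. There the order $(i,k)$ places the smaller job in the worse branch less often, because $q_i\le q_k$. This yields coordinatewise-dominated release vectors on a coupled set of realizations. Monotonicity of $V$ in the release vector (shown in Lemma~\ref{lem:nonidle}) then finishes the comparison, and the pair's own completion times compare directly.

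If the direct coupling fails when both jobs start at the same time $\ell_1=\ell_2$, I would fall back on a smoothing argument: perturb the sizes to continuous distributions satisfying the density assumption of~\cite{WVW86}, apply their theorem, and pass to the limit using continuity of the cost of each fixed list policy in the sizes. This works because there are finitely many list orders.
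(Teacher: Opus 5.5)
Your skeleton matches the paper's. You use induction on $N$ over arbitrary availability vectors, Lemma~\ref{lem:nonidle} and the induction hypothesis to reduce to the lists ``$x$, then the rest in index order'', the exchange inequality of~\cite{WVW86}, and smoothing to remove its density assumption. The gap is in your primary route to the exchange.

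The statement you reduce to is that turning adjacent $k,i$ into $i,k$ never hurts for an \emph{arbitrary} fixed tail. This is false, and it is not what~\cite{WVW86} prove. Take deterministic jobs ($q=1$) on two machines available at times $0$ and $1$, with $p_i=1$ and $p_k=2$. Let the tail be one job of size $100$ followed by $M\ge2$ unit jobs. Both orders start the pair at times $0$ and $1$, but $i,k$ leaves the release vector $(1,3)$ while $k,i$ leaves $(2,2)$. After $i,k$ the long job starts one unit earlier and every unit job one unit later, so $i,k$ is worse by $M-1$.

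What is true is narrower, and it is all your chain of adjacent swaps actually uses. Each swap has the smaller job $j$ of the pair as the minimum of the jobs not yet started, and the tail is the sorted list of the others. Each swap is then the WVW inequality ``$(x,j)$ followed by the sorted rest is no better than $(j,x)$ followed by the sorted rest''. It applies to the job set $\{j,\dots,N\}$, from the random availability vector left by the prefix $1,\dots,j-1$. The paper makes only the first such swap and then applies the induction hypothesis to the jobs $2,\dots,N$.

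The same example shows that your coupling cannot prove even the true inequality. The jobs are deterministic, so there is nothing to couple. The vectors $(1,3)$ and $(2,2)$ are incomparable coordinatewise, so monotonicity of $V$ cannot conclude.

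Your diagnosis of where the coupling breaks is also off. If $\ell_1=\ell_2$, both orders give the same multiset $\{\ell_1+X_i,\ell_1+X_k\}$. When exactly one job is positive, the two orders give the same distribution of release vectors, and the pair's expected start times are no larger under $i,k$. The hard case is both jobs positive with $\ell_1<\ell_2<\ell_1+p_k$. There $i,k$ leaves two release times with the same sum as under $k,i$ but more spread out. Whether that helps depends on the tail being in SEPT order, which is what the argument of~\cite{WVW86} exploits under its density assumption.

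So the smoothing must be the main argument, not a fallback. Make it precise as the paper does:
\begin{enumerate}[(i)]
\item Take $Y_j=X_j+\delta U_j$ with independent uniform $U_j$. This keeps $Y_1\le_{\mathrm{st}}\dots\le_{\mathrm{st}}Y_N$; you need this hypothesis and did not state it.
\item Couple the runs of a fixed list. Every start time then moves by at most $N\delta$, since sorting is $1$-Lipschitz in the maximum norm.
\end{enumerate}
With that, your variant also works: apply the full WVW theorem to the $Y_j$ and compare SEPT with the finitely many lists in the limit. The paper instead imports only the exchange inequality and keeps the induction in the Bernoulli model.
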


\begin{proof}
For a list $\lambda$ of jobs and a vector $\tau$ of availability times, let $c(\tau;\lambda)$ be the expected cost of the list policy~$\lambda$ from~$\tau$.
For $k>1$, let $L^*_k$ be the list of the jobs in $\{2,\dots,N\}\setminus\{k\}$ in increasing order.

\emph{An exchange inequality.}
For every $\tau$ and every $k>1$,
\begin{equation}\label{eq:wvwexchange}
  c\bigl(\tau;(k,1)+L^*_k\bigr)\;\ge\;c\bigl(\tau;(1,k)+L^*_k\bigr),
\end{equation}
where $+$ denotes concatenation.
Weber, Varaiya, and Walrand prove~\eqref{eq:wvwexchange} for stochastically ordered processing times with densities and every vector of availability times, in the proof of their theorem~\cite[pp.~845--846]{WVW86}.
In their notation the inequality reads $R(\tau;L_{1,k})\ge R(\tau;L_{k,1})$ for the reward $r(t)=-t$, and its proof uses only their Lemmas~1--3.
For Bernoulli jobs, let $\delta>0$, let $U_1,\dots,U_N$ be independent and uniform on $[0,1]$, and let $Y_j=X_j+\delta U_j$.
The variables $Y_j$ have densities, and $Y_1\le_{\mathrm{st}}\dots\le_{\mathrm{st}}Y_N$, because adding independent copies of the same variable preserves the stochastic order.
So~\eqref{eq:wvwexchange} holds for the processing times~$Y_j$.
For a fixed list, couple the two runs through $Y_j=X_j+\delta U_j$.
At each start, the coordinatewise distance between the two sorted vectors of availability times grows by at most~$\delta$, since sorting is $1$-Lipschitz in the maximum norm.
So every start time changes by at most $N\delta$, and every list cost changes by at most $N(N+1)\delta$.
Letting $\delta\to0$ proves~\eqref{eq:wvwexchange} for the jobs~$X_j$.

\emph{Optimality.}
We show by induction on $N$ that, for every $\tau$, the list $(1,\dots,N)$ is optimal from~$\tau$; the case $N=1$ is trivial.
By Lemma~\ref{lem:nonidle}, some optimal policy from $\tau$ is deterministic and non-idling; let $k$ be its first job, started at the minimum entry of~$\tau$.
After the outcome of $k$ is revealed, the continuation is a policy for the other jobs from the resulting availability times.
All of them are at least the start time of $k$, so the continuation problem is an instance of the lemma with fewer jobs, and by induction the list of the other jobs in increasing order is optimal there.
So the list $(k)$ followed by the other jobs in increasing order is optimal from~$\tau$.
If $k=1$, this is the list $(1,\dots,N)$.
Otherwise it is $(k,1)+L^*_k$, which by~\eqref{eq:wvwexchange} is no better than $(1,k)+L^*_k$.
After job~$1$ starts, the list $(k)+L^*_k$ is a policy for the jobs $2,\dots,N$, so by induction the list $(2,\dots,N)$ is no worse.
Hence $(1,\dots,N)$ is optimal.
\end{proof}

\subsection{Hardness of exact computation}

\begin{theorem}\label{thm:hard}
For Bernoulli jobs on $m=2$ machines, each of the following problems is \#P-hard under polynomial-time Turing reductions:
computing $\OPT$; deciding whether $\OPT\le x$ for a given rational~$x$; and computing the expected cost of SEPT.
This holds even when one job has small~$q$, one job is deterministic, all other jobs have $q=1/2$, and all sizes are integers.
\end{theorem}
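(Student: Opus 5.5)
Reduce from \#Knapsack: given positive integers $w_1,\dots,w_k$ and a threshold $B$, count the subsets with $\sum_{i\in S}w_i\le B$. The plan follows the outline in the introduction. For each item $i$ we put a Bernoulli job with $q=1/2$ and integer size $a_i$, where $a_i$ is $w_i$ scaled by a large integer so that distinct subset sums differ by a large gap. We add one deterministic job $d$ of integer size $D$ and a blocker job $b$ with tiny probability $q_b=\epsilon$ and size $Z$. The instance is then taken in two versions, with sizes $Z$ and $Z+1/4$; after scaling all sizes by $4$ they are integers. On two machines the intended dynamics are as follows. The blocker has the smallest mean, so it goes first. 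If it is positive it occupies one machine for time $Z$, and the other machine processes the item jobs and then $d$. The completion time of $d$ (and of later jobs) then depends on whether the realized item volume $\sum_{i\in S}a_i$ falls below or above $Z$. Replacing $Z$ by $Z+1/4$ changes the cost by an amount that is an explicit affine function of the number of subsets whose sum crosses the threshold.

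\textbf{Step 1: SEPT is optimal on both instances.} Bernoulli variables $X_j\sim p_j\cdot\mathrm{Bern}(q_j)$ are stochastically ordered exactly when both $q$ and $p$ are monotone in the same direction. I choose parameters so that the stochastic order is $b\le_{\mathrm{st}}$ items (in increasing size) $\le_{\mathrm{st}} d$. This needs $Z\le\min_i a_i$. That conflicts with wanting $Z$ comparable to $\sum a_i$, so I will instead only use that $b$ is first. Concretely, I argue directly that in an optimal policy $b$ starts first and the rest follows Lemma~\ref{lem:wvw}. After $b$ is revealed, the remaining jobs (the items with $q=1/2$ sorted by size, then $d$ with $q=1$ and $D\ge\max a_i$) are stochastically ordered, and the machines have availability times $(0,0)$ or $(0,Z)$. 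So Lemma~\ref{lem:wvw} gives the optimal continuation in either branch. To show $b$ first is optimal, I use the exchange from the proof of Lemma~\ref{lem:wvw}, or a direct comparison: delaying a job with tiny $\epsilon$ costs little but risks blocking. This is the step I expect to be the main obstacle. The workaround I would try first is to make the blocker stochastically smallest by giving it size $Z\le\min_i a_i$ and encoding the threshold through $d$'s interaction instead. That is, let the blocker shift one machine by $Z$ and place the knapsack threshold in the comparison between the two machine loads, whose difference is a subset-sum difference. Then the whole instance is stochastically ordered and Lemma~\ref{lem:wvw} applies directly, so SEPT $=\OPT$ and the three problems coincide up to the decision version.

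\textbf{Step 2: extract the count.} I compute $\OPT(Z+1/4)-\OPT(Z)$ as $\epsilon$ times the derivative-like effect on the cost in the positive-blocker branch. With item sizes scaled so that all subset sums are integers far from $Z+1/8$, the list schedule on machine loads is determined by which item lands on which machine. The change in cost equals $\tfrac14$ times the expected number of jobs whose start is delayed by the blocker machine. That number is an affine function of $\#\{S:\sum_{i\in S}a_i\le Z\}$, and the $2^{-k}$ weights come from $q=1/2$. Two oracle calls, or binary search with the decision oracle over polynomially many bits since all costs are rationals with denominator dividing $2^k\epsilon^{-1}$, recover the count. Computing SEPT's cost is handled identically, since SEPT $=\OPT$ here.
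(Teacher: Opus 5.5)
Your proposal has a genuine gap in Step~2, and it comes from the construction. With only the item jobs, the blocker and $d$, two properties you rely on cannot hold at once. For the realized item volume $W$ to fall on either side of $Z$ you need $Z\le S=\sum_ia_i$. But then, when $b$ is positive, the item jobs do not all run on the free machine: once the realized prefix on that machine exceeds $Z$, the next job, zero or positive, starts on the blocker's machine. The set of jobs whose start moves with $Z$ is then fixed by the greedy list schedule, not by $W$. For example, take item sizes $1,2,3$ and $Z=5/2$. The positive sets $\{1,2\}$ and $\{3\}$ both give $W=3$. Raising $Z$ by $1/4$ shifts both the zero item~$3$ and $d$ in the first case, but only $d$ in the second. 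So $\OPT(Z+\tfrac14)-\OPT(Z)$ is not an affine function of $\#\{S:\sum_{i\in S}a_i\le Z\}$. Your fallback with $Z\le\min_ia_i$ and a threshold ``in the comparison between the two machine loads'' runs into the same greedy dynamics and is not a construction yet. If instead $Z>S$, only the completion time of $b$ depends on $Z$.

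The missing idea is an extra job $H$ of size $h=2S$ with probability $1/2$. The blocker gets size $L\in\{B+\tfrac14,B+\tfrac12\}$ with $B=h+B_0$, and $d$ gets size $B+1$. Then $L>S$, so when $b$ is positive all item jobs and $H$ start on the free machine, with $H$ starting at $W\le S<L$. The threshold is tested exactly once, by $d$, which starts at $\min\{W+h,L\}$ when $H$ is positive. Since $W+h\le B$ iff $W\le B_0$, the instances differ only in the completion of $b$ and, when $H$ is positive and $W>B_0$, in the start of $d$. This gives $\OPT(I_{L_+})-\OPT(I_{L_-})=\frac q4\bigl(1+K_{\mathrm{NO}}/2^{k+1}\bigr)$.

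Your Step~1 also stays heuristic. The exchange inequality from Lemma~\ref{lem:wvw} is unavailable, since $b$ is not stochastically smallest; what works is the direct comparison of Lemma~\ref{lem:blockerfirst}. If another job starts first while the other machine is busy, then with probability at least $1/2$ both machines stay busy until time~$1$, so $\E[S_b]\ge1/2$. Starting $b$ first instead, and running everything sequentially if $b$ is positive, costs at most $q\cdot 8\bar k^2B=1/8$ extra, where $\bar k=k+1$. This is what forces $q=1/(64\bar k^2B)$. The case of two free machines then follows by induction and swapping two jobs that both start at time~$0$.
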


We reduce from \#Knapsack: given positive integers $a_1,\dots,a_k$ and an integer $B_0$ with $0\le B_0<S=\sum_ia_i$, count the subsets $I\subseteq[k]$ with $\sum_{i\in I}a_i\le B_0$.
Counting the subsets with sum exactly $B_0$ is \#P-hard, because the NP-hardness reduction for this problem~\cite{Kar72} can be made parsimonious~\cite{Sim77}.
That count is the number of subsets with sum at most $B_0$ minus the number with sum at most $B_0-1$, where the latter is $0$ if $B_0=0$.
So \#Knapsack is \#P-hard as well.
Let $K_{\mathrm{NO}}$ be the number of subsets with sum larger than~$B_0$; the answer is $2^k-K_{\mathrm{NO}}$.
Set
\[
  \bar k=k+1,\qquad h=2S,\qquad B=h+B_0,\qquad Q=64\bar k^2B,\qquad q=\frac1Q .
\]
For $L\in\{L_-,L_+\}$ with $L_-=B+\tfrac14$ and $L_+=B+\tfrac12$, the instance $I_L$ has two machines and the following $k+3$ jobs:
\begin{itemize}
\item for each $i\in[k]$, a job of size $a_i$ with probability $1/2$;
\item a job $H$ of size $h$ with probability $1/2$;
\item a job $b$ of size $L$ with probability $q$;
\item a deterministic job $d$ of size $B+1$.
\end{itemize}
The job $b$ has mean $qL<1/2$, which is smaller than every other mean.
So SEPT starts $b$, then the jobs $a_i$ in increasing order of size, then $H$, then~$d$.

\begin{lemma}\label{lem:blockerfirst}
In the instances $I_{L_\pm}$, consider a state at time~$0$ in which $b$ has not started and a machine is free at time~$0$.
Some optimal continuation from this state is non-idling and starts $b$ next, at time~$0$.
\end{lemma}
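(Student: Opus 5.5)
The plan is an exchange argument against a deterministic non-idling optimal continuation $T$ from the given state. Lemma~\ref{lem:nonidle} applied to that state supplies $T$. Let $\sigma_b$ be the random time at which $T$ starts~$b$. I will build a new non-idling policy $T'$ that starts $b$ first, at time~$0$, on the free machine, and show $\cost(T')\le\cost(T)$. The outcome of $b$ is coupled in the two policies, and $T'$ behaves as follows.

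\emph{Branch where $b$ is zero (probability $1-q$).} Here $T'$ simply runs $T$, treating the node of $b$ as an already revealed zero job that takes no time. Every other job keeps its start time, and $b$ itself completes at $0$ instead of $\sigma_b$. The gain in this branch is therefore $(1-q)\E[\sigma_b]$; this is legitimate because the outcome of $b$ is independent of everything $T$ observes before starting~$b$.

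\emph{Branch where $b$ is positive (probability $q$).} Here I split on whether $\sigma_b=0$.
\begin{enumerate}[(i)]
\item On the event $\{\sigma_b=0\}$, $T$ starts $b$ at time~$0$ after a prefix of jobs that also start at time~$0$. Since a machine is still free at time~$0$ when $T$ starts~$b$, at most one machine is occupied by a positive prefix job. So $T'$ can put $b$ on one machine and the prefix jobs on the other. It then replays $T$'s decisions verbatim, ignoring that it learned $b$'s outcome earlier, and the schedule is identical up to relabelling machines. The decision of $T'$ at each step is a function of the outcomes seen, so this is a valid policy, and it costs nothing extra on this event.
\item On $\{\sigma_b>0\}$, $T'$ abandons $T$ and list-schedules the remaining jobs on the other machine, or on both once $b$ finishes. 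Its cost is then at most a crude bound $C$. With $k+3$ jobs, all sizes at most $B+1$, $h\le B$ and $L<B+1$, every completion time is at most the total size, so I expect $C=O(\bar k\cdot\bar kB)$, say $C\le 8\bar k^2B$ after a careful count. The cost of $T$ in the same branch is nonnegative, so the loss of $T'$ here is at most $qC$.
\end{enumerate}

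Combining, $\cost(T')-\cost(T)\le\E\bigl[\mathbf 1[\sigma_b>0]\,(qC-(1-q)\sigma_b)\bigr]$. The key point is that all sizes other than $L$ are integers and $T$ is non-idling, so every start time of $T$ before $b$ starts is an integer sum of sizes. Hence $\sigma_b>0$ implies $\sigma_b\ge1$. The choice $q=1/(64\bar k^2B)$ gives $qC\le1/8<1-q$, so the bracket is negative whenever $\sigma_b>0$, and $T'$ is no worse than $T$. Finally, $T'$ is non-idling: in both branches it starts a job whenever a machine is free, and in the list-scheduling branch this holds by construction. Replacing its continuation after $b$ by a deterministic non-idling optimal one (Lemma~\ref{lem:nonidle}) gives the claim.

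I expect the main obstacle to be the $\sigma_b=0$ case. There one must check carefully that inserting $b$ earlier at time~$0$ changes nothing, including which machine each later time-$0$ job uses and the adaptivity of the decisions. A second point to get right is the explicit bound $C$ and its comparison with $1/q$, which is where the constant $64$ in $Q$ is meant to be used.
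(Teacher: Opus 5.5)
\textbf{Gap in sub-case (i).} The claim that on $\{\sigma_b=0\}$ the policy $T'$ replays $T$ at no extra cost is false, in two situations the lemma must cover.
\begin{itemize}
\item Only one machine is free at time~$0$, the other running a positive job started at time~$0$. This is the kind of state the lemma is about. Then $\sigma_b=0$ with a nonempty prefix means $T$ ran zero jobs on the free machine before~$b$. In $T'$ a positive $b$ has taken that machine, nothing is free at time~$0$, and those prefix jobs are pushed to time at least~$1$.
\item Both machines are free, and $T$'s prefix is a positive job $y$ followed by a zero job $z$ before~$b$. Then in $T'$ the jobs $b$ and $y$ occupy both machines, and $z$ cannot start at time~$0$.
\end{itemize}
Also, $T'$ cannot branch on $\{\sigma_b>0\}$ when it learns that $b$ is positive, since that event depends on prefix outcomes not yet revealed. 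It can only deviate once the replay breaks down, and at that moment $\sigma_b=0$ is still possible. So the positive branch incurs a loss on part of $\{\sigma_b=0\}$, and your final inequality omits it.

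\textbf{Missing ingredient.} Every job other than $b$ is positive with probability at least $1/2$, and your argument never uses this. Without it the conclusion is false. Take one machine busy until time~$1$, remaining jobs $b$ and a unit job $x$, with $q$ small and $q_x<q$. Starting $x$ first gives expected start-time sum $q_x$, and starting $b$ first gives~$q$. Yet your inequality would still declare $T'$ strictly better.

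\textbf{Repair.} Charge this loss to the gain. In the positive branch, let $T'$ replay $T$ until the first node at which $T$, with only one machine free at time~$0$, starts some $x\ne b$; call this event~$A$. From there, $T'$ list-schedules. Off~$A$ the replay is exact and $\sigma_b=0$, so the loss is at most $qC\Pr[A]\le\Pr[A]/8$. On~$A$, $x$ is positive with probability $q_x\ge1/2$, and then both machines are busy until time at least~$1$. Hence $\E[\sigma_b]\ge\Pr[A]/2$, and the gain $(1-q)\E[\sigma_b]$ dominates the loss.

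\textbf{The paper's route.} It avoids the coupling by splitting on the number of free machines. With one free machine, a policy starting $i\ne b$ has $\E[S_b]\ge q_i\ge1/2$. The paper compares it, through the optimum $c_0$ for the jobs other than~$b$, with starting $b$ and then continuing optimally if $b$ is zero and crudely if positive. With two free machines, it applies induction and swaps the first job with $b$ at time~$0$.
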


\begin{proof}
We use induction on the number of remaining jobs; the case of one job is trivial.
By Lemma~\ref{lem:nonidle}, some optimal continuation is deterministic and non-idling, so it starts its first job at time~$0$.
Every job other than~$b$ is positive with probability at least $1/2$, and every positive processing time is at least~$1$.

Suppose exactly one machine is free at time~$0$, and a non-idling optimal continuation starts a job $i\ne b$ there.
The other machine runs a job that started at time~$0$ and has at least one time unit left.
So if $i$ is positive, both machines are busy until time~$1$, and $\E[S_b]\ge1/2$.
Let $c_0$ be the minimum expected sum of start times for the remaining jobs other than~$b$ from the current state, allowing idle time.
Simulating $b$ with an independent virtual outcome, as in Lemma~\ref{lem:basic}(b), shows that the continuation's start times of the jobs other than~$b$ sum to at least $c_0$ in expectation.
So its expected sum of start times is at least $c_0+1/2$.
Now start $b$ instead.
If $b$ is zero, continue optimally, with expected sum of start times~$c_0$.
If $b$ is positive, which happens with probability~$q$, run the remaining jobs one after another.
At most $k+3\le2\bar k$ jobs remain, each of size at most $B+1\le2B$.
So every start time is at most $4\bar kB$, and the sum of start times is at most $8\bar k^2B$.
This alternative has expected sum of start times at most $c_0+q\cdot8\bar k^2B=c_0+1/8<c_0+1/2$, a contradiction.
So every non-idling optimal continuation starts $b$ at time~$0$.

Suppose both machines are free at time~$0$, and a non-idling optimal continuation starts $i\ne b$ at time~$0$.
After $i$ starts, a machine is still free at time~$0$, so by induction we may replace the continuation by one that starts $b$ next, at time~$0$.
Exchanging the order of $i$ and $b$ leaves both start times equal to~$0$.
It also leads to the same remaining jobs and the same multiset of machine release times in every realization, so the cost is unchanged.
\end{proof}

\begin{proof}[Proof of Theorem~\ref{thm:hard}]
By Lemma~\ref{lem:blockerfirst}, some optimal policy starts $b$ at time~$0$.
Afterwards the machines become available at times $(0,0)$ if $b$ is zero and $(0,L)$ if $b$ is positive.
The remaining jobs are stochastically ordered.
A job of size $a_i$ with probability $1/2$ is dominated by one of size $a_{i'}\ge a_i$ with the same probability.
The job $H$ dominates all of them because $h=2S>a_i$.
The job $d$ dominates $H$ because $B+1>h$.
By Lemma~\ref{lem:wvw}, SEPT is optimal from either state, so SEPT is optimal for $I_{L_-}$ and~$I_{L_+}$.
So computing $\OPT$ and computing the cost of SEPT coincide on these instances.

We compare the two instances under SEPT, with the same outcomes for all jobs.
If $b$ is zero, the two schedules are identical.
Suppose $b$ is positive, and let $W=\sum_ia_iZ_i$ be the total realized processing time of the jobs $a_1,\dots,a_k$, where $Z_i$ are their outcome indicators.
Since $W\le S<B<L_-$, all these jobs run on the free machine before time~$L$, and $H$ starts there at time~$W$.
If $H$ is zero, $d$ starts at time $W$.
If $H$ is positive and $W\le B_0$, then $H$ completes at $W+h\le B<L_-$, and $d$ starts at time $W+h$.
If $H$ is positive and $W>B_0$, then $W\ge B_0+1$, so $H$ runs until $W+h\ge B+1>L_+$, and $d$ starts at time~$L$ on the other machine.
Only in the last case do the start times differ, by $L_+-L_-=1/4$.
The completion time of $b$ itself also differs by $1/4$ when $b$ is positive.
Hence
\[
  \OPT(I_{L_+})-\OPT(I_{L_-})=\frac q4\Bigl(1+\frac12\cdot\frac{K_{\mathrm{NO}}}{2^k}\Bigr)=\frac q4\Bigl(1+\frac{K_{\mathrm{NO}}}{2^{\bar k}}\Bigr),
\]
and two exact evaluations determine $K_{\mathrm{NO}}$.
All numbers have bit length polynomial in the input.
Multiplying all sizes by~$4$ makes them integers and does not change the argument.

For the decision version, note that all processing times are multiples of $1/4$ and the probability of every joint outcome is a multiple of $1/(Q2^{\bar k})$.
So $\OPT$ is a multiple of $1/(4Q2^{\bar k})$.
It is also at most $8\bar k^2B+(k+3)(B+1)$.
Binary search with polynomially many threshold queries therefore recovers $\OPT$ exactly.
\end{proof}

\subsection{Approximate evaluation}

\begin{theorem}\label{thm:fpras}
Let $\Pi$ be a non-idling policy whose decisions can be simulated in polynomial time.
For $\rho,\delta\in(0,1)$, an estimate $\hat C$ with $(1-\rho)\cost(\Pi)\le\hat C\le(1+\rho)\cost(\Pi)$ with probability at least $1-\delta$ can be computed from $\lceil3N\rho^{-2}\ln(2/\delta)\rceil$ simulations of $\Pi$.
\end{theorem}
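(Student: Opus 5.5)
We will estimate $\cost(\Pi)$ by the empirical mean of independent simulations of the realized cost $C=\sum_jC^\Pi_j$, and apply a multiplicative Chernoff bound to the normalized variable $C/(NP)$. The key point is that on every realization the cost is bounded above by $N$ times the realized total processing time, and its mean is at least $P$. So the variance is controlled relative to the squared mean, with ratio $O(N)$.

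\emph{Step 1: a pointwise bound.} Since $\Pi$ is non-idling and all machines are free at time~$0$, Lemma~\ref{lem:freetime} applies on every realization. It gives $F_k\le W_{m-1}(S_k)\le W_0(J)=X(J):=\sum_iX_i$. Hence every start time is at most $X(J)$, and every completion time is at most $2X(J)$. Therefore $C\le 2N X(J)$. I would try to sharpen this to $C\le N X(J)$. The last job started on a machine with the earliest free time finishes by the total load, and more carefully $C_j\le F_k+X_j\le X(J)$, because a job's start time is bounded by the load of the other machines. If the sharpening fails, the constant $2$ is simply absorbed into the constant $3$ of the statement after rescaling.

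\emph{Step 2: concentration.} Let $Y=C/(NX_{\max})$ be one candidate normalization. However, $X(J)$ is random and unbounded relative to $P$, so we cannot directly get values in $[0,1]$. I would therefore use the Bernstein/Chernoff form for nonnegative variables with a variance bound instead. Using the independence of the $X_i$, we get
\[
\E[C^2]\le N^2\E[X(J)^2]=N^2\Bigl(\sum_i\mathrm{Var}X_i+P^2\Bigr)\le N^2\Bigl(\sum_ip_i\mu_i+P^2\Bigr).
\]
This is not bounded by $O(N)\cost(\Pi)^2$ in general, because $p_i$ may be huge relative to $P$. So I expect this to be the main obstacle: a single rare large job makes the realized cost heavy-tailed relative to its mean.

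To handle the large-job tail, I would use the better pointwise bound $C\le \sum_j(S_j+X_j)$ with each $S_j\le W_1$-type quantities. Then I would compare to $\cost(\Pi)$ itself rather than to $P$. A job $i$ that is positive adds at most $NX_i$ to $C$, but it also contributes at least $\E[X_i]\cdot(\text{number of later jobs})$ to the mean. Concretely, I would aim for the inequality $\mathrm{Var}(C)\le c\,N\,\cost(\Pi)^2$, or for a bound of the form $C\le N\cdot\cost(\Pi)\cdot Z$ with $Z$ a sum of independent $[0,1]$ terms of mean $\le1$.

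Given such a representation, the multiplicative Chernoff bound $\Pr[|\bar Z-\E\bar Z|\ge\rho\E\bar Z]\le2e^{-\rho^2 s\E Z/3}$, with $s=\lceil3N\rho^{-2}\ln(2/\delta)\rceil$ samples and the normalization making each sample lie in $[0,1]$ with mean at least $1/N$, yields exactly the claimed sample count. So the plan is to find the normalization $Y\in[0,1]$ with $\E[Y]\ge1/N$ and $Y$ proportional to $C$. If the large-job obstacle cannot be resolved with a deterministic normalizer, I would fall back on Chebyshev plus a median-of-means argument. That gives the same $O(N\rho^{-2}\log(2/\delta))$ order, though with a worse constant than stated.
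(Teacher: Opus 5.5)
Your Step~1 is fine and matches what the paper uses. Since $F_k\le W_0(S_k)$, every completion time is at most $V=\sum_iX_i$, so $C\le NV$ on every realization.

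The gap is Step~2, and it is not a matter of constants. Under the true distribution, the empirical mean of direct simulations cannot succeed with a number of samples that depends only on $N$, $\rho$ and $\delta$. Suppose all jobs have $q_j=q$. Then each simulated realization is entirely zero with probability $(1-q)^N\ge1-Nq$. So for $q\ll1/(sN)$, all $s$ simulated costs are $0$ with high probability, while $\cost(\Pi)\ge P>0$. The same example shows that your target inequality $\mathrm{Var}(C)\le cN\cost(\Pi)^2$ is false, since the relative variance blows up as $q\to0$. The representation $C\le N\cost(\Pi)Z$ with $Z$ a sum of $[0,1]$ terms is impossible for the same reason. Chebyshev plus median-of-means does not rescue the argument either, because it needs a number of samples proportional to this unbounded relative variance. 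Your fallback therefore fails; it does not merely lose a constant.

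The missing idea is a change of measure, which removes the heavy tail instead of fighting it. Sample from the size-biased distribution $\mathbb Q$: pick job $i$ with probability $\mu_i/P$, force $X_i=p_i$, and draw all other outcomes from their true distributions. The density of $\mathbb Q$ with respect to the true distribution is $\sum_i(\mu_i/P)\,\mathbf 1[X_i=p_i]/q_i=V/P$. Hence $Z=P\,C/V$ is an unbiased estimator of $\cost(\Pi)$ under $\mathbb Q$, and it is well defined because $V>0$ $\mathbb Q$-almost surely. The likelihood ratio $P/V$ cancels exactly the random factor $V$ in your pointwise bound. So $C\le NV$ gives $Z\in[0,NP]$, while $\E_{\mathbb Q}[Z]=\cost(\Pi)\ge P$. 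Thus $Z/(NP)\in[0,1]$ has mean at least $1/N$, and the multiplicative Chernoff bound you wrote down gives exactly $\lceil3N\rho^{-2}\ln(2/\delta)\rceil$ simulations. The policy can be run on $\mathbb Q$-realizations unchanged, since it only sees the outcomes of the jobs it starts.
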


\begin{proof}
For a realization $\omega$, let $V(\omega)=\sum_jX_j(\omega)$ and $C(\omega)=\sum_jC^\Pi_j(\omega)$.
A non-idling policy completes every job by time $V(\omega)$, so $0\le C(\omega)\le N\,V(\omega)$, and $C(\omega)=0$ if $V(\omega)=0$.
Sample realizations from the following distribution~$\mathbb Q$.
Choose a job $i$ with probability $\mu_i/P$, set $X_i=p_i$, and draw all other outcomes from their own distributions.
The policy sees only the outcomes it observes when starting jobs.
Under~$\mathbb Q$, the density of a realization with respect to the true distribution is
\[
  \sum_i\frac{\mu_i}{P}\cdot\frac{\mathbf 1[X_i=p_i]}{q_i}=\frac{V(\omega)}{P}.
\]
Hence the estimator $Z=P\,C/V$ satisfies $\E_{\mathbb Q}[Z]=\cost(\Pi)$.
It takes values in $[0,NP]$, and its mean is at least $P$.
The multiplicative Chernoff bound for the average of $\lceil3N\rho^{-2}\ln(2/\delta)\rceil$ independent copies of $Z/(NP)\in[0,1]$, whose mean is at least $1/N$, gives the claim.
\end{proof}

The same argument applies to every policy whose realized sum of completion times is at most $\beta N\,V(\omega)$ for a constant $\beta$, with the number of simulations multiplied by~$\beta$.
For instance, $\beta=\gamma$ for the grid-non-idling reuse-rule policies of Section~\ref{sec:dp}.

\section{List policies lose a logarithmic factor}\label{sec:barriers}

A \emph{list policy} fixes an order of the jobs before any outcome is observed and starts the jobs in this order.
The order may depend on the instance and on random coins.
Eberle, Fischer, Matuschke, and Megow~\cite{EFMM19} show that index policies, which rank each job by its own distribution, are not constant-factor approximations.
Im, Moseley, and Pruhs~\cite{IMP15} give a list policy with ratio $O(\log^2N+m\log N)$, which is $O(\log^2N)$ on two machines.
Gupta, Moseley, and Zhou~\cite{GMZ23} compute a list policy whose ratio is $\widetilde O(\sqrt m)$, where $\widetilde O$ hides factors polylogarithmic in~$N$.
The following theorem shows that a dependence on $N$ is necessary for every list policy, even on two machines and even if the policy may idle.
So the bound of~\cite{IMP15} for two machines is tight up to a factor $O(\log N)$.

\begin{theorem}\label{thm:list}
For every integer $b\ge2$, let $M=2^b$ and consider two machines with $M$ jobs of each type $i=0,\dots,b-1$, where a job of type $i$ has
\[
  p_i=\frac{M^2}{4^i}\qquad\text{and}\qquad q_i=\frac{2^i}{M^2}.
\]
Let $N=bM$.
Every list policy costs at least $Mb^2/128$, whereas $\OPT\le6Mb$.
Hence every list policy has ratio at least $b/768\ge\log_2N/1536$.
\end{theorem}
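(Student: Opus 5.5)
We prove the upper bound on $\OPT$ and the lower bound for list policies separately.
Note that $\mu_i=q_ip_i=M^22^i/4^i\cdot 1/M^2\cdot M^2/M^2$; concretely $\mu_i=2^{-i}$ and $p_iq_i^2=1/M^2$ for every type.

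\emph{Upper bound on $\OPT$.}
We use the adaptive policy from the introduction.
It processes the types in decreasing order of $i$, that is, from the largest $q_i$ and smallest $p_i$ to the smallest $q_i$ and largest $p_i$.
All jobs are tested on machine~$1$ while machine~$2$ stays free, until the first positive job appears, say of type~$i$ with size $p_i$.
From then on we run the remaining jobs on machine~$2$ in an order that keeps the expected start times small, for instance SEPT on one machine.
Before the first positive job, every start time is~$0$.
The expected number of positive jobs over all types is $\Lambda=\sum_iMq_i=\sum_i2^i/M<2$.
After the first positive job, the remaining expected total processing time on machine~$2$ is at most $P=\sum_iM2^{-i}\le 2M$.
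So the expected start time of each later job is $O(P/M\cdot\ldots)$; the crude bound $N\cdot P$ is too weak, and it needs care.

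The cleaner route is to bound each job's start time by the expected processed volume before it on machine~$2$, plus a contribution from a second positive job that blocks machine~$2$.
A job of type~$i$ that is positive while jobs remain delays all remaining jobs by at most $p_i$.
This happens with probability $q_i$ per job, and after reordering by SEPT the delay affects only the jobs after it.
Since $p_iq_i\cdot(\text{number of jobs after}) \le 2^{-i}N$, I would aim for the bound $\OPT\le P+\sum_j\E[S_j]\le 6Mb$.
Here the sum of the $\mu_j$ times their positions in SEPT gives $\sum_i 2^{-i}M\cdot(\text{position})$; with types ordered by decreasing~$i$ the positions are weighted favorably.
I expect routine bookkeeping to give $O(Mb)$.

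\emph{Lower bound for list policies.}
Fix a list; by averaging, deterministic lists suffice, and idling only hurts.
Split the list into prefixes.
The key claim is the one from the introduction: in the prefix containing the first $k$ jobs of type~$i$, the probability that two of them are positive is about $\binom{k}{2}q_i^2$.
In that case both machines are blocked until time $\min$ of the two sizes, namely $p_i$.
The expected delay forced on all later jobs is therefore at least about $k^2q_i^2p_i/2=k^2/(2M^2)$, which is the same for every type.
For a job at position $t$ in the list, let $k_i(t)$ be the number of type-$i$ jobs before it.
Its start time is at least roughly $\max_i k_i(t)^2/(4M^2)$, using the fact that an event of small probability contributes additively.
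A more robust route is to lower-bound $\E[S]$ by $\sum_i\Pr[\text{two type-}i\text{ positives before}]\cdot p_i$ minus overlaps.
The events across types are nearly disjoint because all probabilities are small.
This yields $\E[S_t]\gtrsim\sum_i\min\{1,k_i^2q_i^2\}p_i/2$.

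Summing over positions, I would consider the last $M/2$ jobs of the list for each type.
For a job at position $t\ge N/2$, some type $i$ has had $k_i\ge M/2$ earlier jobs, by pigeonhole over the $b$ types.
A better accounting uses $\sum_i k_i(t)=t$: the term $\sum_i k_i^2/M^2$ is at least $t^2/(bM^2)$, which is too weak.
The logarithmic gain has to come from the fact that each of the $b$ types eventually contributes a full $\Omega(1/4)$ term for the later jobs.
Concretely, for each type~$i$ consider the time at which $k_i$ reaches $M/2$.
From then on, every later job carries delay $\ge (M/2)^2q_i^2p_i/8=1/32$ from type~$i$.
Ordering the types by the time this happens, the $r$-th such type is followed by at least $(b-r)M/2$ jobs: each type not yet at $M/2$ still has $M/2$ jobs left.
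All the types' delays add up for those jobs, giving $\sum_r (b-r)(M/2)\cdot r/32\cdot\ldots$.
Simply, the $r$-th threshold adds $1/32$ to at least $(b-r)M/2$ jobs, so the total is $\sum_r(b-r)M/64\approx b^2M/128$.
This additivity of delays across types is the main obstacle.
I would handle it by showing that the blocking events of distinct types each add their delay, via a union of disjoint high-probability-free windows.
Alternatively, I would lower-bound the start time by $W_1(S)$ (Lemma~\ref{lem:freetime}) and use the linearity of $\E[W_1]\ge\sum_i\E[\text{second-largest contributions}]$, which is additive over types up to constant factors.
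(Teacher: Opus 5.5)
\textbf{The upper bound on $\OPT$ fails as written.} You test the types in decreasing order of $i$, so the size-$4$ jobs of type $b-1$ come first. Since $q_{b-1}=1/(2M)$, with probability at least $1-e^{-1/2}$ the first positive job already has type $b-1$. By your rule, all jobs of types $0,\dots,b-2$ then go to machine~$2$. None of them has been started, so their distributions are unchanged. The $M$ type-$0$ jobs alone, each of mean~$1$, contribute at least $\binom M2$ to the expected sum of start times of any order on one machine. So this policy costs $\Omega(M^2)$, which exceeds $6Mb$ for large $b$. Your bookkeeping does not rescue it: summing $p_iq_i\cdot(\text{number of later jobs})\le 2^{-i}N$ over all $bM$ jobs gives $O(MN)=O(bM^2)$, not $O(Mb)$.

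The missing idea is the opposite testing order: increasing type, largest sizes and smallest probabilities first. Then you balance the probability of each outcome against its cost. Let $E_i$ be the event that the first positive job has type~$i$. Then $\Pr[E_i]\le Mq_i=2^i/M$, and on $E_i$ every job of type below $i$ has already completed at time~$0$. Run the remaining jobs on machine~$2$ in decreasing order of type. A type-$j$ job then has at most $M(j-i+1)$ successors, so the conditional expected sum of start times is at most $M^2\sum_{j\ge i}(j-i+1)2^{-j}\le 4M^22^{-i}$. Each type contributes at most $4M$, so $\OPT\le P+4Mb\le 6Mb$.

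\textbf{The lower bound has the right ingredients but is not closed.} You correctly see that two positive jobs of one type force a delay of order $p_i$, and that $p_iq_i^2=M^{-2}$ does not depend on the type. But you leave open additivity over types, which you call the main obstacle, and you discard the accounting that actually suffices.

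Additivity follows from a volume argument valid on every realization of every policy, idling or not. At the start time $F_k$ of the $(k+1)$-st job some machine is free, so at most one of the first $k$ jobs is still running. Let $B_i$ count the positive type-$i$ jobs among the first $k$. At least $(B_i-1)_+$ of them have completed, and two machines process at most $2F_k$ units by time $F_k$. Hence $2F_k\ge\sum_ip_i(B_i-1)_+$. Citing Lemma~\ref{lem:freetime} instead would need the list policy to be non-idling, and your remark that idling only hurts is unproved.

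Next, $\E[(B_i-1)_+]\ge\frac14k_i(k_i-1)q_i^2$, using $(1-q_i)^{k_i-2}\ge 1-Mq_i\ge\frac12$. This gives $\E[F_k]\ge\frac1{8M^2}\sum_ik_i(k_i-1)\ge\frac1{8M^2}(k^2/b-k)$. The Cauchy--Schwarz bound $k^2/b$ that you called too weak is exactly what is needed. For $k\ge N/2=bM/2\ge 2b$ it gives $\E[F_k]\ge b/64$, and the last $N/2$ positions contribute $Mb^2/128$. Your threshold-crossing count would also give $\Omega(Mb^2)$ once additivity is established, but with lower-order losses in the constant.
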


The sizes are the distinct integers $4^{b-i}\ge4$, and $Mq_i\le1/2$.
The means $\mu_i=2^{-i}$ decrease rapidly with the type, while $p_iq_i^2=M^{-2}$ does not depend on the type.
Every policy has $P=\sum_iM2^{-i}=2M-2$.

\begin{proof}[Proof of the lower bound]
Fix a deterministic list; for a random list, the bound follows by averaging over the coins.
For $0\le k<N$, let $F_k$ be the start time of the $(k+1)$-st job of the list, let $k_i$ be the number of type-$i$ jobs among the first $k$, and let $B_i\sim\mathrm{Bin}(k_i,q_i)$ be the number of positive ones among them.
At time $F_k$ all these $k$ jobs have started and a machine is free, so at most one of them is still running.
So at least $(B_i-1)_+$ positive jobs of type $i$ have completed for each $i$, and since two machines process at most $2F_k$ units of work by time $F_k$,
\[
  2F_k\;\ge\;\sum_{i=0}^{b-1}p_i(B_i-1)_+ .
\]
For $k_i\ge2$, keeping only the event that exactly two jobs are positive gives
\[
  \E\bigl[(B_i-1)_+\bigr]\;\ge\;\binom{k_i}{2}q_i^2(1-q_i)^{k_i-2}\;\ge\;\frac14k_i(k_i-1)q_i^2,
\]
because $(1-q_i)^{k_i-2}\ge1-Mq_i\ge1/2$.
The bound also holds for $k_i\le1$, as both sides are then zero.
Using $p_iq_i^2=M^{-2}$ and $\sum_i(k_i-k/b)^2\ge0$,
\[
  \E[F_k]\;\ge\;\frac1{8M^2}\sum_ik_i(k_i-1)\;\ge\;\frac1{8M^2}\Bigl(\frac{k^2}{b}-k\Bigr).
\]
For $N/2\le k<N$ we have $k\ge bM/2\ge2b$, so $k^2/b-k\ge k^2/(2b)\ge bM^2/8$ and $\E[F_k]\ge b/64$.
There are $N/2$ such values of~$k$, so the cost is at least $P+(N/2)(b/64)\ge Mb^2/128$.
\end{proof}

\begin{proof}[Proof of the upper bound]
Consider the following policy.
On machine~$1$, start jobs in increasing order of type until the first positive outcome; all earlier jobs are zero and complete at time~$0$.
If all jobs are zero, every job completes at time~$0$.
Otherwise, let $i$ be the type of the first positive job.
Leave that job on machine~$1$, and process all remaining jobs one after another on machine~$2$, starting at time~$0$, in decreasing order of type.

Let $E_i$ be the event that the first positive job has type~$i$.
It requires a positive job among the $M$ jobs of type~$i$, so $\Pr[E_i]\le Mq_i=2^i/M$.
Condition on the history up to the first positive job.
The types below~$i$ are exhausted, and the remaining jobs keep their original distributions.
On a single machine, the processing time of a job is added to the start time of each later job.
In decreasing order of type, a job of type $j\ge i$ has at most $M(j-i+1)$ later jobs, and there are at most $M$ jobs of type~$j$, each of mean~$2^{-j}$.
So the conditional expected sum of start times is at most
\[
  M^2\sum_{j\ge i}(j-i+1)2^{-j}\;\le\;M^22^{-i}\sum_{t\ge0}(t+1)2^{-t}\;=\;4M^22^{-i}.
\]
Averaging over the disjoint events $E_i$, the expected sum of start times is at most $\sum_i(2^i/M)\cdot4M^22^{-i}=4Mb$.
So the cost is at most $P+4Mb\le6Mb$.
\end{proof}

The ratio follows from $\OPT\le6Mb$ and $\log_2N=b+\log_2b\le2b$.
The adaptive policy above changes its order after the first positive outcome.

\section{Open problems}\label{sec:open}

The main open question is whether Bernoulli jobs admit a polynomial-time approximation scheme, or at least a polynomial-time constant-factor approximation, when $m$ is part of the input.
After the size reduction, the dynamic program of Section~\ref{sec:dp} is quasi-polynomial for two reasons.
It records the number of remaining jobs in each of the $O(\varepsilon^{-1}\log(N/\varepsilon))$ size classes, and it records the release times as a multiset over $O(\varepsilon^{-2}\log(N/\varepsilon))$ candidate values.
A polynomial-time scheme needs a more compact description of both, and by Theorem~\ref{thm:list} it must be adaptive.

A natural alternative is to round a relaxation over start-time distributions.
For Bernoulli jobs, such a relaxation can include, for every deadline~$u$, the expected-rank constraints of the laminar matroid that allows at most $m(\lfloor u/h\rfloor+1)$ positive jobs of size at least $h$ to start by time~$u$.
Constraints that act on one deadline at a time are not enough.
On one machine, take jobs with $p_i=2^{N-i+1}$ and $q_i=1/p_i$ for $i=1,\dots,N$, so that $\OPT=N(N+1)/2$.
Start job $i$ at time~$0$ with the probability that no larger job is positive, and at time $p_i$ otherwise.
This satisfies every such rank constraint, because the first positive job of size larger than $u$ together with all positive smaller jobs forms an independent set.
It also keeps the expected processing in every interval below the length of the interval, because $\sum_iq_i<1$.
Its objective is below~$2N$.
On one machine, the inequality $\sum_j\mu_j\E[S_j]\ge\sum_{i<j}\mu_i\mu_j$ of M{\"o}hring, Schulz, and Uetz~\cite{MSU99} closes this gap, but its analog for $m$ machines contains a term proportional to~$\Delta$.
Whether a relaxation that combines both kinds of constraints has a constant integrality gap for Bernoulli jobs is open.

Two intermediate questions seem natural.
The first is a polynomial-time approximation scheme for every fixed number of machines, that is, in time $N^{f(m,\varepsilon)}\cdot\mathrm{poly}(\mathcal L)$.
The second is a polynomial-time approximation whose ratio is $o(m)$ and independent of~$N$.

\section*{AI usage disclosure}
\addcontentsline{toc}{section}{AI usage disclosure}

AI agents were used for exploratory discussions during the development of the proofs and to assist with drafting and revising the manuscript.
The author takes full responsibility for the paper's contents and correctness.

\bibliographystyle{alpha}
\bibliography{refs}

\end{document}